\pdfoutput=1
\documentclass{article}
\usepackage{fullpage}
\usepackage[T1]{fontenc}
\usepackage{mathpazo}
\usepackage{graphicx}
\usepackage{amssymb}
\usepackage{amsmath,amsfonts,amsthm}
\usepackage{mathtools,thmtools}

\usepackage[ruled,linesnumbered]{algorithm2e}
\usepackage{natbib}
\usepackage{tikz}
\usepackage{nicefrac}
\usepackage{wrapstuff}
\usepackage{booktabs}
\usepackage{float}
\usepackage[dvipsnames]{xcolor}
\usepackage{tcolorbox}
\usepackage{subcaption}
\usepackage{enumitem}

\usepackage[colorlinks=true, allcolors=blue]{hyperref}
\usepackage[nameinlink]{cleveref}
\usepackage{crossreftools} \AddToHook{cmd/appendix/before}{\crefalias{section}{appendix}\crefalias{subsection}{appendix}
}

\newtheorem{theorem}{Theorem}[section]
 \newtheorem{lemma}[theorem]{Lemma}
\newtheorem{corollary}[theorem]{Corollary}
\newtheorem{proposition}[theorem]{Proposition}
\theoremstyle{definition}
\newtheorem{definition}[theorem]{Definition}
\newtheorem{example}[theorem]{Example}
\theoremstyle{remark}

\setlist[enumerate]{label=(\arabic*), ref=\arabic*}

\newcommand{\showeprint}[2][]{\href{https://arxiv.org/abs/#2}{arXiv:\nolinkurl{#2}}}

\renewcommand{\ge}{\geqslant}
\renewcommand{\geq}{\geqslant}
\renewcommand{\le}{\leqslant}
\renewcommand{\leq}{\leqslant}
\renewcommand{\epsilon}{\varepsilon}
\DeclarePairedDelimiter{\set}{\lbrace}{\rbrace}

\DeclareMathOperator{\argmin}{\arg\min}
\DeclareMathOperator{\argmax}{\arg\max}
\newcommand{\N}{\mathbb{N}}
\newcommand{\R}{\mathbb{R}}

\newcommand{\myparagraph}[1]{\medskip\noindent\textbf{#1}~}
\newcommand{\proofstep}[1]{\medskip\noindent\emph{#1}~}

\newcommand{\calU}{\mathcal{U}}
\newcommand{\calC}{\mathcal{C}}
\newcommand{\calCup}{\calC_{\ge}}
\newcommand{\calR}{\mathcal{R}}
\newcommand{\calX}{\mathcal{X}}
\newcommand{\calY}{\mathcal{Y}}
\newcommand{\calI}{\mathcal{I}}
\newcommand{\calA}{\mathcal{A}}
\newcommand{\PE}{\textrm{eff}}

\newcommand{\conv}{\operatorname{conv}}

\newcommand{\supp}{\operatorname{supp}}
\newcommand{\NW}{\operatorname{NW}}
\newcommand{\Nash}{\operatorname{Nash}}
\newcommand{\Prod}{\operatorname{NP}}
\makeatletter
\newcommand{\tableofcontentswithouttitle}{\@starttoc{toc}}
\makeatother

\title{Proportional Fairness for Harmful Decisions}
\author{
    Benjamin Cookson\textsuperscript{1} \and 
    Soroush Ebadian\textsuperscript{1} \and 
    Dominik Peters\textsuperscript{2} \and 
    Nisarg Shah\textsuperscript{1}
}
\date{
    \small
    \textsuperscript{1}University of Toronto \\
    \textsuperscript{2}CNRS, LAMSADE, Université Paris Dauphine - PSL \\
    \{bcookson,soroush,nisarg\}@cs.toronto.edu, dominik.peters@lamsade.dauphine.fr
    \vspace{-5pt}
}

\begin{document}

\maketitle

\begin{abstract}
We study allocation of (divisible) public bads, where agents incur costs for alternatives and the goal is to pick a lottery over the alternatives. We show that the traditional definitions of the core, a central criterion of proportional representation for allocation of public goods, private goods, and private bads (chores), do not make sense for allocation of public bads. 

We introduce two formalizations of the core tailored to public bads. Under a structural condition which subsumes allocation of private bads, we show that zero-respecting Lindahl equilibria satisfy both formalizations, exhibit additional fairness guarantees, and strictly generalize competitive equilibria from equal incomes (CEEI) for allocation of private bads. Without this structural condition, we show that Lindahl equilibria exhibit undesirable behaviors, prove sharp impossibility results separating public bads from public goods, but show that a rule using a reduction to public goods recovers one of our formalizations of the core. Our results lay the groundwork for studying fair allocation of public bads, an overlooked yet fundamental problem, and highlight several structural and algorithmic directions that remain open.
\end{abstract}

\vspace{16pt}
\hrule 
\setcounter{tocdepth}{2} \tableofcontentswithouttitle
\medskip
\hrule 

\section{Introduction}\label{sec:intro}
Many collective decisions concern not the distribution of benefits, but the allocation of burdens. Communities must decide where to put landfills, power plants, or highways, which imposes noise, pollution, or disruption on nearby residents. Regulators must determine how compliance costs for new environmental targets are distributed across sectors. During periods of scarcity, utilities must choose which neighborhoods experience load shedding and at what times. In each case, society faces a \emph{public} decision: a single outcome must be chosen, and that outcome imposes (potentially different) costs on each affected party.

Despite the ubiquity of such problems, the theory of \emph{public bads allocation} remains strikingly underdeveloped compared to its well-studied counterparts of public goods, private goods, and private bads (known as chores). The allocation of \emph{private goods} (divisible resources distributed among agents) and \emph{public goods} (a single outcome chosen for all, where agents derive positive utility) are classical subjects of study in economics and social choice, with well-understood solution concepts including competitive equilibrium from equal incomes (CEEI) / Lindahl equilibrium, proportional fairness, and maximum Nash welfare. These concepts turn out to be equivalent under linear utilities and satisfy compelling fairness properties such as the core. More recently, the allocation of \emph{private bads} has received significant attention, with \citet{BMSY17} providing an elegant characterization of CEEI in terms of critical points of the Nash product on the Pareto frontier. Yet public bads---the fourth and final cell of this two-by-two classification---has been largely neglected.

A natural instinct is to reduce public bads to public goods by simply negating costs to obtain utilities. However, the lessons learned from the transition from \textit{private} goods to \textit{private} bads suggest that moving from utilities to costs fundamentally changes the structure of fair allocation. Indeed, the equivalence between market equilibria, proportional fairness, and maximum Nash welfare---which holds for both public and private goods---breaks down for private bads: CEEI allocations correspond not to Nash welfare maximizers but to \emph{critical points} of the Nash product on the efficient frontier, and multiple equilibria with distinct cost vectors may coexist.

A similar breakdown is to be expected for public bads. But while, for private bads, market equilibria remain well-defined and satisfy the core, the public case introduces yet another layer of difficulty. The issue is that fairness, and in particular the notion of the \emph{core}, becomes hard to define and is conceptually problematic. For private goods and bads, the core asks whether any coalition $S$ could ``go it alone'' and divide its proportional share of the resources among its members to make everyone in $S$ better off. For public goods, this extends naturally: a coalition considers alternative outcomes, with its share of influence scaling with its size. But for public bads, neither of these approaches makes sense, because any outcome chosen by a coalition will impose \emph{externalities} on non-members. There is no obvious way to ``allocate all the bads among coalition members'' for public outcomes. This conceptual gap leaves us without a formal notion of fairness for public bads.

These observations motivate the central questions of our work:

\begin{quote}
    \emph{What is the right definition of Lindahl equilibrium for public bads, and how does it relate to proportional fairness? Can one define a meaningful notion of the core when coalitional deviations inevitably impose externalities on outsiders? What fairness guarantees and axioms are achievable, and what are the fundamental trade-offs among them?}
\end{quote}

\subsection{Our Results \& Techniques}\label{sec:results}

Our main contribution is to provide a theory of public bads allocation, paralleling the established frameworks for public goods and private bads allocation where relevant, while also highlighting where the techniques from those areas break down, and new ideas must be developed.

We begin the paper in \Cref{sec:public-goods} with a systematic review of what is known about the fair allocation of public and private goods, explaining the connections between Lindahl equilibrium, maximum Nash welfare, proportional fairness, and CEEI.

In \Cref{sec:public-bads}, we discuss difficulties that arise when extending the theory of allocating public goods to the allocation of public bads, as well as when generalizing results about the allocation of private bads to public bads. We introduce several fairness notions designed specifically for public bads allocation problems, including two, named \emph{Bounded Externality Core} and \emph{Completion Core}, which aim to capture the spirit of core-based fairness properties that are discussed in the literatures on public goods and private bads.

In \Cref{sec:bads-lindahl}, we provide a formal definition of a Lindahl equilibrium for public bads, building upon a recent proposal of \citet{TZ25}. We show that the set of \emph{zero-respecting} Lindahl allocations (Lindahl allocations where every agent receives a strictly positive cost) have attractive properties: they correspond exactly to the well-studied concept of CEEIs when restricted to instances corresponding to private bads allocation, and they achieve very strong fairness guarantees when restricted to what we call \emph{axes-cutting} instances. However, we show that outside of these special cases, zero-respecting Lindahl equilibria lose much of their desirability, and can exhibit notably unfair behavior. This is a surprising result, since Lindahl equilibria are always fair in the goods world.

In \Cref{sec:impossibilities}, we aim to find fair solutions beyond the special cases in \Cref{sec:bads-lindahl}. We give an axiomatic characterization of zero-respecting Lindahl allocations by generalizing a result of \citet{mariotti2005nash}. This characterization highlights the properties which cause the unfair behavior of Lindahl on general instances. Based on this, we introduce a new rule called \emph{Flipped-MNW} which does achieve favorable fairness properties on general instances, though it fails to behave as nicely as zero-respecting Lindahl allocations on the special cases we studied in \Cref{sec:bads-lindahl}.

Finally, in \Cref{sec:computation}, we address algorithmic considerations. We discuss a method for enumerating all essentially different Lindahl allocations for a given instance in an approximate sense. We also give a (non-convex) optimization program whose KKT points correspond to zero-respecting Lindahl equilibria, and show that in the axes-cutting special case, the greedy Frank--Wolfe algorithm proposed by \citet{chaudhury2024competitive} can be applied to this program to efficiently find a zero-respecting Lindahl equilibrium.

\subsection{Related Work}\label{sec:related}

\myparagraph{Public goods allocation.}
The problem of fairly dividing a budget between several public goods has been studied in several models \citep{BMS2005,aziz2019fair,airiau2023portioning,brandl2021distribution}. \citet{FGM16} connected this problem to the Lindahl equilibrium introduced by \citet{Fol70}, which was further studied by \citet{munagala2022coremultilinear}, \citet{KP25}, and \citet{TZ25}. Indivisible versions of this problem have also received a lot of attention \citep{FMS18}, including as part of the literature on committee elections \citep{lackner2023multi} and participatory budgeting \citep{AS21,rey2023pb}.

\myparagraph{Fair division of private goods and chores.}
The allocation of divisible private goods via competitive equilibrium from equal incomes (CEEI) is classical~\citep{Var74}. For linear utilities, this is equivalent to maximizing Nash welfare~\citep{eisenberg1961aggregation}, which allows applying convex programming and other techniques to efficiently find CEEI allocations \citep{Vazi07a}.
For private bads (chores), \citet{BMSY17} characterize CEEI allocations as critical points of the Nash product on the Pareto frontier. They also discuss axiomatic properties of the CEEI solution. In particular, they establish that CEEI satisfies the core and always yields strictly positive costs---properties that turn out to be challenging to port to public bads.
Several algorithms for computing CEEI for chores have been proposed \citep{branzei2024algorithms,boodaghians2022polynomial,chaudhury2024competitive}.
The allocation of goods and chores has been intensely studied for \textit{indivisible} items. In particular, the interaction of fairness (envy-freeness up to one item) and efficiency (Pareto optimality) has been a major theme: for goods, the maximum Nash welfare rule achieves both properties~\citep{CKMP+19}; the recent breakthrough of \citet{mahara2025existence} shows that the two properties are also compatible for chores.
We refer to the surveys of \citet{AABFLMVW23} and \citet{liu2024mixed} for broader coverage of fair division.

\myparagraph{Proportional fairness in social choice.}
The proportional fairness (PF) criterion we study originates in the work of \citet{Kelly97} in the context of network resource allocation. It has been applied to (probabilistic) social choice by \citet{EKPS22} in a distortion framework and by \citet{BGHJ+23} in an online setting. The concept was also recently applied to facility location and clustering~\citep{CFLM19,MS20,CMS24,KP24}.
The PF criterion implies notions of the core, and the core has been used as an axiomatic formalization of \textit{proportional representation} in many subfields of computational social choice in recent years. Prominent examples include participatory budgeting~\citep{FMS18,aziz2019fair,brandl2022funding,airiau2023portioning} and committee selection~\citep{ABCE+17,lackner2023multi,brill2023proportionality,peters2025core}.

\paragraph{Previous work on public bads selection.}
The work most closely related to ours is that of \citet{mariotti2005nash}. They study the problem of allocating losses in utility space relative to a fixed reference vector by selecting a utility vector from a feasible set, and propose the \emph{Nash rationing solution} for this problem. This solution is closely related to our strictly positive PF allocations. Specifically, our public bads model can be translated into their setting by taking the reference vector to be $0$ and the feasible utility set to be the downward closure of the negations of the feasible cost vectors induced by all lotteries. The key difference between our work and that of \citet{mariotti2005nash} is that we focus on defining fairness notions for public bads and studying the relationships among PF, Lindahl equilibrium, and competitive equilibrium, whereas \citet{mariotti2005nash} focus primarily on an axiomatic analysis and existence of the Nash rationing solution and other loss allocation rules. We leverage several of their results, and discuss similarities in greater detail in \Cref{sec:axes-cutting,sec:impossibilities}.

\section{Review: Allocation of Public (and Private) Goods}\label{sec:public-goods}

\myparagraph{Preliminaries.} For $t \in \N$, define $[t]\triangleq\set{1,2,\ldots,t}$. For a finite set $S$, $\Delta(S)\triangleq \set{x \in \R_{\ge 0}^S:\sum_{a\in S} x(a) =1}$ is the probability simplex over $S$. For a lottery $x \in \Delta(S)$, its support is $\supp(x)=\set{a\in S : x(a)>0}$. Given $x \in \R^N$ and $S \subseteq N$, define $x$ restricted to $S$ as $x_S \triangleq (x_i)_{i \in S}$. We sometimes view a mapping $p : S \to \R$ as a vector $p \in \R^S$. The dot product of vectors $p,q \in \R^S$ is $p \cdot q \triangleq \sum_{s \in S} p(s) \cdot q(s)$. For vectors $p,q \in \R^k$, we say $p \leq q$ (equivalently, $q \geq p$) when $p_i \le q_i$ for all $i \in [k]$, and $p < q$ (equivalently, $q > p$) if $p \leq q$ and $p \neq q$. We write $p \gg q$ when $p_i > q_i$ for all $i \in [k]$.

\myparagraph{Model of public goods.} A public goods instance is given by the tuple $(N,A,v)$, where $N = [n]$ is a set of agents, $A$ is a set of $m$ alternatives, and $v = (v_1,\ldots,v_n)$, where $v_i : A \to \R_{\ge 0}$ is the (von Neumann--Morgenstern) \emph{valuation function} of agent $i \in N$. Assume that for each agent $i \in N$, $v_i(a) > 0$ for at least one alternative $a \in A$ (otherwise the agent always gets zero utility and can be ignored). These valuations induce linear utilities over lotteries: for $x \in \Delta(A)$, define, with slight abuse of notation, $v_i(x) \triangleq \sum_{a\in A} x(a) \cdot v_i(a)$, and $v(x) \triangleq (v_1(x),\ldots,v_n(x))$ to be the utility vector induced by $x$. The (feasible) \emph{utility set} is
\[
\calU\triangleq\set{v(x) : x\in \Delta(A)} = \conv\set{v(a) : a\in A} \subseteq\R_{\ge 0}^N,
\]
where the equality is due to linearity of utilities. $\calU$ is a non-empty, compact, and convex polytope. 

For this model, three well-known rules coincide and yield a particularly appealing solution concept. The first is Lindahl equilibrium, a virtual market equilibrium defined by \citet{Fol70} drawing on ideas of \citet{Lind58}, in which every agent simultaneously finds the same lottery $x$ to be utility-maximizing subject to their personalized prices and a common budget. 

\begin{definition}[Lindahl Equilibrium for Public Goods]
For public goods, given a lottery $x \in \Delta(A)$ and personalized prices $p = (p_i : A \to \R_{\ge 0})_{i \in N}$, we say that $(x,p)$ is a \emph{Lindahl equilibrium} if:
\begin{enumerate}
\item \emph{Spending constraint:} $p_i \cdot x \leq \nicefrac{1}{n}$ for all $i \in N$.
\item \emph{Utility maximization:} For every $i \in N$ and $y \in \R_{\geq 0}^A$ such that $p_i \cdot y \leq \nicefrac{1}{n}$, we have $v_i(y) \leq v_i(x)$.
\item \emph{Profit maximization:} $\sum_{i \in N} p_i(a) \leq 1$ for every $a \in A$, with equality if $x(a) > 0$.
\end{enumerate}
Finally, $x$ is a \emph{Lindahl allocation} if $(x,p)$ is a Lindahl equilibrium for some personalized prices $p$.
\end{definition}

Note that the spending constraint condition is sometimes referred to as affordability.

\citet{Fol70} proves the existence of such an equilibrium under general convex and strictly monotone utility functions. \citet{FGM16} provide an efficient algorithm via convex programming for a class of utility functions they term ``non-satiating''. For our special case of linear utilities, this convex program simply maximizes $\sum_{i \in N} \log v_i(x)$~\cite[Corollary 2.3]{FGM16}, or equivalently, the Nash welfare $\prod_{i \in N} v_i(x)$; its KKT conditions turn out to be precisely the Lindahl equilibrium conditions. 

\begin{definition}[Maximum Nash Welfare (MNW)]
The \emph{Nash welfare} of a lottery $x\in \Delta(A)$ is $\NW(x) \triangleq \prod_{i\in N} v_i(x)$. We say that $x$ is a \emph{maximum Nash welfare} (MNW) lottery if $x \in \argmax_{y \in \Delta(A)} \NW(y)$.
\end{definition}

First-order optimality for this convex program yields the following alternative formalization, dating back to the work of \citet{Kelly97}, which has been explored in social choice~\citep{EKPS22,BGHJ+23} and beyond~\citep{RLKL+22}. For a formal proof of equivalence, see the works of \citet[Proof of Theorem 2]{EFS24} and \citet[Proof of Theorem 4]{KP25}.

\begin{definition}[PF for Goods]\label{def:pf-public-goods}
A lottery $x \in \Delta(A)$ is \emph{proportionally fair} if we have $\frac{1}{n}\sum_{i\in N}\frac{v_i(y)}{v_i(x)}\le 1$ for all $y\in \Delta(A)$, with the convention that $\nicefrac{0}{0}=1$ and $\nicefrac{\alpha}{0} = +\infty$ for any $\alpha > 0$. Due to linear utilities, this is equivalent to having $\frac{1}{n}\sum_{i\in N}\frac{v_i(a)}{v_i(x)}\le 1$ for all $a\in A$ with equality if $x(a) > 0$. 
\end{definition}

\citet{Fol70} proves that Lindahl equilibria satisfy a compelling fairness desideratum.\footnote{Technically, \citet{Fol70} proves that they are in the \emph{weak core} in a more general setup; the fact that they are in the core for our setup can be inferred from Proposition~2 of \citet{KP25}, whose ``cap-sufficiency'' and ``zero-respecting'' conditions automatically hold in our ``uncapped'' setting. See also \citet[Theorem 3]{aziz2019fair}.}

\begin{definition}[The Core for Public Goods]\label{def:core-public-goods}
A lottery $x \in \Delta(A)$ is in the \emph{core} if there exist no $S\subseteq N$, $y\in\Delta(A)$ such that $\frac{|S|}{n} \cdot v_i(y)\ge v_i(x)$ for all $i\in S$, with at least one inequality being strict.
\end{definition}

This is a strengthening of Pareto optimality, which imposes the above condition only for $S = N$. 

\begin{definition}[Pareto Optimality]
A lottery $x \in \Delta(A)$ is \emph{Pareto optimal} if there exists no $y\in\Delta(A)$ such that $v_i(y)\ge v_i(x)$ for all $i\in N$, with at least one inequality being strict.
\end{definition}

We will not reproduce the Pareto optimality definition for the other settings---private goods, public bads, and private bads---as it is essentially identical, with the only difference being a reversed inequality between costs for the case of bads. 

The following folklore result follows from the discussion above; we give a proof in \Cref{app:public-goods} for completeness. 

\begin{restatable}{theorem}{publicGoods}\label{thm:public-goods}
    For any public goods instance with linear utilities, the following are equivalent for a lottery $x \in \Delta(A)$.
    \begin{enumerate}
        \item $x$ is a Lindahl allocation. 
        \item $x$ is proportionally fair (PF).
        \item $x$ is a maximum Nash welfare (MNW) lottery.
    \end{enumerate}
    Further, all such lotteries $x$ lie in the core for public goods (and hence, are Pareto optimal) and induce the same utility vector $v(x)$, which gives a strictly positive utility $v_i(x) > 0$ to every agent $i \in N$.
\end{restatable}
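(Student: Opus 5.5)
We prove the cycle of equivalences, then derive the core property and uniqueness of the utility vector from PF. The first step is to show that every MNW lottery gives strictly positive utility to all agents. Each agent $i$ has some $a_i$ with $v_i(a_i)>0$, so the uniform lottery over $\{a_i : i\in N\}$ gives every agent positive utility. Hence $\max \NW>0$, and any maximizer $x$ satisfies $v_i(x)>0$ for all $i$. In particular, $\sum_i \log v_i(x)$ is a well-defined concave function near $x$.

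For (3)$\Leftrightarrow$(2), we work with $\sum_i \log v_i$. This function is concave, and $\Delta(A)$ is convex. So $x$ with $v(x)\gg 0$ is a maximizer iff the directional derivative toward every $y\in\Delta(A)$ is nonpositive, i.e., $\sum_i \frac{v_i(y)-v_i(x)}{v_i(x)}\le 0$. This rearranges to $\frac1n\sum_i \frac{v_i(y)}{v_i(x)}\le 1$, which is PF. To close the equivalence, we must also show that a PF lottery has $v(x)\gg 0$. If $v_i(x)=0$, take $y=a_i$; then the term $\alpha/0=+\infty$ violates PF. The vertex form with equality on the support follows from linearity: averaging the inequality over $x$ gives exactly $1$. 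This forces equality at every $a$ with $x(a)>0$.

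For (2)$\Leftrightarrow$(1), given PF $x$, we set prices $p_i(a)=\frac{v_i(a)}{n\,v_i(x)}$ and check the three conditions. Profit maximization is the vertex form of PF. The spending constraint holds with equality, since $p_i\cdot x = \frac{1}{n}$. For utility maximization, take any $y\ge 0$ with $p_i\cdot y\le 1/n$. Then $v_i(y)=n\,v_i(x)\,p_i\cdot y\le v_i(x)$. Conversely, let $(x,p)$ be a Lindahl equilibrium. Utility maximization says agent $i$'s bang-per-buck $v_i(a)/p_i(a)$ is at most $n v_i(x)$ for every $a$. Here we use that buying $\frac{1}{n p_i(a)}$ units of $a$ is affordable, and that $p_i(a)=0$ with $v_i(a)>0$ would give unbounded utility. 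Hence $v_i(a)\le n v_i(x) p_i(a)$. Positivity of $v_i(x)$ holds because some $a_i$ has $v_i(a_i)>0$ and is affordable in small positive quantity. Dividing by $v_i(x)$ and summing over $i$ gives $\frac1n\sum_i \frac{v_i(a)}{v_i(x)}\le \sum_i p_i(a)\le 1$, which is PF. I expect this converse, specifically the care with zero prices and the positivity of $v_i(x)$, to be the main technical obstacle; the rest is routine.

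Uniqueness of $v(x)$ follows from strict concavity of $\log$ in the utility coordinates. Suppose two maximizers had $v(x)\neq v(x')$. Their midpoint is feasible by convexity of $\calU$ and would have strictly larger $\sum\log v_i$, a contradiction. For the core, suppose $S$ and $y$ satisfy $\frac{|S|}{n}v_i(y)\ge v_i(x)$ for all $i\in S$, with some inequality strict. Then $\sum_{i\in S}\frac{v_i(y)}{v_i(x)}>|S|\cdot\frac{n}{|S|}=n$, so $\frac1n\sum_{i\in N}\frac{v_i(y)}{v_i(x)}>1$, contradicting PF. Taking $S=N$ in the core property gives Pareto optimality.
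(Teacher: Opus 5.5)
Your proof is correct and is essentially the paper's argument. It uses the same prices $p_i(a)=\frac{v_i(a)}{n\,v_i(x)}$, the first-order optimality condition for the concave function $\sum_{i}\log v_i$, the summation argument for the core, and strict concavity of $\log$ for uniqueness of $v(x)$. The differences are cosmetic. You route everything through PF as two biconditionals rather than the paper's cycle Lindahl $\Rightarrow$ PF $\Rightarrow$ MNW $\Rightarrow$ Lindahl. For Lindahl $\Rightarrow$ PF you compare $x$ directly with the affordable bundle of $\frac{1}{n p_i(a)}$ units of $a$, instead of invoking KKT multipliers for the agent's problem and budget exhaustion; this is slightly more elementary and does not even use the spending constraint.
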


\subsection{Relation to Private Goods}\label{sec:private-goods}

Allocation of public goods strictly generalizes allocation of divisible private goods. A private goods instance is given by the tuple $(N,G,v)$, where $N = [n]$ is a set of agents, $G$ is a set of divisible goods, and $v = (v_1,\dots,v_n)$, where $v_i : G \rightarrow \R_{\geq 0}$ and $v_i(g)$ is the value of agent $i$ for good $g$. These values induce linear utilities, given by $v_i(z) \triangleq \sum_{g \in G} v_{i}(g) \cdot z(g)$ for every (fractional) bundle $z \in [0,1]^G$. We assume that $v_{i}(g) > 0$ for at least one $g \in G$, otherwise the agent can be safely ignored with no goods allocated to them. Similarly, we assume that for each $g \in G$, $v_i(g) > 0$ for at least one $i \in N$.

The goal is to find a fractional allocation $x = (x_1,\dots,x_n)$, with $x_i: G \rightarrow [0,1]$ and $x_{i}(g)$ denoting the fraction of good $g \in G$ allocated to agent $i \in N$, that is feasible: $\sum_{i \in N} x_{i}(g) = 1$ for all goods $g \in G$. The utility of agent $i$ under this allocation is $v_i(x_i)$.

\myparagraph{Reduction $\calR$ from private to public goods.} Any private goods instance $(N,G,v)$ can be reduced to a public goods instance $(N,A,v) \triangleq \calR(N,G,v)$ with the set of alternatives $A$ consisting of all $n^m$ \emph{integral allocations} $x$ satisfying $x_{i}(g) \in \set{0,1}$ for all $i \in N$ and $g \in G$; linear utilities for lotteries over integral allocations precisely map to linear utilities for fractional allocations. Specifically, for any allocation $x$ over a private goods instance $(N,G,v)$, we say that a public allocation $y$ over $\calR(N,G,v)$ is the public allocation \emph{induced} by $x$ if for each $i \in N$ and $g \in G$, we have that $x_i(g) = \sum_{z : y(z) > 0}y(z) \cdot z_i(g)$. We refer to $(N,A,v)$ as a \emph{private-induced public goods instance}, induced by $(N,G,v)$.

\myparagraph{Solution concepts for private goods.} In economics, the private goods model is known as a Fisher market, which admits the following market equilibrium concept.
\begin{definition}[CEEI for Private Goods]
    For private goods, we say that the pair $(x,p)$ of a fractional allocation $x$ and a price vector $p \in \R_{> 0}^G$ is a \emph{competitive equilibrium from equal incomes} (CEEI) if:
    \begin{enumerate}
    \item \emph{Maximum bang-per-buck:}
    For all $i \in N$ and $g \in G$, $x_{i}(g) > 0 \Rightarrow g \in \argmax_{g' \in G} \frac{v_{i}(g')}{p(g')}$.
    \item \emph{Budget exhaustion:} $p \cdot x_i = \nicefrac{1}{n}$ for all $i\in N$.
    \end{enumerate}
    We say that $x$ is a CEEI allocation if $(x,p)$ is a CEEI for some $p$.
\end{definition}

In a private goods instance $(N,G,v)$, the set of CEEI allocations (i.e., allocations $x$ that form a CEEI $(x,p)$ with some price vector $p$) coincides with the sets of maximum Nash welfare allocations~\citep{eisenberg1961aggregation} and proportionally fair allocations~\cite[Volume 2, Chapter 14]{AIHS81}. For the induced public goods instance $\calR(N,G,v)$, these maximum Nash welfare allocations coincide with Lindahl allocations (by \Cref{thm:public-goods}). Hence, Lindahl equilibria generalize CEEI for private goods to public goods.

\citet{Var74} shows that CEEI allocations satisfy the core for private goods.\footnote{\citet{Var74} shows that CEEI allocations satisfy the \emph{weak} core in a model allowing more general utility functions. A folklore proof that they satisfy the core under linear utilities is given by \citet[Appendix B]{FSV20}.}

\begin{definition}[The Core for Private Goods]\label{def:core-private-goods}
An allocation $x$ of private goods is in the \emph{core} if there exist no subset of agents $S\subseteq N$ and allocation $y$ of the goods to the agents in $S$ (i.e., $y_{j}(g) = 0$ for all $j \in N\setminus S$ and $g \in G$) such that $\frac{|S|}{n} \cdot v_i(y) \ge v_i(x)$ for all $i\in S$, and at least one inequality is strict.
\end{definition}

It is easy to see that the core for a private goods instance $(N,G,v)$ coincides with the core for its induced public goods instance $\calR(N,G,v)$: the core definition for $\calR(N,G,v)$ allows a coalition $S$ to pick any lottery over alternatives (integral allocations), but any violation can also be produced by a lottery over alternatives in which all the private goods are allocated to agents in $S$, as in the core definition for $(N,G,v)$.
\begin{theorem}\label{thm:private-goods}
    For any private goods instance with linear utilities, the following are equivalent for a fractional allocation $x$.
    \begin{enumerate}
        \item $x$ is a CEEI allocation.
        \item $x$ is proportionally fair (PF).
        \item $x$ is a maximum Nash welfare (MNW) allocation.
        \item $x$ is a Lindahl allocation in the induced public goods instance.
    \end{enumerate}
    Further, all such allocations $x$ lie in the core for private goods (and hence, are Pareto optimal) and induce the same utility vector, which gives a strictly positive utility $v_i(x_i) > 0$ to every agent $i \in N$. 
\end{theorem}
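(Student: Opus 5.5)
The plan is to route everything through the reduction $\calR$ and \Cref{thm:public-goods}, rather than redoing the Eisenberg--Gale analysis. The key observation is that the map from lotteries $y\in\Delta(A)$ over integral allocations to fractional allocations is surjective and utility-preserving. Each $y$ induces the fractional allocation $x_i(g)=\sum_z y(z)z_i(g)$, and conversely every feasible fractional allocation is induced by some lottery: for example, take the product distribution that assigns each good $g$ independently to agent $i$ with probability $x_i(g)$. By linearity, $v_i(y)=v_i(x_i)$. Hence the set of utility vectors achievable by fractional allocations equals the utility set $\calU$ of $\calR(N,G,v)$. This immediately transfers every utility-vector-based notion between the two instances.

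\textbf{Steps.} First, MNW and PF for private goods are defined purely through utility vectors: maximize $\prod_i v_i(x_i)$, and require $\frac1n\sum_i v_i(z_i)/v_i(x_i)\le 1$ for all feasible $z$. So $x$ is MNW (resp. PF) for $(N,G,v)$ iff any (equivalently, every) inducing lottery $y$ is MNW (resp. PF) in $\calR(N,G,v)$. I will read (4) as ``some lottery inducing $x$ is a Lindahl allocation.'' By \Cref{thm:public-goods}, Lindahl $=$ PF $=$ MNW for lotteries, and these depend only on $v(y)$, so (2), (3) and (4) are equivalent. The same theorem gives uniqueness and strict positivity of the utility vector. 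For the core, use the remark preceding the theorem that the private core of $x$ coincides with the public core of the induced lottery; \Cref{thm:public-goods} then gives the core property, and Pareto optimality follows by taking $S=N$.

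Second, I must connect CEEI to these notions. I will invoke the classical equivalence of CEEI and MNW \citep{eisenberg1961aggregation}, but sketch it directly via KKT for $\max\sum_i\log v_i(x_i)$ subject to $\sum_i x_i(g)=1$ and $x\ge 0$. The multipliers $p(g)$ serve as prices. Stationarity gives $v_i(g)/v_i(x_i)\le p(g)$, with equality when $x_i(g)>0$; this is maximum bang-per-buck with bang-per-buck equal to $1/v_i(x_i)$. Multiplying by $x_i(g)$ and summing over $g$ gives $p\cdot x_i=1$; rescaling prices by $\frac1n$ yields budget exhaustion. Prices are positive because every good is valued by someone. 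Conversely, a CEEI $(x,p)$ satisfies these KKT conditions after rescaling: the max bang-per-buck ratio $\alpha_i$ satisfies $v_i(x_i)=\alpha_i p\cdot x_i=\alpha_i/n$. The program is concave, so KKT points are optimal, giving CEEI $\iff$ MNW.

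\textbf{Main obstacle.} The main subtlety is the converse direction of the CEEI/KKT correspondence when some agent might get zero utility, and checking that positivity of $p$ does not fail. The positivity of $v_i(x_i)$ is handled because budget exhaustion with positive prices forces $x_i\neq 0$ on goods of maximal bang-per-buck, which must have $v_i>0$ since each agent values some good. A secondary subtlety is the well-definedness of (4), which is resolved by the utility-only dependence noted above.
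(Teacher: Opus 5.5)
Your proposal is correct and follows essentially the paper's own route: the paper likewise takes CEEI $=$ MNW from Eisenberg--Gale (which you sketch via KKT), obtains the Lindahl characterization from \Cref{thm:public-goods} through the reduction $\calR$, and notes that the private-goods core coincides with the core of the induced public goods instance. The only minor difference is that you derive PF and core membership yourself through the reduction, using the surjective, utility-preserving map from lotteries to fractional allocations, whereas the paper cites \citet{AIHS81} and \citet{Var74} for those facts.
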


\section{Allocation of Public Bads}\label{sec:public-bads}
Let us now introduce a model of public bads that mirrors the model of public goods from \Cref{sec:public-goods}. A public bads instance is given by the tuple $(N,A,c)$, where $N = [n]$ is a set of agents, $A$ is a set of $m$ alternatives, and $c = (c_1,\ldots,c_n)$ is a list of \emph{cost functions} for each agent, where $c_i : A \to \R_{\ge 0}$ for each agent $i$. For $a \in A$, write $c(a) = (c_1(a),\ldots,c_n(a))$. The cost functions induce linear costs (or disutilities) over lotteries: for $x \in \Delta(A)$, define, with slight abuse of notation, the cost to agent $i$ under $x$ as $c_i(x) \triangleq \sum_{a\in A} x(a) \cdot c_i(a)$, and the cost vector induced by $x$ as $c(x) \triangleq (c_1(x),\ldots,c_n(x))$. The (feasible) \emph{cost set} is
\[
\calC\triangleq\set{c(x) : x\in \Delta(A)} = \conv\set{c(a) : a\in A} \subseteq\R_{\ge 0}^N,
\]
where the equality is due to linearity of costs. Like for goods, $\calC$ is a non-empty compact convex polytope. We make two assumptions on the agents' cost functions. First, we assume that $0 \not\in \calC$: every alternative incurs a strictly positive cost for some agent. Secondly, we assume that for every agent $i \in N$, there exists some alternative $a \in A$ such that $c_i(a) < \max_{a' \in A}c_i(a')$: no agent is indifferent between all bads (as such an agent would have no stake in what allocation is chosen).

\subsection{Review: Allocation of Private Bads}

Our starting point is the allocation of \emph{private bads} (or \emph{chores}). The model mirrors that of private goods allocation, but we will use $B$ to denote the set of private bads (instead of $G$ for the set of private goods) and $c_{i}(b)$ to denote the cost of agent $i$ for bad $b$. A private bads instance is then given by the tuple $(N,B,c)$. We will also assume that $c_{i}(b) > 0$ for all agents $i \in N$ and bads $b \in B$; if any bad has zero cost to at least one agent, it can be allocated to any such agent and safely ignored.

\citet{BMSY17} characterize CEEI allocations in the Fisher market with private bads. To understand this, we need additional concepts.

\begin{definition}[CEEI for Private Bads]
    For private bads, we say that the pair $(x,p)$ of a fractional allocation $x$ and a price vector $p \in \R_{> 0}^B$ is a \emph{competitive equilibrium from equal incomes} (CEEI) if:
    \begin{enumerate}
    \item \emph{Minimum pain-per-buck:}
    For all $i \in N$ and $b \in B$, $x_{i}(b) > 0 \Rightarrow b \in \argmin_{b' \in B} \frac{c_{i}(b')}{p(b')}$.
    \item \emph{Budget exhaustion:} $p \cdot x_i = \nicefrac{1}{n}$ for all $i\in N$.
    \end{enumerate}
    We say that $x$ is a CEEI allocation if $(x,p)$ is a CEEI for some $p$.
\end{definition}

Hidden within their proofs, they show that all CEEI allocations of private bads satisfy a condition mirroring proportional fairness for goods. This definition applies to both private and public bads.

\begin{definition}[PF for Bads]
\label{def:pf-bads}
An allocation $x$ of bads is called \emph{proportionally fair} (PF) if we have $\frac{1}{n} \sum_{i \in N} \frac{c_i(y)}{c_i(x)} \ge 1$ for all allocations $y \in \Delta(A)$, with the convention that $\nicefrac{0}{0}=1$ and $\nicefrac{\alpha}{0} = +\infty$ for any $\alpha > 0$. Note that the direction of inequality is reversed compared to the case of goods. 
 Due to linear costs, PF is equivalent to having $\frac{1}{n}\sum_{i\in N}\frac{c_i(a)}{c_i(x)}\ge 1$ for all $a\in A$ with equality if $x(a) > 0$.\footnote{To see the equivalence, for any lottery $y$, we claim $\frac{1}{n}\sum_{i \in N}\frac{c_i(y)}{c_i(x)}
	= \sum_{a' \in \supp(y)}y(a')(\frac{1}{n}\sum_{i \in N}\frac{c_i(a')}{c_i(x)})$.
 Indeed, this is clear for all $i$ with $c_i(x) > 0$, but under our ratio convention, equality also holds when some $i$ has $c_i(x)=0$: if $c_i(y)=0$, every alternative in $\supp(y)$ has zero cost for $i$ and both sides contribute $1$, while if $c_i(y)>0$, both sides are $+\infty$. The claim and the fact that $\sum_{a' \in \supp(y)}y(a')=1$ establish sufficiency. Necessity follows because every deterministic alternative is itself a lottery.}
\end{definition}

They also note that CEEI allocations satisfy the core for private bads.
\begin{definition}[The Core for Private Bads]\label{def:core-private-bads}
An allocation $x$ of private bads is in the \emph{core} if there exist no subset of agents $S\subseteq N$ and allocation $y$ of the bads to the agents in $S$ (i.e., $y_{j}(b) = 0$ for all $j \in N\setminus S$ and $b \in B$) such that $\frac{|S|}{n} \cdot c_i(y) \le c_i(x)$ for all $i\in S$, and at least one inequality is strict.
\end{definition}

The core still implies Pareto optimality. Like in the case of goods, the existence of CEEI is known under very general conditions~\citep{SS75}, but it is no longer obtained by maximizing Nash product of costs---this would be a terrible allocation, giving bads to agents who dislike them the most---or minimizing it, which would simply give zero cost to one of the agents. Instead, \citet{BMSY17} characterize CEEI for private bads via critical points of the Nash product of costs, given by $\Prod(z) = \prod_{i \in N} z_i$, that lie on the Pareto frontier. 

\begin{definition}[Critical Points]\label{def:critical-points}
Given a smooth function $f : \calX \to \R$ and $\calY \subseteq \calX$, the critical points of $f$ in $\calY$ are given by 
\[
\Gamma(f,\calY) = \set{x \in \calY : \left(\forall y \in \calY, \nabla f(x) \cdot x \ge \nabla f(x) \cdot y \right) \lor \left(\forall y \in \calY, \nabla f(x) \cdot x \le \nabla f(x) \cdot y \right)}.
\]
Critical points include local minima, local maxima, and saddle points.
\end{definition}

\begin{definition}[Pareto Frontier]
Given a feasible cost set $\calC$, its \emph{Pareto frontier} is $\calC^{\PE} \triangleq \set{z \in \calC : \nexists z' \in \calC \text{ with } z' < z}$. These are the cost vectors induced by Pareto optimal allocations.
\end{definition}

\begin{theorem}[\citealp{BMSY17}]\label{thm:private-bads}
    For a private bads instance with linear costs, the following hold.
    \begin{enumerate}
        \item An allocation $x$ is a CEEI allocation if and only if its induced cost vector is strictly positive ($c_i(x) > 0$ for all $i \in N$), is a critical point of the Nash product $\Prod$ in $\calC$ (i.e., $c(x) \in \Gamma(\Prod,\calC)$) and lies on the Pareto frontier (i.e., $c(x) \in \calC^{\PE}$). 
        \item All CEEI allocations $x$ are proportionally fair and lie in the core (hence, they are Pareto optimal).\footnote{Technically, \citet{BMSY17} only proves that CEEI allocations are in a weak version of the core, where a group of agents can deviate if they all \emph{strictly} improve. This is due to the fact that \citet{BMSY17} studies a more general model of mixed manna (items that are goods to some agents and chores to others), where our stronger core does not always exist. For completeness, we prove that all CEEI allocations for private bads are in the core in \Cref{app:public-bads}.}
        \item There may be multiple CEEI allocations that yield different induced cost vectors.
        \item One CEEI allocation is obtained by maximizing the Nash product on the Pareto frontier, i.e., the allocations in $\argmax_{x : c(x) \in \calC^{\PE}} \prod_{i \in N} c_i(x_i)$ are CEEI. 
    \end{enumerate} 
\end{theorem}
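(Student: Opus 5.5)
We prove the four parts of \Cref{thm:private-bads} in order, working directly with the KKT-style conditions. Throughout, $\calC\subseteq\R_{\ge0}^N$ is the set of cost vectors $(c_i(x_i))_{i\in N}$ over fractional allocations $x$ of the bads.

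\textbf{Part (1), CEEI implies the three conditions.} Let $(x,p)$ be a CEEI and let $\alpha_i=\min_b c_i(b)/p(b)$ be agent $i$'s minimum pain-per-buck. Every bad $i$ receives has ratio $\alpha_i$, so $c_i(x_i)=\alpha_i\, p\cdot x_i=\alpha_i/n>0$, giving strict positivity. For any allocation $y$,
\[
\sum_i \frac{c_i(y_i)}{c_i(x_i)}=\sum_i \frac{n\,c_i(y_i)}{\alpha_i}\ge n\sum_i p\cdot y_i=n\sum_b p(b),
\]
with equality at $y=x$, since $\sum_b p(b)=\sum_i p\cdot x_i=1$. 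Because $\nabla\Prod(z)_i=\Prod(z)/z_i$, this says $c(x)$ minimizes the linear function $\nabla\Prod(c(x))\cdot z$ over $\calC$, so $c(x)\in\Gamma(\Prod,\calC)$. The same inequality is exactly PF in the sense of \Cref{def:pf-bads}. It also gives Pareto optimality: if $c(y)<c(x)$, then $\sum_i c_i(y_i)/c_i(x_i)<n$, a contradiction. Hence $c(x)\in\calC^{\PE}$.

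\textbf{Part (1), converse.} Suppose $z=c(x)\gg0$, $z\in\calC^{\PE}$, and $z$ is critical. First rule out the ``maximum'' branch of the critical-point definition. Since all $c_i(b)>0$, giving a bad to one agent lowers everyone else's cost, so the maximum branch cannot hold at a Pareto-optimal point unless $z$ is a vertex extreme in every direction. I expect this step to be the delicate case analysis.

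So we may take $z$ to be a minimizer of $w\cdot z'$ over $\calC$ with $w_i=1/z_i$. Because the cost set is a product over bads, this linear problem decomposes bad by bad. Each bad $b$ must go only to agents minimizing $c_i(b)/z_i$. Define $p(b)=\min_i c_i(b)/(n z_i)$. Then agent $i$'s pain-per-buck satisfies $c_i(b)/p(b)\ge n z_i$ for every $b$, with equality on the bads $i$ holds, which is the minimum pain-per-buck condition. Budget exhaustion follows from
\[
p\cdot x_i=\sum_b x_i(b)\,\frac{c_i(b)}{n z_i}=\frac{1}{n}.
\]
Prices are strictly positive because all costs and all $z_i$ are positive.

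\textbf{Part (2), core.} PF was established in Part (1). For the core, suppose a coalition $S$ and an allocation $y$ of all bads to $S$ satisfy $\tfrac{|S|}{n}c_i(y_i)\le c_i(x_i)$ for all $i\in S$, with at least one strict. Using $c_i(y_i)\ge \alpha_i\,p\cdot y_i$,
\[
p\cdot y_i\le \frac{c_i(y_i)}{\alpha_i}\le \frac{n\,c_i(x_i)}{|S|\,\alpha_i}=\frac{1}{|S|},
\]
with strict inequality for at least one agent. Summing over $S$ gives $\sum_b p(b)<1$, contradicting $\sum_b p(b)=1$.

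\textbf{Part (3).} It suffices to exhibit an explicit two-agent, two-bad instance with several critical points. I will choose asymmetric costs so that both the maximum-product point on the frontier and a nearby point satisfy the criticality condition, and verify each directly with the construction from Part (1).

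\textbf{Part (4).} Maximize $\Prod$ over the compact set $\calC^{\PE}$. The maximizer has $\Prod>0$, because interior Pareto points with all costs positive exist (all $c_i(b)>0$ and $n\ge2$), so the maximizer is strictly positive. Maximality along the frontier polytope's faces gives the stationarity condition. The main obstacle is to show it yields the minimum-branch inequality globally, i.e., on all of $\calC$ and not just on the frontier. The plan is to use that any $z'\in\calC$ dominates some frontier point, together with the positivity of the weights $w_i=1/z_i$, which extends the inequality from $\calC^{\PE}$ to $\calC$. Then apply the converse of Part (1).
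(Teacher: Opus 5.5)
\textbf{Main gap: Part (4).} At a maximizer $z^\star$ of $\Prod$ over $\calC^{\PE}$, first-order optimality along an efficient face $F\ni z^\star$ only gives $w\cdot(y-z^\star)\le 0$ for $y\in F$ (with $w_i=1/z^\star_i$). That is the \emph{maximum}-branch inequality, the opposite of what your converse of Part (1) needs. Also, since $\calC^{\PE}$ is not convex, such local information says nothing about frontier points far from $z^\star$. Your domination step only transfers an inequality already known on $\calC^{\PE}$ to $\calC$, so it does not address the real obstacle.

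In fact, every ingredient of your sketch is present in \Cref{ex:private-bads-pf-characterization-fails}: a polytope cost set, $w\gg 0$, every point dominating a frontier point, and a maximizer with positive product. There $\calC^{\PE}$ is the whole segment between $(1,2)$ and $(0,3)$, $\Prod$ is maximized on it at $(1,2)$, and yet only the maximum branch holds. So the argument must use private-bads structure, namely that giving all bads to agent $j$ yields the cost vector $c_j(B)\,e_j$ (axes-cutting).

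Here is one way to use it. Let $q$ be any nonnegative supporting normal of $\calC_{\ge}$ at a boundary point $u\gg 0$. Then $q$ is strictly positive: if $q_j=0$, then $0=q\cdot c_j(B)e_j\ge q\cdot u>0$. Hence any point of $\calC$ below $u$ equals $u$, and $u$ minimizes $q\cdot z$ over $\calC$, so $u\in\calC^{\PE}$. Now set $\lambda^\star=\max_{\calC^{\PE}}\Prod>0$. Then $\{z\in\R_{\ge 0}^N:\Prod(z)\ge\lambda^\star\}\subseteq\calC_{\ge}$: otherwise, scale a violating $z\gg 0$ up to the first point $tz\in\partial\calC_{\ge}$ with $t>1$, which is a frontier point with product $t^n\Prod(z)>\lambda^\star$. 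Finally, as in Step 3 of \Cref{thm:existence-and-maxNash}, the nonnegative supporting normal of $\calC_{\ge}$ at $z^\star$ from \Cref{lem:nonneg-normal} also supports this convex superlevel set at $z^\star$. It is therefore proportional to $(1/z^\star_i)_{i\in N}$, which is the minimum-branch inequality on all of $\calC_{\ge}$ for every maximizer. Your converse of Part (1) then finishes.

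\textbf{Smaller gap: excluding the maximum branch in Part (1).} Your heuristic is false as stated. With a single bad costing $1$ to each of two agents, $z=(\nicefrac12,\nicefrac12)$ is not a vertex, yet it satisfies the maximum branch (and the minimum branch too, since $w\cdot z'$ is constant on $\calC$). The correct claim is that at a Pareto-optimal $z\gg 0$, the maximum branch forces $w_kc_k(b)$ to be independent of $k$ for every bad $b$, so the minimum branch also holds. To see this, suppose $i$ holds $b$ and $w_jc_j(b)<w_ic_i(b)$. Since $z_j>0$, agent $j$ holds some $b'\neq b$ with $w_jc_j(b')\ge w_ic_i(b')$, hence $c_i(b)c_j(b')>c_j(b)c_i(b')$. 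Move $\epsilon$ of $b$ from $i$ to $j$ and $\delta$ of $b'$ from $j$ to $i$, with $c_j(b)/c_j(b')<\delta/\epsilon<c_i(b)/c_i(b')$. This strictly lowers both costs, contradicting Pareto optimality.

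\textbf{Missing example: Part (3).} You never produce the instance. The costs $c_1=(1,2)$, $c_2=(2,1)$ work: the CEEI cost vectors $(1,1)$, $(\nicefrac34,\nicefrac32)$ and $(\nicefrac32,\nicefrac34)$ are supported by prices $(\nicefrac12,\nicefrac12)$, $(\nicefrac23,\nicefrac13)$ and $(\nicefrac13,\nicefrac23)$ respectively.

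\textbf{What is correct.} The forward direction of Part (1) and all of Part (2) are correct. The paper itself just cites \citet{BMSY17} for this theorem and proves only the core claim (\Cref{prop:private-bads-core}), deriving it from PF and strict positivity. Your budget-summing argument is an equivalent price-based version of that proof.
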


The magical equivalence between market equilibria (CEEI for private and Lindahl for public), proportional fairness, and Nash welfare that holds for public and private goods (\Cref{thm:public-goods,thm:private-goods}) turns out to break down for private bads:
\begin{itemize}
    \item CEEI still implies proportional fairness, but the converse is no longer true, see \Cref{ex:private-bads-pf} below. Note that \citet{BMSY17} do not comment on the converse direction.
    \item CEEI is no longer obtained simply by maximizing the Nash welfare; instead, it is tied intricately to the critical points of the Nash product on the efficient frontier. Maximizing the Nash product on the efficient frontier just produces one of possibly exponentially many CEEI allocations.
    \item CEEI allocations no longer all have the same induced cost vector.
\end{itemize}

\begin{example}[PF Does Not Imply CEEI for Bads]\label{ex:private-bads-pf}
    Consider the case of two agents and a single private bad $b$, with both agents having a cost of $1$ for the bad. While the unique CEEI allocation splits the bad equally between the two agents, yielding the cost vector $(\nicefrac{1}{2},\nicefrac{1}{2})$, it is easy to check that the allocations giving the bad entirely to one of the agents, yielding cost vectors $(1,0)$ and $(0,1)$, are also proportionally fair.
    \qed
\end{example}

Despite this broken equivalence, CEEI remains an appealing rule for allocating private bads, providing fairness guarantees via proportional fairness and the core, and the characterization of \citet{BMSY17} uncovers useful structural insights about CEEI. 

\subsection{Difficulties in Extending to Public Bads}

Our goal is to establish a similar story for allocating public bads. 

\begin{quote}
    \emph{Do Lindahl equilibria yield convincingly fair allocations of public bads? Do they satisfy some version of the core and give strictly positive costs to all agents? Does the critical point characterization from private bads still hold? Does maximizing the Nash product on the Pareto frontier still yield an equilibrium?}
\end{quote}

Upon investigation, we find that allocating public bads turns out to be much trickier than the clean story so far may lead one to believe.

\subsubsection{Difficulty in Defining Lindahl Equilibrium} 

First, there is essentially no literature on defining and studying Lindahl equilibria for public bads. The first clear definition of this concept has been proposed very recently by \citet[Section 6.2]{TZ25}. They only showed that their version of Lindahl equilibrium guarantees \textit{weak} Pareto optimality (i.e., that no alternative lottery can \textit{strictly} reduce the cost to every agent simultaneously). As we show below, their version can indeed violate Pareto optimality, even for linear costs. For this reason, we term their concept \emph{weak Lindahl equilibrium}.

\begin{definition}[Weak Lindahl Equilibrium for Public Bads]
For public bads, given a lottery $x \in \Delta(A)$ and personalized prices $p = (p_i : A \to \R_{\ge 0})_{i \in N}$, we say that $(x,p)$ is a \emph{weak Lindahl equilibrium} if:
\begin{enumerate}
\item \emph{Spending constraint:} $p_i \cdot x \geq \nicefrac{1}{n}$ for all $i \in N$.
\item \emph{Cost minimization:} For every $i \in N$ and $y \in \R_{\geq 0}^A$ such that $p_i \cdot y \geq \nicefrac{1}{n}$, we have $c_i(y) \geq c_i(x)$.
\item \emph{Profit maximization:} $\sum_{i \in N} p_i(a) \geq 1$ for every $a \in A$, with equality if $x(a) > 0$.
\end{enumerate}
We say that $x$ is a weak Lindahl allocation if $(x,p)$ is a weak Lindahl equilibrium for some personalized prices $p$.
\end{definition}

\begin{example}[Weak Lindahl Equilibria Can Violate Pareto Optimality]
    Consider the instance with $2$ agents and $3$ public bads given in the table below.

\begin{table}[H]
    \centering
    \begin{tabular}{l ccc p{0.5cm} ccc p{0.5cm} ccc}
    \toprule
    & \multicolumn{3}{c}{\textbf{costs}} && \multicolumn{3}{c}{\textbf{prices}} && \multicolumn{3}{c}{\textbf{pain-per-buck}} \\
    \cmidrule(lr){2-4} \cmidrule(lr){6-8} \cmidrule(lr){10-12}
    & Bad 1 & Bad 2 & Bad 3 && Bad 1 & Bad 2 & Bad 3 && Bad 1 & Bad 2 & Bad 3 \\
    \midrule
    Agent 1 & 0 & 0 & 1 && \nicefrac{3}{4} & \nicefrac{1}{2} & \nicefrac{1}{2} && 0 & 0 & 2 \\
    Agent 2 & 3 & 4 & 4 && \nicefrac{1}{4} & \nicefrac{1}{2} & \nicefrac{1}{2} && 12 & 8 & 8 \\
    \bottomrule
    \end{tabular}
\end{table}

Clearly, the only Pareto optimal lottery is the one that places all weight on Bad 1. However, it can easily be seen that with the prices given, the lottery that places all weight on Bad 2 is a weak Lindahl allocation. This lottery will be weakly Pareto optimal.
\qed
\end{example}

Because the proportional fairness property implies Pareto optimality, this means that not all weak Lindahl equilibria are proportionally fair, so the analogy to CEEI for private bads breaks, since CEEI is always Pareto optimal. One of our contributions in \Cref{sec:bads-lindahl} is to propose a stronger definition of Lindahl equilibrium, which not only achieves PF (and hence, PO), but in fact turns out to be equivalent to PF.

\subsubsection{Difficulty in Defining the Core} 
As noted in \Cref{sec:private-goods}, the core for private goods (\Cref{def:core-private-goods}), which allows a coalition $S$ to choose an allocation $y$ of all the private goods among its members, extends naturally to the case of public goods (\Cref{def:core-public-goods}), where it can pick any lottery $y$ over the alternatives. In both cases, the outcome $y$ chosen by the coalition can be great for its members, but scaling the utilities down by $\nicefrac{|S|}{|N|}$ makes it comparable to the outcome $x$ chosen by the rule. 

The core for private bads, in contrast, \emph{forces} a coalition $S$ to pick an allocation $y$ of all the private bads among its members, which is \emph{terrible} for them. Hence, scaling their costs down by $\nicefrac{|S|}{|N|}$ makes it comparable to $x$. \emph{But how does one force a coalition to allocate ``all the bads among its members'' in the public bads model?} 

One possible formalization arises from noticing that the allocation $y$ picked by $S$ is required to impose no cost to agents in $N \setminus S$ (i.e., has zero externality). This is not always enforceable in the public bads model as there may be no such outcome. However, one can still force $S$ to bound the externality it imposes on $N \setminus S$, or rather, the factor by which the costs of its members will be scaled down can be made a function of the externality imposed. This leads us to propose the following definition of the core.

\begin{definition}[The BE Core for Public Bads]
    We say that an allocation $x \in \Delta(A)$ of public bads lies \emph{in the bounded-externality (BE) core} if there exist no $\emptyset\neq S \subseteq N$, $y \in \Delta(A)$, and $\alpha \in \R_{\ge 0}$ such that
    \begin{enumerate}
        \item $c_i(y) \le \alpha \cdot c_i(x)$ for all $i \in N\setminus S$; and \label{item:be-core-outsiders}
        \item $c_i(y) \le \frac{|N|-|N\setminus S|\cdot \alpha}{|S|} \cdot c_i(x)$ for all $i \in S$, and at least one of the inequalities is strict. \label{item:be-core-deviators}
    \end{enumerate}
\end{definition}

When $\alpha = 0$, the two conditions above reduce precisely to what the core for private bads enforces. Hence, in a private-induced public bads instance, every allocation in the BE core is also in the core for the corresponding private bads instance. That is, the BE core extends the core for private bads to public bads. We provide a formal statement and proof of this fact as \Cref{thm:be-core-private-core} in \Cref{app:bads-lindahl}.

However, the BE core may be empty. Consider a public bads instance with two agents and a single bad $b$, yielding the cost vector $(0,1)$. The outcome $x$ that selects $b$ with probability $1$ is the only feasible outcome, yet it technically fails to satisfy the BE core because $S = \set{2}$, $y = x$, and $\alpha = 0$ produce a violation of the above definition: $c_1(y) \le 0 \cdot c_1(x)$ (because $c_1(y) = 0$) and $c_2(y) < 2 \cdot c_2(x)$. In fact, this argument can be generalized to show that every outcome in the BE core must have a strictly positive cost vector. This is a desirable fairness property that CEEI satisfies for private bads (\Cref{thm:private-bads}). We discuss it in depth in \Cref{sec:bads-lindahl}.

But before that, we discuss another issue with the BE core. To understand it, let us consider the following criterion of individual fairness, applicable to both public and private bads (and to public and private goods, by using values instead of costs and flipping the direction of inequalities).  
We adopt the term ``individual fair share'' used by \citet{aziz2019fair}, but note that in the private fair division literature, an individual fairness guarantee of this sort is often referred to as ``proportionality'' which we avoid here to prevent confusion with the notion of ``proportional fairness''.

\begin{definition}[(Strong) Individual Fair Share]
    Let $c_i^{\max}$ and $c_i^{\min}$ denote the largest and smallest possible costs to agent $i$, respectively, under any outcome. We say that the allocation $x$ satisfies \emph{individual fair share} (IFS) if, for all $i \in N$,
    \[
    c_i(x) \le \frac{1}{n} \cdot c_i^{\min} + \frac{n-1}{n} \cdot c_i^{\max},
    \]
    and satisfies \emph{strong individual fair share} (sIFS) if, for all $i \in N$,
    \[
    c_i(x) \le \frac{n-1}{n} \cdot c_i^{\min} + \frac{1}{n} \cdot c_i^{\max}.
    \]
\end{definition}

For public and private goods, one can guarantee IFS via a random dictatorship, which chooses each agent's favorite outcome with probability $1/n$, but not sIFS. Strikingly, for private bads, one can guarantee sIFS via a random martyr rule, in which each agent takes on all the private bads with probability $1/n$. For public bads, sIFS is again infeasible, as the example below shows, while IFS is still guaranteed by random dictatorship. 

\begin{example}[Infeasibility of Strong Individual Fair Share for Public Bads]\label{ex:infeasible-sifs}
    \begin{wrapstuff}[r]
        \centering
        \small
        \begin{tabular}{cccc}
        \toprule
         & Bad 1 & Bad 2 & Bad 3 \\
        \midrule
        Agent 1 & $0$ & $1$ & $1$ \\
        Agent 2 & $1$ & $0$ & $1$\\
        Agent 3 & $1$ & $1$ & $0$\\
        \bottomrule
        \end{tabular}
    \end{wrapstuff}Consider the instance shown on the right
    with three agents and three bads, in which each agent has cost $0$ for a unique bad and cost $1$ for the other two bads. Every lottery assigns probability at most $\nicefrac{1}{3}$ to at least one bad, which gives the agent favoring that bad a cost of at least $\nicefrac{2}{3}$, violating sIFS.
    \qed
\end{example}

For public and private goods, the core enforces that for each agent $i \in N$:

\[v_i(x) \geq \frac{1}{n}\cdot v_i^{\max}\]

This comes from taking the deviating set $S$ to be the individual agent $\{i\}$ (for private goods, $\frac{1}{n}\cdot v_i^{\max}$ can be interpreted as $\frac{1}{n}\cdot v_i(G)$: agent $i$'s value for the entire set of items). When $v_i^{\min} = 0$ for all agents (as can be assumed in the case of private goods, where there is always at least one allocation that assigns a given agent none of the goods), this is equivalent to IFS. For public goods instances in which some agents may have $v_i^{\min} > 0$, this is a weaker guarantee than IFS. Our definition of IFS for public goods can be thought of as enforcing the above requirement after translating each agent's valuation so that its minimum is zero ($v'_i(a) = v_i(a) - v_i^{\min}$), as it is equivalent to the following:

\[v_i(x) - v_i^{\min} \geq \frac{1}{n}\cdot (v_i^{\max} - v_i^{\min})\]

Similarly, in the case of private bads, the core guarantees that for all $i \in N$, $c_i(x) \leq \frac{1}{n}\cdot c_i(B)$. Since $c_i^{\max} = c_i(B)$ and $c_i^{\min} = 0$ in the corresponding private-induced public bads instance, this exactly guarantees sIFS.

When adapting the individual fairness guarantees of the core to the public bads setting, one may think that the analogous approach would be to define a notion that also depends only on $c_i^{\max}$, such as:

\[c_i(x) \leq \frac{n-1}{n}\cdot c_i^{\max}\]

In the goods setting, the definition of IFS that also has $v_i^{\min}$ on the right-hand side makes the definition stronger. However, because of the inverted inequality in the bads case, going from the above inequality to the one enforced by IFS would actually make the requirement \emph{weaker} for instances where $c_i^{\min} > 0$. In fact, it is easy to see that the above requirement is too strong. For example, consider the instance shown in \Cref{ex:infeasible-sifs} with all the $0$ costs replaced with some $\epsilon > 0$. Clearly, in any lottery $x$ over this instance, some agent $i \in N$ must incur $c_i(x) > \frac{2}{3}$.

Thus, our definition of IFS for public bads is quite weak and is guaranteed always to exist. With this in mind, it would be natural to expect that a reasonable definition of the core for public bads should at least imply (a) IFS for all instances, and (b) the core (which implies sIFS) for private-induced instances.

In \Cref{sec:bads-lindahl}, we show that the BE core achieves (b) for a class of instances broader than private-induced instances. However, in general instances, it fails to guarantee (a). 

\begin{example}[The BE Core Can Violate Individual Fair Share]\label{ex:be-violate-ifs}
    \begin{wrapstuff}[r]
        \centering
        \small
        \begin{tabular}{ccc}
        \toprule
         & Bad 1 & Bad 2 \\
        \midrule
        Agent 1 & $1$ & $\nicefrac{1}{2}$\\
        Agent 2 & $0$ & $1$\\
        \bottomrule
        \end{tabular}
    \end{wrapstuff}Consider the instance shown on the right.
The unique BE core allocation places all weight on Bad 2. This allocation is in the BE core (by the upcoming \Cref{thm:positive-pf-be-core}, as it is PF with strictly positive costs), yet it clearly violates IFS ($c_2(x) = 1 > \nicefrac{1}{2}$).
To see uniqueness, consider any allocation $x$ placing weight $\beta \in (0,1]$ on Bad 1. The coalition $S = \{1\}$ can deviate using $y$ (all weight on Bad 1) and $\alpha = 0$. For agent 2, $c_2(y) = 0 \leq 0 \cdot c_2(x)$. For agent 1, the deviation holds because
    $ c_1(y) = 1 < 1 + \beta = 2(\tfrac{1}{2}\beta + \tfrac{1}{2}) = \frac{n}{|S|}c_1(x)$.
	\qed

\end{example}
This leads us to define an alternative definition of the core, which achieves (a), albeit it fails
(b). 

\begin{definition}[Completion Core]
    A lottery $x$ is in the \emph{completion core} if there exist no group of agents $S \subseteq N$ and lottery $y \in \Delta(A)$ such that for every agent $i \in S$, 
    \[
    \frac{|S|}{n} \cdot c_i(y) + \frac{n-|S|}{n} \cdot \max_{a \in A} c_i(a) \leq c_i(x),
    \]
    and at least one inequality is strict. 
\end{definition}

The completion core is a variant of the $\alpha$-core considered in the theory of partition function games \citep{thrall1963partitionfunction}, in which the deviating coalition $S$ assumes that the other players in $N \setminus S$ will respond in the worst possible manner for $S$. For the completion core, one can view $S$ as choosing how to allocate probability mass $\nicefrac{|S|}{|N|}$ and $N \setminus S$ as choosing how to allocate the remaining probability mass $1-\nicefrac{|S|}{|N|}$, with each $i \in S$ assuming that the coalition $N \setminus S$ will choose the outcome that is worst for agent $i$.

\begin{theorem}\label{thm:completion-core-ifs}
    Every allocation in the completion core satisfies individual fair share.
\end{theorem}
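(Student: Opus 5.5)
The argument is a contrapositive using singleton coalitions. Suppose $x$ violates individual fair share for some agent $i$, so that
\[
c_i(x) > \frac{1}{n} \cdot c_i^{\min} + \frac{n-1}{n} \cdot c_i^{\max}.
\]
I will show that the coalition $S = \set{i}$ blocks $x$ in the completion core. Let $y$ be the degenerate lottery on an alternative $a^* \in \argmin_{a \in A} c_i(a)$. By linearity of costs, the minimum cost over lotteries equals the minimum over alternatives, so $c_i(y) = c_i^{\min}$. Likewise $\max_{a \in A} c_i(a) = c_i^{\max}$, since every lottery's cost to $i$ is a convex combination of the costs of alternatives.

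Plugging $|S| = 1$ into the completion core condition, the left-hand side for agent $i$ is
\[
\frac{1}{n} \cdot c_i(y) + \frac{n-1}{n} \cdot \max_{a \in A} c_i(a) = \frac{1}{n} \cdot c_i^{\min} + \frac{n-1}{n} \cdot c_i^{\max},
\]
which is strictly less than $c_i(x)$ by assumption. Since $S$ has only the one member $i$, the required inequality holds for every member of $S$, and it holds strictly, which meets the ``at least one inequality is strict'' clause. Hence $(S,y)$ is a completion core violation, contradicting that $x$ is in the completion core. So every allocation in the completion core satisfies individual fair share.

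There is no real obstacle here. The only point needing care is identifying $c_i^{\min}$ and $c_i^{\max}$, which are defined over outcomes (lotteries), with the minimum and maximum over pure alternatives. This follows from linearity and compactness of $\Delta(A)$.
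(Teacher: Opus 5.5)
Your proof is correct and is essentially the paper's argument: both use the singleton coalition $S=\{i\}$ with $y$ the degenerate lottery on agent $i$'s least costly alternative; you phrase it as a contrapositive, while the paper argues directly. Your remark that $c_i^{\min}$ and $c_i^{\max}$ coincide with the extremes over pure alternatives, by linearity, is a detail the paper leaves implicit.
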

\begin{proof}
    For any allocation $x$ in the completion core, fix some agent $i$. With $S = \set{i}$, completion core guarantees that for any alternative allocation $y$:
    \[c_i(x) \leq \frac{1}{n}c_i(y) + \frac{n-1}{n}\max_{a \in A}c_i(a)\]
    Taking $y$ to be the allocation that places all the probability mass on $\argmin_{a \in A}c_i(a)$ gives the IFS guarantee for agent $i$.
\end{proof}

With some fairness definitions in hand, the natural question is: Which rules, if any, achieve these definitions? A natural place to start this search is with the rules we know work well for goods instances and private bads instances: market equilibria and proportional fairness.

\section{Lindahl Equilibria for Public Bads}\label{sec:bads-lindahl}

We begin by introducing our formalization of Lindahl equilibria for public bads.

\begin{definition}[Lindahl Equilibrium for Public Bads]\label{def:lindahl-bads}
A pair $(x,p)$ of an allocation $x \in \Delta(A)$ of public bads and personalized prices $p = (p_i : A \to \R_{\ge 0})_{i \in N}$ forms a \emph{Lindahl equilibrium} if:
\begin{enumerate}
\item \emph{Spending constraint:} $p_i \cdot x \geq \nicefrac{1}{n}$ for all $i \in N$.
\item \emph{Cost minimization:} For every $i \in N$ and $y \in \R_{\geq 0}^A$ such that $p_i \cdot y \geq \nicefrac{1}{n}$, we have $c_i(y) \geq c_i(x)$; moreover, if $c_i(y) = c_i(x) = 0$, then $\sum_{a \in A} y(a) \geq 1$.
\item \emph{Profit maximization:} $\sum_{i \in N} p_i(a) \geq 1$ for every $a \in A$, with equality if $x(a) > 0$.
\end{enumerate}
We say that $x$ is a \emph{Lindahl allocation} if $(x,p)$ is a Lindahl equilibrium for some prices $p$.
\end{definition}

The crucial change from the definition of weak Lindahl equilibrium due to \citet{TZ25} lies at the end of the cost minimization condition, where we set an additional restriction in case of $c_i(x) = 0$. This is needed to ensure Pareto optimality even when some agents receive zero cost. With this strengthened definition, we can establish a relationship between proportional fairness, market equilibrium, and the Nash welfare function, akin to that established for private bads. In particular, we will consider critical points of Nash welfare that lie in the upward closure of the cost set.

\begin{definition}[Upward Closure]
Define the \emph{upward closure} of $\calC$ as $\calCup \triangleq \set{z \in \R_{\ge 0}^N : \exists z' \in \calC \text{ with } z \geq z'}$. For $S \subseteq N$, define $\calC^S_{\ge} = \set{z_S = (z_i)_{i \in S} : (z \in \calC_{\ge}) \land (z_i = 0, \forall i \in N\setminus S)}$.
\end{definition}

We can now state our main equivalence theorem for public bads.

\begin{theorem}\label{thm:bads-pf-equiv-nonzero}
    For public bads with linear costs, the following are equivalent for any $x \in \Delta(A)$.
    \begin{enumerate}
        \item $x$ is proportionally fair (PF).
        \item $x$ is a Lindahl allocation.
        \item For $S := \set{i \in N : c_i(x) > 0}$, $c_S(x) = (c_i(x))_{i \in S}$ is a critical point in $\calC_{\ge}^S$ of the function $f_{\Nash}^S : \R_{\ge 0}^S \to \R_{\ge 0}$ given by $f_{\Nash}^S(z) = \prod_{i \in S} z_i$, i.e., $c_S(x) \in \Gamma(f_{\Nash}^S,\calC_{\ge}^S)$.
    \end{enumerate}
\end{theorem}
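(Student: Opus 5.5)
The plan is to reduce all three conditions to one combinatorial inequality on a restricted set of alternatives. Fix $x$ and let $S=\set{i\in N: c_i(x)>0}$; then $S\neq\emptyset$ because $0\notin\calC$. Let $Z=\set{a\in A: c_i(a)=0 \ \forall i\in N\setminus S}$. Since costs are nonnegative, $\supp(x)\subseteq Z$. Under the ratio convention, any $a\notin Z$ makes $\frac1n\sum_i \frac{c_i(a)}{c_i(x)}=+\infty$, while any $a\in Z$ gives each $i\notin S$ a contribution of exactly $1$. By the pointwise form of PF (\Cref{def:pf-bads}), PF is therefore equivalent to
\[
(\star)\qquad \sum_{i\in S}\frac{c_i(a)}{c_i(x)}\ \ge\ |S| \quad\text{for all } a\in Z .
\]
The equality clause on $\supp(x)$ is automatic. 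Indeed, averaging the left side of $(\star)$ with weights $x(a)$ gives exactly $\sum_{i\in S} c_i(x)/c_i(x)=|S|$, so if every term is at least $|S|$, every term in the support equals $|S|$. I will use this averaging trick repeatedly.

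\emph{(1)$\Leftrightarrow$(3).} At $z^*=c_S(x)$ the gradient of $f^S_{\Nash}$ is $\nabla_i=f^S_{\Nash}(z^*)/z^*_i$, which is a positive multiple of $w$ with $w_i=1/z^*_i>0$. Since $\calC^S_{\ge}$ is upward closed and $w\gg 0$, the functional $w\cdot z$ is unbounded above on it, so the ``maximum'' branch of \Cref{def:critical-points} never applies. Criticality thus means $w\cdot z\ge w\cdot z^*=|S|$ for all $z\in\calC^S_{\ge}$. Each such $z$ dominates some $c_S(y)$ with $y\in\Delta(A)$ and $c_{N\setminus S}(y)=0$, which is equivalent to $\supp(y)\subseteq Z$. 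By linearity, the condition then reduces to $(\star)$ over $a\in Z$. Also $z^*\in\calC^S_{\ge}$ holds trivially.

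\emph{(1)$\Rightarrow$(2).} I will construct prices explicitly. For $i\in S$, set $p_i(a)=\frac{c_i(a)}{n\,c_i(x)}$. For $i\notin S$, set $p_i(a)=\frac1n$ if $c_i(a)=0$ and $p_i(a)=1$ otherwise.
\begin{itemize}
\item Spending: $p_i\cdot x=\frac1n$ for every $i$, using $\supp(x)\subseteq Z$ for $i\notin S$.
\item Cost minimization for $i\in S$: it is immediate, since $c_i(y)=n\,c_i(x)\,p_i\cdot y$.
\item Cost minimization for $i\notin S$: if $c_i(y)=0$, then $y$ is supported on alternatives priced $\frac1n$, so $p_i\cdot y\ge\frac1n$ forces $\sum_a y(a)\ge 1$.
\item Profit maximization: for $a\notin Z$, some $p_i(a)=1$. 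For $a\in Z$, the total price is $\frac{|N\setminus S|}{n}+\frac1n\sum_{i\in S}\frac{c_i(a)}{c_i(x)}$, which is at least $1$ by $(\star)$, with equality on $\supp(x)$.
\end{itemize}

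\emph{(2)$\Rightarrow$(1).} This is the step I expect to be the main obstacle, because agents with zero cost are exactly where the strengthened cost-minimization clause must be used. First, summing the spending constraints and using profit maximization gives
\[
\frac{n}{n}\ \le\ \sum_i p_i\cdot x\ =\ \sum_a x(a)\sum_i p_i(a)\ =\ 1,
\]
so $p_i\cdot x=\frac1n$ exactly for every $i$. Next, for $i\in S$ and $a$ with $p_i(a)>0$, test cost minimization with $y=\frac{1}{n p_i(a)}\mathbf 1_a$; this gives $p_i(a)\le \frac{c_i(a)}{n c_i(x)}$, and the bound is trivial when $p_i(a)=0$. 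For $i\notin S$ and $a$ with $c_i(a)=0$ and $p_i(a)>0$, the same test vector has zero cost, so the extra clause forces $\frac{1}{np_i(a)}\ge 1$, i.e.\ $p_i(a)\le\frac1n$. Plugging both bounds into $\sum_i p_i(a)\ge 1$ for $a\in Z$ yields $(\star)$, and the averaging argument gives the equality clause. Hence $x$ is PF.
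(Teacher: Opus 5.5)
Your proof is correct and follows the paper's argument essentially step for step: the same personalized prices for PF $\Rightarrow$ Lindahl, the same restricted alternative set (your $Z$ is the paper's $A^S$) with upward closure ruling out the wrong branch of the critical-point definition, and the same test vector $\frac{1}{np_i(a)}\mathbf{1}_a$ combined with the strengthened cost-minimization clause for zero-cost agents. Your direct bound $p_i(a)\le \frac{c_i(a)}{n\,c_i(x)}$ replaces the paper's proof by contradiction via minimum pain-per-buck, and your budget-exhaustion computation is correct but not needed for that bound; this is a streamlining of the paper's route rather than a different one.
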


\noindent
The proof is split into two lemmas. The first shows the equivalence of the first two conditions.

\begin{lemma}
    A lottery $x$ is proportionally fair if and only if there exists some prices $p$ such that $(x,p)$ is a Lindahl equilibrium.
\end{lemma}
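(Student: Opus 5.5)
The plan is to prove both directions directly. We work with the per-alternative characterization of PF from \Cref{def:pf-bads}: $\frac1n\sum_{i}\frac{c_i(a)}{c_i(x)}\ge 1$ for all $a\in A$, with equality whenever $x(a)>0$, under the conventions $\nicefrac00=1$ and $\nicefrac{\alpha}{0}=+\infty$. Throughout, let $S=\set{i: c_i(x)>0}$ and $Z=N\setminus S$. Write $r_i(a)$ for the ratio $\frac{c_i(a)}{c_i(x)}$ under these conventions. Note that every $a\in\supp(x)$ satisfies $c_i(a)=0$ for all $i\in Z$.

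\emph{PF $\Rightarrow$ Lindahl.} We construct the prices explicitly.
For $i\in S$, set $p_i(a)=\frac{c_i(a)}{n\,c_i(x)}$.
For $i\in Z$, set $p_i(a)=\frac1n$ if $c_i(a)=0$, and $p_i(a)=1$ if $c_i(a)>0$.
We then verify the three conditions.
\begin{enumerate}
\item Spending: for $i\in S$, linearity gives $p_i\cdot x=\frac1n$. For $i\in Z$, $x$ is supported on alternatives with $c_i(a)=0$, so $p_i\cdot x=\frac1n$ as well.
\item Profit maximization: if some $i\in Z$ has $c_i(a)>0$, then $\sum_j p_j(a)\ge 1$ already from that agent. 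Otherwise $\sum_j p_j(a)=\frac1n\sum_j r_j(a)\ge 1$ by PF. On $\supp(x)$ we are in the second case, so equality holds by PF.
\item Cost minimization for $i\in S$: $p_i$ is a positive multiple of $c_i$, so $p_i\cdot y\ge\frac1n$ gives $c_i(y)\ge c_i(x)$.
\item Cost minimization for $i\in Z$: $c_i(y)\ge 0=c_i(x)$ holds trivially. If moreover $c_i(y)=0$, then $y$ is supported on alternatives of price $\frac1n$ for agent $i$. Hence $\frac1n\le p_i\cdot y=\frac1n\sum_a y(a)$, which gives $\sum_a y(a)\ge 1$, as the strengthened condition requires.
\end{enumerate}

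\emph{Lindahl $\Rightarrow$ PF.} The key step is the pointwise bound $p_i(a)\le\frac1n r_i(a)$ for every $i$ and $a$. Once we have it, profit maximization gives $1\le\sum_i p_i(a)\le\frac1n\sum_i r_i(a)$ for every $a$. To prove the bound, fix $a$ with $p_i(a)>0$; if $p_i(a)=0$ there is nothing to show. Test cost minimization with $y=\frac{1}{n\,p_i(a)}e_a$, which satisfies $p_i\cdot y=\frac1n$.
\begin{enumerate}
\item If $i\in S$: we get $\frac{c_i(a)}{n\,p_i(a)}\ge c_i(x)$, i.e.\ $p_i(a)\le\frac{c_i(a)}{n\,c_i(x)}$.
\item If $i\in Z$ and $c_i(a)>0$: the bound is vacuous since $r_i(a)=+\infty$.
\item If $i\in Z$ and $c_i(a)=0$: then $c_i(y)=c_i(x)=0$, so the added clause forces $\frac{1}{n\,p_i(a)}\ge 1$, i.e.\ $p_i(a)\le\frac1n=\frac1n r_i(a)$.
\end{enumerate}

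For equality on the support, observe that $\sum_a x(a)\,r_i(a)=1$ for every $i$. For $i\in S$ this follows by linearity. For $i\in Z$, each $a$ in the support has $r_i(a)=\nicefrac00=1$. Averaging, $\sum_a x(a)\cdot\frac1n\sum_i r_i(a)=1$. Each term with $x(a)>0$ is at least $1$, so every such term equals $1$.

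The main obstacle is the careful treatment of the zero-cost agents in $Z$. The added clause in the cost-minimization condition is exactly what caps their prices at $\frac1n$ on zero-cost alternatives. That cap is what matches the $\nicefrac00=1$ convention, and it is also the step that fails for weak Lindahl equilibria.
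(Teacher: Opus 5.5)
Your proof is correct. The forward direction (PF $\Rightarrow$ Lindahl) uses exactly the paper's price construction and the same verification of the three conditions.

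The reverse direction reaches the same key inequality as the paper, $p_i(a) \le \frac{1}{n}\frac{c_i(a)}{c_i(x)}$ summed against profit maximization, but by a different route.

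\begin{itemize}
\item \emph{The paper's route.} It argues by contradiction. It first shows that every agent with $c_i(x)>0$ spends exactly $\frac1n$. It then uses the minimum pain-per-buck $\text{mpb}_i$ to get $c_i(a)\ge p_i(a)\,\text{mpb}_i$ and $c_i(x)=\frac1n\text{mpb}_i$. Only for zero-cost agents does it use the single-alternative bundle $y=\frac{1}{np_i(a)}e_a$, to rule out a zero-cost agent paying more than $\frac1n$ for $a$.
\item \emph{Your route.} You apply that same test bundle uniformly to every agent and read the pointwise bound directly off cost minimization.
\end{itemize}

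Your version is more economical. It never invokes the spending constraint, so it shows that cost minimization plus profit maximization alone force PF. It also skips the budget-exhaustion side argument and the pain-per-buck bookkeeping. The paper's version, in exchange, keeps the familiar market interpretation in terms of pain-per-buck ratios.

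Your final averaging step for equality on $\supp(x)$ is fine if the sum is taken over $\supp(x)$ only, which avoids $0\cdot(+\infty)$ terms for agents in $Z$. It is not actually needed, though. PF in its lottery form only asks for $\ge 1$, and equality on the support then follows automatically, which is why the paper omits it.
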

\begin{proof}
    \proofstep{PF $\Rightarrow$ Lindahl equilibrium.}
Let $x \in \Delta(A)$ be a PF lottery. Define personalized prices as follows:
    \[
        p_i(a) =
        \begin{cases}
            \frac{1}{n}\frac{c_i(a)}{c_i(x)} & \text{if } c_i(x) > 0, \\
            \frac{1}{n} & \text{if } c_i(x) = 0 \text{ and } c_i(a) = 0, \\
            1 & \text{if } c_i(x) = 0 \text{ and } c_i(a) > 0.
        \end{cases}
    \]
    We will show that $(x,p)$ is a Lindahl equilibrium.

    First, we show profit maximization. For each $a \in A$ such that $x(a) > 0$, we have $\sum_{i \in N}\frac{1}{n}\frac{c_i(a)}{c_i(x)} = 1$ from the PF conditions (\Cref{def:pf-bads}). For each $i \in N$ such that $c_i(x) > 0$, we have $p_{i}(a) = \frac{1}{n} \frac{c_i(a)}{c_i(x)}$, and for each $i \in N$ such that $c_i(x) = 0$, we must have $c_i(a) = 0$, so by our convention that $0/0 = 1$, we have $p_{i}(a) = \frac{1}{n} = \frac{1}{n}\frac{c_i(a)}{c_i(x)}$. Thus $\sum_{i \in N}p_{i}(a) = 1$ as desired.

    For each $a \in A$ such that $x(a) = 0$, for all $i \in N$ with $c_i(x) > 0$ or with $c_i(x) = 0$ and $c_i(a) = 0$ we have that $p_{i}(a) = \frac{1}{n}\frac{c_i(a)}{c_i(x)}$. If there is no agent $j$ such that $c_j(x) = 0$ and $c_j(a) > 0$, then we have that
    \[
    \sum_{i \in N}p_{i}(a) = \sum_{i \in N}\frac{1}{n}\frac{c_i(a)}{c_i(x)} \geq 1.
    \]
    If there is such an agent $j$, then we have that $p_{j}(a) = 1$, so clearly $\sum_{i \in N}p_{i}(a) \geq 1$ is true.

    For the spending constraint, for any $i \in N$ such that $c_i(x) > 0$, we have:
    \[
    p_i \cdot x = \frac{1}{n} \sum_{a \in A} \frac{c_i(a)}{c_i(x)} x(a) = \frac{1}{n} \frac{\sum_{a \in A} c_i(a) x(a)}{c_i(x)} = \frac{1}{n}.
    \]
    For any $i \in N$ with $c_i(x) = 0$, then for each $a \in A$ with $x(a) > 0$ we must have $c_i(a) = 0$, so we will have $p_{i}(a) = \frac{1}{n}$. Thus $p_i \cdot x = \sum_{a \in A}\frac{1}{n}x_a = \frac{1}{n}$.

    For cost minimization, for any $i \in N$ with $c_i(x) > 0$, the pain-per-buck ratio of $a \in A$ for $i \in N$ is
    \[
        \frac{c_i(a)}{p_{i}(a)} = n c_i(x),
    \]
    which is constant, so all lotteries that induce spending $1/n$ have the same cost, including $x$.
For any $i \in N$ such that $c_i(x) = 0$, note that since $x$ satisfies $i$'s spending constraint and has cost $0$, to establish the cost minimization condition for $i$, we need to ensure that there is no alternative lottery $y$ satisfying $i$'s spending constraint with $c_i(y) = 0$ and $\sum_{a \in A}y(a) < \sum_{a \in A}x(a) = 1$. All bads $a \in A$ with $c_i(a) = 0$ have price $p_{i}(a) = 1/n$, so clearly any alternative lottery $y$ with total price at least $\nicefrac{1}{n}$ must have size at least $1$, completing this half of the proof.

    \proofstep{Lindahl equilibrium $\Rightarrow$ PF.}
    Let $(x,p)$ be a Lindahl equilibrium. For contradiction, assume that there is an $a \in A$ such that $\frac{1}{n}\sum_{i \in N}\frac{c_i(a)}{c_i(x)} < 1$.

    First, note that for any $i \in N$, $p_i \cdot x > 1/n$ is true only if $c_i(x) = 0$. If $c_i(x) > 0$ were true, then $i$ could decrease his consumption of some $b \in A$ with $x(b) > 0$ and $c_i(b) > 0$ by some small amount without violating the spending constraint. Therefore, for all $i \in N$ with $c_i(x) > 0$, we must have $p_i \cdot x = 1/n$

    Next, note that since $x$ meets the cost minimizing condition, for any $i \in N$ with $c_i(x) > 0$, all bads $b \in A$ with $x(b) > 0$ must minimize pain-per-buck for $i$. For each agent $i \in N$, let $\text{mpb}_i$ be the minimum pain-per-buck of $i$. Then we must have that:
    \[
    c_i(a) \geq p_{i}(a)\text{mpb}_i
    \]

    Additionally, following from the fact that $p_i \cdot x = 1/n$ when $c_i(x) > 0$ and the same pain-per-buck logic, we must also have that:
    \[
    c_i(x) = \frac{1}{n}\text{mpb}_i
    \]

    Thus, for all $i \in N$ with $c_i(x) > 0$, we have:
    \[
    \frac{1}{n}\frac{c_i(a)}{c_i(x)} \geq \frac{1}{n}\frac{p_{i}(a)\text{mpb}_i}{(1/n)\text{mpb}_i} = p_{i}(a)
    \]

    For all $i \in N$ such that $c_i(x) = 0$, note that we must also have $c_i(a) = 0$. Otherwise $\frac{c_i(a)}{c_i(x)} = \infty$, which would be a contradiction to $\frac{1}{n}\sum_{i \in N}\frac{c_i(a)}{c_i(x)} < 1$.

    From this, (and recalling that we use the convention that $0/0 = 1$) we can say that for all $i \in N$ such that $c_i(x) = 0$, we must have that:
    \[
    \frac{1}{n}\frac{c_i(a)}{c_i(x)} = \frac{1}{n}\frac{0}{0} = \frac{1}{n}
    \]

    Let $S \subseteq N$ be the set of all agents $i$ such that $c_i(x) = 0$. Then, we can say the following:
    \[
    1 > \sum_{i \in N}\frac{1}{n}\frac{c_i(a)}{c_i(x)} = \sum_{i \in N \setminus S}\frac{1}{n}\frac{c_i(a)}{c_i(x)} + \sum_{i \in S}\frac{1}{n}\frac{c_i(a)}{c_i(x)} \geq \sum_{i \in N \setminus S}p_{i}(a) + \frac{|S|}{n}
    \]

    Rearranging we get:
    \[
    \sum_{i \in N \setminus S}p_{i}(a) < 1 - \frac{|S|}{n}
    \]

    If $S = \emptyset$ (all agents have positive cost for $x$), then we immediately get a contradiction to the profit maximization condition. If we have that $|S| > 0$, then in order for the profit maximization condition to be true, we must have that:
    \[
    \sum_{i \in S}p_{i}(a) > \frac{|S|}{n}
    \]

    This implies the existence of an agent $i \in S$ with $p_{i}(a) > \frac{1}{n}$. However, take the vector $y \in [0,1]^A$ where $y(a) = \frac{1}{np_{i}(a)} < 1$, and $y(b) = 0$ for every other $b \in A$. We will have that $p_i \cdot y = p_{i}(a) \frac{1}{np_{i}(a)} = \frac{1}{n}$, so $y$ satisfies the spending constraint. But this means that
    \[
    \sum_{a \in A}x(a) = 1 > \sum_{a \in A}y(a),
    \]
    contradicting the fact that $(x,p)$ meets the cost minimizing condition for $i$.
\end{proof}

The second lemma shows the equivalence between PF and suitable critical points of Nash welfare.

\begin{lemma}\label{lem:pf-characterization}
    Allocation $x$ is proportionally fair if and only if for $S := \set{i \in N : c_i(x) > 0}$, $c_S(x) = (c_i(x))_{i \in S}$ is a critical point of $f_{\Nash}^S : \R_{\ge 0}^S \to \R_{\ge 0}$ given by $f_{\Nash}^S(z) = \prod_{i \in S} z_i$ in $\calC_{\ge}^S$ (i.e., $c_S(x) \in \Gamma(f_{\Nash}^S,\calC_{\ge}^S)$).
\end{lemma}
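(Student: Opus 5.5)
The plan is to compute the gradient of $f_{\Nash}^S$ explicitly and check that the critical-point condition in $\calC_{\ge}^S$ unwinds to the PF inequalities. Throughout, write $T := N\setminus S$ for the agents with $c_i(x)=0$ and $z := c_S(x)$. Note that $S\neq\emptyset$, since $0\notin\calC$. Also $z\in\calC_{\ge}^S$, because $c(x)\in\calC$ and $c_T(x)=0$.

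\proofstep{Step 1: simplify the critical-point condition.} Since $z\gg 0$ on $S$, we have $\nabla f_{\Nash}^S(z)_i = \prod_{j\in S\setminus\{i\}} z_j = f_{\Nash}^S(z)/z_i$. Dividing by $f_{\Nash}^S(z)>0$, the two alternatives in \Cref{def:critical-points} become:
\begin{itemize}
\item[(i)] $\sum_{i\in S} w_i/z_i \le |S|$ for all $w\in\calC_{\ge}^S$, or
\item[(ii)] $\sum_{i\in S} w_i/z_i \ge |S|$ for all $w\in\calC_{\ge}^S$.
\end{itemize}
The set $\calC_{\ge}^S$ is upward closed in the coordinates of $S$: if $w\in\calC_{\ge}^S$, then $w+t e_i\in\calC_{\ge}^S$ for all $t\ge 0$, since we only raise coordinates in $S$ and keep the $T$-coordinates at $0$. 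So (i) fails for large $t$. Hence $z\in\Gamma(f_{\Nash}^S,\calC_{\ge}^S)$ if and only if (ii) holds.

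\proofstep{Step 2: reduce (ii) to alternatives.} Let $A_0 := \{a\in A : c_i(a)=0 \ \forall i\in T\}$. I will show that (ii) is equivalent to
\[
\sum_{i\in S} \frac{c_i(a)}{c_i(x)} \ge |S| \quad\text{for all } a\in A_0. \tag{$\ast$}
\]
\begin{itemize}
\item For (ii) $\Rightarrow(\ast)$: for $a\in A_0$, the vector $c_S(a)$ lies in $\calC_{\ge}^S$, because $c(a)\in\calC$ has zero $T$-coordinates.
\item For $(\ast)\Rightarrow$ (ii): take $w\in\calC_{\ge}^S$. By definition there are $w'\in\calC_{\ge}$ with $w'_S=w$ and $w'_T=0$, and $y\in\Delta(A)$ with $c(y)\le w'$. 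Then $c_T(y)=0$, and since costs are nonnegative, $\supp(y)\subseteq A_0$. By monotonicity and linearity,
\[
\sum_{i\in S} \frac{w_i}{z_i} \;\ge\; \sum_{i\in S} \frac{c_i(y)}{z_i} \;=\; \sum_{a\in A_0} y(a)\sum_{i\in S}\frac{c_i(a)}{z_i} \;\ge\; |S|.
\]
\end{itemize}

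\proofstep{Step 3: match $(\ast)$ with PF.} By the alternative-wise form of \Cref{def:pf-bads}, PF says $\frac1n\sum_{i\in N} c_i(a)/c_i(x)\ge 1$ for every $a\in A$; the clause ``equality if $x(a)>0$'' is implied by these inequalities together with the identity $\frac1n\sum_i c_i(x)/c_i(x)=1$ and averaging over $\supp(x)$. Split into cases by $a$.
\begin{itemize}
\item If $a\notin A_0$, some $i\in T$ has $c_i(a)>0=c_i(x)$, so that term is $+\infty$ and the inequality holds automatically.
\item If $a\in A_0$, each $i\in T$ contributes $0/0=1$. The PF inequality then reads $\frac1n\bigl(\sum_{i\in S} c_i(a)/c_i(x)+|T|\bigr)\ge 1$. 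Since $n-|T|=|S|$, this is exactly $(\ast)$.
\end{itemize}
This gives PF $\Leftrightarrow(\ast)\Leftrightarrow$ (ii) $\Leftrightarrow$ criticality.

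\proofstep{Main obstacle.} The delicate part is Step 2: the handling of the zero-cost agents $T$ through the projection $\calC_{\ge}^S$. One must check that every point of $\calC_{\ge}^S$ is dominated from below by the cost vector of a lottery supported inside $A_0$, and that this uses nonnegativity of costs. One must also verify that the $0/0=1$ and $\alpha/0=\infty$ conventions in PF line up with this projection.
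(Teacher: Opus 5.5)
Your proof is correct and follows essentially the same route as the paper: you rule out the ``$\le$'' branch of the critical-point condition using upward closure, then match the remaining ``$\ge$'' inequality over $\calC^S_{\ge}$ with PF, since alternatives charging some agent in $N\setminus S$ contribute $+\infty$ while zero-cost agents contribute $0/0=1$. The only cosmetic differences are that you argue alternative-by-alternative rather than with lotteries in $\Delta(A^S)$, and that you spell out the nonnegativity argument implicit in the paper's decomposition $\calC^S_{\ge}=\calC^S+\R^S_{\ge 0}$.
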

\begin{proof}
Fix any allocation $x \in \Delta(A)$ and let $S = \set{i \in N : c_i(x) > 0}$. Define $A^S = \set{a \in A : c_i(a) = 0, \forall i \in N \setminus S}$.
The standing assumption $0\notin\calC$ ensures that $S\neq\emptyset$. Moreover, all coordinates of $c_S(x)$ are positive, so
\[
\nabla f_{\Nash}^S(c_S(x))
= f_{\Nash}^S(c_S(x))\nabla f_{\Nash}^{\log,S}(c_S(x)),
\]
where $f_{\Nash}^S(c_S(x))>0$. Thus, the two gradients differ by a positive scalar and yield the same critical-point inequalities; below we use the log-Nash gradient for convenience.

For the forward direction, assume that $x$ is PF. We show that $c_S(x) \in \Gamma(f_{\Nash}^S,\calC_{\ge}^S)$. In particular, we will show that for $w^* = \nabla f^{\log,S}_{\Nash}(c_S(x)) = \left(\frac{1}{c_i(x)}\right)_{i \in S}$, we have that $w^* \cdot z \ge w^* \cdot c_S(x) = |S|$ for all $z \in \calC^S_{\ge}$.

Due to $x$ being PF, we have that for all $y \in \Delta(A)$,
\begin{align}\label{eqn:critical-pf-mb}
    \frac{1}{n} \sum_{i \in N} \frac{c_i(y)}{c_i(x)} \ge 1.
\end{align}
Hence, the same condition holds for all $y \in \Delta(A^S)$ as well. However, $y \in \Delta(A^S)$ implies $c_i(y) = 0$ for all $i \in N \setminus S$, so $\frac{c_i(y)}{c_i(x)} = \frac{0}{0} = 1$ according to our convention in the PF definition. Plugging this into \Cref{eqn:critical-pf-mb}, we get that $\sum_{i \in S} \frac{c_i(y)}{c_i(x)} \ge |S|$. Thus, we have that $w^* \cdot z \ge |S|$ for all $z \in \calC^S = \set{z_S : z \in \calC, z_{N \setminus S} = 0}$.We need to establish the same for all $z \in \calC^S_{\ge}$. However, note that $\calC^S_{\ge} = \calC^S + \R_{\ge 0}^S = \set{z' + z'' : z' \in \calC^S, z'' \in \R_{\ge 0}^S}$. Since we have $w^* \cdot z' \ge |S|$ and $w^* \cdot z'' \ge 0$ (recall that $w^* \in \R_{> 0}^S$ and $z'' \in \R_{\ge 0}^S$), we get the desired inequality that $w^* \cdot z \ge |S|$ for all $z \in \calC^S_{\ge}$, so $c_S(x) \in \Gamma(f^S_{\Nash},\calC^S_{\ge})$.

For the reverse direction, suppose $c_S(x) \in \Gamma(f^S_{\Nash},\calC^S_{\ge})$. Note that we must have $c_S(x) \in \R_{> 0}^S$ for the gradient to be well-defined. Further, for $w^* = \nabla f^{\log,S}_{\Nash}(c_S(x)) = \left(\frac{1}{c_i(x)}\right)_{i \in S} \in \R_{> 0}^S$, we must have either
\begin{align}\label{eqn:critical-more-mb}
    w^* \cdot z \ge w^* \cdot c_S(x) = |S|, \forall z \in \calC^S_{\ge},
\end{align}
or
\begin{align}\label{eqn:critical-less-mb}
    w^* \cdot z \le w^* \cdot c_S(x) = |S|, \forall z \in \calC^S_{\ge}.
\end{align}
Noting that $c_S(x) \in \calC^S$, take $z = c_S(x) + \epsilon \in \calC^S_{\ge}$ for any $\epsilon \in \R_{> 0}^S$. Then, $w^* \cdot z = w^* \cdot c_S(x) + w^* \cdot \epsilon > w^* \cdot c_S(x) = |S|$. Hence, \Cref{eqn:critical-less-mb} cannot hold, so \Cref{eqn:critical-more-mb} must hold.

In particular, we get that for any $y \in \Delta(A^S)$, because $c_{N \setminus S}(y) = 0$, $c_S(y) \in \calC^S$. Hence, from \Cref{eqn:critical-more-mb}, we get that $\sum_{i \in S} w^*_i \cdot c_i(y) = \sum_{i \in S} \frac{c_i(y)}{c_i(x)} \ge |S|$. For each $i \in N\setminus S$, we have $\frac{c_i(y)}{c_i(x)} = \frac{0}{0} = 1$ according to our PF convention. Hence, we get
\begin{align}\label{eqn:critical-pf-2-mb}
\sum_{i \in N} \frac{c_i(y)}{c_i(x)} \ge |N|.
\end{align}
It simply remains to establish the same for all $y \in \Delta(A) \setminus \Delta(A^S)$. However, for any such $y$, there exists $i \in N\setminus S$ for which $c_i(y) > 0$, which makes $\frac{c_i(y)}{c_i(x)} = +\infty$, so \Cref{eqn:critical-pf-2-mb} holds trivially. This completes the proof.
\end{proof}

\myparagraph{Contrast between our characterization and that of \citet{BMSY17}.} Recall that \citet{BMSY17} characterize CEEI allocations of private bads as critical points of the whole Nash product $f_{\Nash} = f_{\Nash}^N$ in $\calC$ that lie on the efficient frontier $\calC^{\PE}$, i.e., $c(x) \in \calC^{\PE} \cap \Gamma(f_{\Nash},\calC)$. Our characterization differs in two crucial aspects.

First, we work with Nash product over the set of agents with positive costs. This is understandable. \citet{BMSY17} work with CEEI allocations, which yield positive costs to all agents, their condition resembles ours for the special case of $S=N$; the same holds for us in \Cref{thm:bads-pf-equiv}, where we consider allocations yielding strictly positive costs to all agents. But, more crucially, our critical points are defined in the upward closure $\calC^S_{\ge}$ (which is $\calC_{\ge}$ for $S=N$) whereas those of \citet{BMSY17} are defined in $\calC$, making our critical points more restrictive because the gradient condition needs to be enforced for a larger set of cost vectors. In the proof, we need this to deduce that the inequality in the critical point definition (\Cref{def:critical-points}) holds in the correct direction. \citet{BMSY17} get around this by instead requiring $x$ to be Pareto optimal, and stating  that $c(x) \in \calC^{\PE}$ will imply that the gradient inequality holds in the correct direction. While this works in the private bads setting, it is demonstrably false in our domain of public bads.

\begin{example}[Necessity of Critical Points in Upward Closure]\label{ex:private-bads-pf-characterization-fails}
    Consider the instance used in \Cref{ex:positive-pf-nonexistence}. We noted that the only PF solution chooses bad $b_2$ deterministically, yielding a cost vector of $(0,3)$. Let us show that the cost vector $(1,2)$ induced by bad $b_1$, which is not a PF solution and for which $S = \set{i \in N : c_i(b_1) > 0} = N$, is a critical point of $f_{\Nash}$ in $\calC$ (but not in $\calC_{\ge}$). This shows that replacing $\calC^S_{\ge}$ with $\calC^S$ and additionally demanding Pareto optimality (which $b_1$ satisfies) in \Cref{thm:bads-pf-equiv} cannot work. 

    Let $f^{\log}_{\Nash}$ be the log of $f^{\Nash}$. Because there are only two bads, $\calC$ is simply the straight line between the cost vectors of $b_1$ and $b_2$, i.e., $\calC = \set{\lambda \cdot (1,2) + (1-\lambda) \cdot (0,3) : \lambda \in [0,1]}$. At $c(b_1) = (1,2)$, we have $w^* = \nabla f^{\log}_{\Nash}(c(b_1)) = (\frac{1}{1},\frac{1}{2})$. Hence, it follows that for all $z \in \calC$, 
    \begin{align}\label{eqn:ex-critical-point-closure}
    w^* \cdot z = (1,\nicefrac{1}{2}) \cdot \lambda \cdot (1,2) + (1,\nicefrac{1}{2}) \cdot (1-\lambda) \cdot (0,3) = \lambda \cdot 2 + (1-\lambda) \cdot 1.5 \le 2,
    \end{align}
    which means $c(b_1) \in \Gamma(f_{\Nash},\calC)$, but $b_1$ is not a PF solution. The problem arises because the inequality in \Cref{eqn:ex-critical-point-closure} goes in the opposite direction than what we need to establish PF.  
    \qed
\end{example}

\begin{proposition}\label{prop:deterministic-pf}
    There always exists a deterministic, proportionally fair allocation. Specifically, an allocation that deterministically selects an alternative which (a) maximizes the number of agents with zero cost and (b), subject to that, minimizes the product of costs of the agents with positive costs is guaranteed to be proportionally fair.
\end{proposition}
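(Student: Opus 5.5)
We verify the PF condition of \Cref{def:pf-bads} directly, using its deterministic form. Let $a^*$ be the selected alternative, let $x$ put all mass on $a^*$, and let $Z = \set{i \in N : c_i(a^*) = 0}$ and $S = N \setminus Z$. By the assumption $0 \notin \calC$ we have $S \neq \emptyset$. Since $x(a^*) = 1$ and all other alternatives have weight zero, the equality requirement only concerns $a^*$ itself. There $\frac{1}{n}\sum_{i \in N} \frac{c_i(a^*)}{c_i(a^*)} = 1$, using the convention $\nicefrac{0}{0} = 1$ for the agents in $Z$. It therefore remains to show $\frac{1}{n}\sum_{i\in N}\frac{c_i(a)}{c_i(a^*)} \ge 1$ for every other alternative $a \in A$.

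\textbf{Case 1: $c_i(a) > 0$ for some $i \in Z$.} Then that term equals $+\infty$ and the inequality holds trivially.

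\textbf{Case 2: $c_i(a) = 0$ for all $i \in Z$.} The agents in $Z$ contribute exactly $|Z|$ to the sum. For agents in $S$, the zero set of $a$ contains $Z$, and by property (a) it cannot be strictly larger. Hence $c_i(a) > 0$ for every $i \in S$, so $a$ zeroes exactly the agents in $Z$. Property (b) then gives $\prod_{i\in S} c_i(a) \ge \prod_{i \in S} c_i(a^*)$. Writing $r_i = c_i(a)/c_i(a^*) > 0$ for $i \in S$, we get $\prod_{i \in S} r_i \ge 1$, and by AM--GM $\sum_{i\in S} r_i \ge |S| \left(\prod_{i \in S} r_i\right)^{1/|S|} \ge |S|$. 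Summing the two groups gives $\sum_{i\in N} \frac{c_i(a)}{c_i(a^*)} \ge |Z| + |S| = n$, which is the required inequality.

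By the equivalence in \Cref{def:pf-bads}, it suffices to check deterministic alternatives, so this establishes PF. The only delicate step is Case 2: we must use lexicographic maximality (a) to rule out $a$ having zero cost for some agent in $S$. Otherwise a ratio of $0$ would break the AM--GM step. The remaining steps are routine bookkeeping with the ratio conventions. Existence of such an $a^*$ is immediate because $A$ is finite.
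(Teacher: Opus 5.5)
Your proof is correct and follows essentially the same route as the paper's. The paper argues by contradiction from an alternative violating the PF inequality, while you organize it as a direct case split, but the steps are the same: agents in $Z$ must keep zero cost (otherwise a term is $+\infty$), maximality (a) forces positive costs on all of $S$, and then (b) together with AM--GM gives $\sum_{i \in S} c_i(a)/c_i(a^*) \ge |S|$.
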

\begin{proof}
    Let $a$ be such a deterministically selected alternative. We need to show that it satisfies PF, which means that $\frac{1}{n}\sum_{i \in N}\frac{c_i(a')}{c_i(a)} \geq 1$ for all $a' \in A$ (see \Cref{def:pf-bads}).

    For contradiction, assume that there exists some $a' \in A$ such that $\sum_{i \in N}\frac{c_i(a')}{c_i(a)} < n$.
    Let $S = \set{i \in N : c_i(a) > 0}$. The standing assumption $0\notin\calC$ ensures that $S\neq\emptyset$. We must have $c_i(a') = 0$ for all $i \in N \setminus S$; otherwise we would have $\sum_{i \in N}\frac{c_i(a')}{c_i(a)} = +\infty$. It must also be the case that $c_i(a') > 0$ for all $i \in S$; otherwise, the fact that $a$ maximizes the number of agents with zero cost would be contradicted.
    Thus, by our convention that $\nicefrac{0}{0} = 1$, we must have that
    \begin{equation}
    	\label{eq:deterministic-pf-contradictor}
    	\sum_{i \in S}\frac{c_i(a')}{c_i(a)} < |S|.
    \end{equation}
    But by our definition of $a$, we must have $0 < \prod_{i \in S}c_i(a) \leq \prod_{i \in S}c_i(a')$. Dividing by $\prod_{i \in S}c_i(a)$ gives:
    \[1 \leq \prod_{i \in S}\frac{c_i(a')}{c_i(a)},\]
    and then using the AM--GM inequality we can derive that
    \[1 = 1^{1/|S|} \leq \left(\prod_{i \in S}\frac{c_i(a')}{c_i(a)}\right)^{1/|S|} \leq \frac{1}{|S|}\sum_{i \in S}\frac{c_i(a')}{c_i(a)},\]
    which contradicts \eqref{eq:deterministic-pf-contradictor}.
\end{proof}

The equivalence between PF and Lindahl allocations, the close relationship to critical points of the Nash product in \Cref{thm:bads-pf-equiv}, and the guaranteed existence of a deterministic solution from \Cref{prop:deterministic-pf} are particularly appealing. However, previous examples have already established that an allocation simply being PF is not an indication of its fairness. Recall from \Cref{ex:private-bads-pf} that in private-induced instances, all allocations corresponding to CEEI give all agents strictly positive costs, while there may be PF allocations that give some agents $0$ cost at the expense of other agents.

\subsection{Strictly Positive Allocations}\label{sec:strictly-positive}
The above discussion leads us to a refinement of the previous theorem based on ensuring that each agent receives strictly positive costs --- we term such allocations \emph{strictly positive allocations}.

\begin{definition}[Zero-Respecting Lindahl Equilibrium]
    A Lindahl equilibrium $(x,p)$ is \emph{zero-respecting} if for all $i \in N$ and $a \in A$, $c_i(a) = 0$ implies that $p_i(a) = 0$.
\end{definition}

\begin{theorem}\label{thm:bads-pf-equiv}
    For any $x \in \Delta(A)$, the following are equivalent.
    \begin{enumerate}
        \item $c(x) \gg 0$ and $x$ is proportionally fair (PF). \label{item:bpe-pf}
        \item There exist personalized prices $p \in \R_{\ge 0}^{N \times A}$ for which $(x,p)$ is a zero-respecting Lindahl equilibrium. \label{item:bpe-lindahl}
        \item $c(x) \gg 0$ and $c(x)$ is a critical point of the Nash welfare $f_{\operatorname{Nash}} : \R_{\ge 0}^N \to \R_{\ge 0}$ given by $f(y) = \prod_{i \in N} y_i$ in $\calC_{\geq}$. \label{item:bpe-critical}
    \end{enumerate}
\end{theorem}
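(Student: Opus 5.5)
The plan is to derive the equivalence of (\ref{item:bpe-pf}) and (\ref{item:bpe-critical}) from \Cref{lem:pf-characterization}, and to prove the equivalences with (\ref{item:bpe-lindahl}) directly by a price construction.

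\emph{(\ref{item:bpe-pf}) $\Leftrightarrow$ (\ref{item:bpe-critical}).} When $c(x)\gg 0$, the set $S=\set{i : c_i(x)>0}$ equals $N$. Then $\calC^S_{\ge}=\calC_{\ge}$ and $f^S_{\Nash}=f_{\Nash}$. So \Cref{lem:pf-characterization} says directly that $x$ is PF if and only if $c(x)\in\Gamma(f_{\Nash},\calC_{\ge})$, and both conditions carry the hypothesis $c(x)\gg0$.

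\emph{(\ref{item:bpe-pf}) $\Rightarrow$ (\ref{item:bpe-lindahl}).} Since $c(x)\gg 0$, only the first case of the price construction in the PF $\Rightarrow$ Lindahl lemma applies: $p_i(a)=\frac{1}{n}\frac{c_i(a)}{c_i(x)}$ for all $i,a$. That lemma already shows $(x,p)$ is a Lindahl equilibrium. The prices are zero-respecting by construction, since $c_i(a)=0$ gives $p_i(a)=0$.

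\emph{(\ref{item:bpe-lindahl}) $\Rightarrow$ (\ref{item:bpe-pf}).} By the first lemma, any Lindahl equilibrium gives a PF allocation, so it remains to show $c(x)\gg0$. Suppose $c_i(x)=0$ for some $i$. Then every $a\in\supp(x)$ has $c_i(a)=0$, and zero-respecting forces $p_i(a)=0$. Hence $p_i\cdot x=0<\nicefrac1n$, contradicting the spending constraint. The hard-looking part of the statement thus reduces to this two-line argument.

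The main obstacle is really just bookkeeping. One must check that the lemma's constructed prices, in the all-positive case, meet the strengthened cost-minimization clause. This holds vacuously, because $c_i(x)>0$ for all $i$.
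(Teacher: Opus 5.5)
Your proof is correct and follows the paper's structure: (1)$\Leftrightarrow$(3) comes from the earlier characterization with $S=N$, and (2)$\Rightarrow$(1) uses exactly the paper's spending-constraint argument. The only difference is in (1)$\Rightarrow$(2): you note that the explicit prices $p_i(a)=\frac{1}{n}\frac{c_i(a)}{c_i(x)}$ are visibly zero-respecting, whereas the paper proves the slightly stronger fact that \emph{every} Lindahl equilibrium with $c(x)\gg 0$ is zero-respecting (otherwise agent $i$ could spend its budget on an alternative with $c_i(a)=0<p_i(a)$ and incur zero cost).
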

\begin{proof}
\proofstep{(\ref{item:bpe-pf}) $\Leftrightarrow$ (\ref{item:bpe-critical}).}
    This equivalence clearly follows from \Cref{thm:bads-pf-equiv-nonzero}, as we are simply imposing the restriction of $c(x) \gg 0$ for both conditions.

    \proofstep{(\ref{item:bpe-pf}) $\Leftrightarrow$ (\ref{item:bpe-lindahl}).}
    To complete the proof, we will show that a Lindahl equilibrium $(x,p)$ will be zero-respecting if and only if $c_i(x) > 0$ for all $i \in N$.

    To see this, note that for any Lindahl equilibrium $(x,p)$, and for any agent $i \in N$, if $c_i(x) > 0$, then it cannot be the case that there exists some $a \in A$ with $c_i(a) = 0$ and $p_{i}(a) > 0$. If this were the case, then $i$ would have a pain-per-buck ratio of $0$ for $a$, and thus, any lottery whose support does not exclusively include bads that have $c_i(a) = 0$ would violate the cost-minimizing condition of the Lindahl equilibrium.

    Conversely, for any zero-respecting Lindahl equilibrium $(x,p)$, we must have that $c_i(x) > 0$ for all $i \in N$. If this were not true, and $c_i(x) = 0$ for some $i \in N$, then all the bads $a$ in the support of $x$ would have to have $c_i(a) = 0$. By the zero-respecting property, we have $p_{i}(a) = 0$ for all such goods, and thus $(x,p)$ would violate the spending constraint of the Lindahl equilibrium with respect to $i$.
\end{proof}

Restricting our attention to the set of strictly positive PF allocations, desirable fairness properties begin to emerge. First, we show that strictly positive PF allocations lie in the BE core, something that certainly was not true for all PF allocations.

\begin{theorem}\label{thm:positive-pf-be-core}
    Every strictly positive PF allocation is in the bounded-externality core.
\end{theorem}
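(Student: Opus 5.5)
We argue by contradiction, using the PF inequality directly. Let $x$ be PF with $c(x) \gg 0$, and suppose some $\emptyset \neq S \subseteq N$, $y \in \Delta(A)$ and $\alpha \ge 0$ witness a BE core violation. Since every $c_i(x) > 0$, the convention $\nicefrac{0}{0}$ never arises. PF applied to the lottery $y$ then gives
\[
\sum_{i \in N} \frac{c_i(y)}{c_i(x)} \ge n .
\]
The plan is to bound the left-hand side from above using the two conditions of the violation and obtain a strict inequality $< n$.

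For outsiders $i \in N \setminus S$, condition~(\ref{item:be-core-outsiders}) gives $\frac{c_i(y)}{c_i(x)} \le \alpha$. Summing over outsiders yields at most $|N\setminus S| \cdot \alpha$.

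For deviators $i \in S$, condition~(\ref{item:be-core-deviators}) gives $\frac{c_i(y)}{c_i(x)} \le \frac{n - |N\setminus S|\alpha}{|S|}$. Summing over $S$ yields at most $n - |N \setminus S|\alpha$. Moreover, at least one of these inequalities is strict, and dividing a strict inequality by $c_i(x) > 0$ keeps it strict, so the sum over $S$ is strictly less than $n - |N\setminus S|\alpha$.

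Adding the two bounds gives $\sum_{i\in N} \frac{c_i(y)}{c_i(x)} < n$, which contradicts PF.

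The only delicate point is where strict positivity is used. It is needed to divide by $c_i(x)$ for every agent, including outsiders, and to transfer strictness through the division. Without it, an outsider with $c_i(x)=0$ would make condition~(\ref{item:be-core-outsiders}) force $c_i(y)=0$, but then the PF convention counts that agent's ratio as $1$ rather than $\le \alpha$. This is exactly the failure illustrated in the text by the two-agent single-bad example. One should also note that if $n - |N\setminus S|\alpha < 0$, condition~(\ref{item:be-core-deviators}) is infeasible because costs are nonnegative, so no separate case analysis on $\alpha$ is needed.
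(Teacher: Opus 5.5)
Your proposal is correct and follows the paper's proof essentially verbatim: assume a BE core violation $(S,y,\alpha)$, divide both conditions by $c_i(x)>0$, and sum to get $\sum_{i\in N} \frac{c_i(y)}{c_i(x)} < |N\setminus S|\alpha + (n-|N\setminus S|\alpha) = n$, with strictness coming from one of the deviators' inequalities, which contradicts PF. Your added remarks, on why strict positivity is needed and on why no case split on the sign of $n-|N\setminus S|\alpha$ is required, are accurate but not needed for the argument.
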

\begin{proof}
    For contradiction, assume this is false. There exists an allocation $x \in \Delta(A)$ such that $c_i(x) > 0$ for all $i \in N$ and $x$ is PF, but not in the bounded-externality core.

    Thus, there exists some $S \subseteq N, y \in \Delta(A), \alpha \in \R_{\geq 0}$ such that:
    \begin{enumerate}
        \item $c_i(y) \le \alpha \cdot c_i(x)$ for all $i \in N\setminus S$; and \label{item:be-core-violation-outsiders}
        \item $c_i(y) \le \frac{|N|-|N\setminus S|\cdot \alpha}{|S|} \cdot c_i(x)$ for all $i \in S$, and at least one of the inequalities is strict. \label{item:be-core-violation-deviators}
    \end{enumerate}
    These inequalities straightforwardly imply the following:
    \[\sum\nolimits_{i \in N}\frac{c_i(y)}{c_i(x)} = \sum\nolimits_{i \in N \setminus S}\frac{c_i(y)}{c_i(x)} + \sum\nolimits_{i \in S}\frac{c_i(y)}{c_i(x)} < 
    |N \setminus S| \cdot \alpha + (n - |N \setminus S| \cdot \alpha) = n \]
    The strictness of the inequality comes from the strictness of one of the inequalities from condition~(\ref{item:be-core-violation-deviators}). This implies that $x$ is not a PF allocation, a contradiction, which completes the proof.
\end{proof}

As mentioned earlier, under private-induced public bads instances, BE core implies the core for private bads. This means that in all such instances, if $y$ is a strictly positive PF allocation, and $x$ is a private bads allocation that induced $y$, then $x$ will be in the core.
\citet{KP25} asked whether there are connections between Lindahl equilibrium and CEEI.
We can show that enforcing strict positivity is enough to establish a perfect correspondence between PF allocations of public bads and CEEI allocations of private bads in private-induced instances. This gives a connection between Lindahl equilibria and CEEI that is quite different from the duality connection recently established by \citet{TZ25}.

\begin{theorem}\label{thm:lindahl-fisher-equiv}
    Let $x$ be any allocation in a private bads instance $(N,B,c)$ and $y$ be its induced allocation in the induced public bads instance $\calR(N,B,c)$. Then, $x$ is a CEEI allocation for $(N,B,c)$ if and only if $y$ is a strictly positive proportionally fair allocation for $\calR(N,B,c)$.
\end{theorem}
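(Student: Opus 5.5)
Both directions go through the linear description of PF in \Cref{def:pf-bads}: $x$ is PF iff $\frac1n\sum_i c_i(a)/c_i(x)\ge 1$ for every alternative $a$, with equality on $\supp(x)$. Since the induced $y$ gives every agent the same cost as $x$, i.e.\ $c_i(y)=c_i(x_i)$, the statement reduces to comparing CEEI conditions with this inequality system on the induced instance $\calR(N,B,c)$. Alternatives there are integral allocations $z$, and $c_i(z)=\sum_{b}c_i(b)z_i(b)$.

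\textbf{CEEI $\Rightarrow$ strictly positive PF.} Let $(x,p)$ be a CEEI, and let $\beta_i$ be agent $i$'s minimum pain-per-buck. Since $p\cdot x_i=1/n$, we have $x_i\neq 0$; and since all $c_i(b)>0$, this gives $c_i(x_i)=\beta_i/n>0$. Hence $y$ is strictly positive. For any integral allocation $z$,
\[
\frac1n\sum_i \frac{c_i(z)}{c_i(x_i)}=\sum_i\sum_b \frac{c_i(b)}{\beta_i}z_i(b)\ge \sum_i\sum_b p(b)z_i(b)=\sum_b p(b)=\sum_i p\cdot x_i=1,
\]
using $\sum_i z_i(b)=1$. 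Equality holds when $z$ assigns each bad only to agents for whom it is MPB. Every $z\in\supp(y)$ has this property, because $z_i(b)=1$ for some $z\in\supp(y)$ forces $x_i(b)>0$. So $y$ is PF. (Alternatively, one could invoke \Cref{thm:private-bads}, part 2, since PF is a condition on costs only.)

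\textbf{Strictly positive PF $\Rightarrow$ CEEI.} Let $y$ be PF with $c_i(x_i)>0$ for all $i$. Define prices $p(b)=\min_i \frac{c_i(b)}{n\,c_i(x_i)}>0$. The MPB ratio for agent $i$ is then $c_i(b)/p(b)\ge n c_i(x_i)$, with equality exactly when $i$ attains the minimum for $b$.

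Apply PF to the integral alternative $z^*$ that gives each bad to an arg-min agent. This gives
\[
1\le \frac1n\sum_i c_i(z^*)/c_i(x_i)=\sum_b p(b).
\]
Next, for every $z$ in $\supp(y)$, the equality case of PF gives $1=\sum_b\sum_i \frac{c_i(b)}{nc_i(x_i)}z_i(b)\ge \sum_b p(b)$. Combining the two, $\sum_b p(b)=1$, and every $z$ in the support assigns each bad only to arg-min agents. Therefore $x_i(b)>0$ implies that $b$ is MPB for $i$ with ratio $nc_i(x_i)$. It follows that
\[
p\cdot x_i=\sum_b \frac{c_i(b)x_i(b)}{nc_i(x_i)}=\frac1n,
\]
so budgets are exhausted and $(x,p)$ is a CEEI.

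\textbf{Main obstacle.} The difficulty is in the converse direction: extracting the MPB property for the fractional $x$ from PF of the lottery $y$. The key step is the sandwich argument, which pins $\sum_b p(b)=1$ and forces every support allocation to be "min-assigning". It requires the cheapest-assignment alternative $z^*$ to be available in $\calR$; this holds because every integral allocation is an alternative there. The remaining details are bookkeeping: that $x_i(b)>0$ iff some $z\in\supp(y)$ has $z_i(b)=1$, and that strict positivity holds so all ratios are finite.
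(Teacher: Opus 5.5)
Your proof is correct and follows essentially the same route as the paper. Your prices $p(b)=\min_i c_i(b)/(n\,c_i(x_i))$ coincide with the paper's $\min_i c_i(b)/\text{mpb}_i$, because $\text{mpb}_i = n\,c_i(y)$ in any zero-respecting Lindahl equilibrium. In both arguments the crux is that every allocation in $\supp(y)$ assigns each bad only to an arg-min agent, from which minimum pain-per-buck and budget exhaustion follow.

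The differences are cosmetic. You work with the PF inequalities directly and get the crux from one global comparison with the cheapest assignment $z^*$, where the paper uses a single-bad swap against profit maximization. You also verify the forward direction by hand instead of citing \citet{BMSY17}.
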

\begin{proof}
    We will start with the forward direction, which follows easily from past results. If $(x,p)$ is a CEEI for the private bads instance $(N,B,c)$, and $y$ is an allocation over $\calR(N,B,c)$ that is induced by $x$, then there exists personalized prices $q$ such that $(y,q)$ is a zero-respecting Lindahl equilibrium for $\calR(N,B,c)$.
By construction, the private bads instance $(N,B,c)$ and its corresponding public bads instance $\calR(N,B,c)$ will have an identical cost space $\calC$. We know from \citet{BMSY17} that all CEEIs will be strictly positive PF solutions with respect to the cost space of the private bads instance. Therefore, since $x$ and $y$ have the same cost vector, $y$ must also be strictly positive PF with respect to the cost space $\calC$.
From \Cref{thm:bads-pf-equiv}, we know that this implies $y$ is a zero-respecting Lindahl allocation.

    Now we prove the other direction. If $(y,q)$ is a zero-respecting Lindahl equilibrium in $\calR(N,B,c)$, and $x$ is a private allocation over $(N,B,c)$ that induces $y$, then there exists prices $p$ such that $(x,p)$ is a CEEI.
Let $\calA$ be the set of all $n^m$ possible private integral allocations $z$ over $(N,B,c)$.
For each $i \in N$, let $\text{mpb}_i = \min_{z \in \calA : q_i(z) > 0}\frac{c_i(z)}{q_i(z)}$ be the minimum pain-per-buck value of agent $i$ under the equilibrium $(y,q)$.
    Thus, for all $z \in \calA$ with $q_i(z) > 0$, we have that $\frac{c_i(z)}{q_i(z)} \geq \text{mpb}_i$, and if $y(z) > 0$, then either $\frac{c_i(z)}{q_i(z)} = \text{mpb}_i$,  or $c_i(z) = 0$ and $q_i(z) = 0$.
For each $b \in B$, define the price $p(b) = \min_{i \in N}\frac{c_i(b)}{\text{mpb}_i}$. We will show that $(x,p)$ is a CEEI.

    First note the following fact: for each $z \in \mathcal{A}$ such that $y(z) > 0$, $z_i(b) = 1$ for any $i \in N$ implies that $i \in \argmin_{i \in N}\frac{c_i(b)}{\text{mpb}_i}$. For contradiction, assume this fact is not true, so there is some $z \in \mathcal{A}$ with $y(z) > 0$, and some $b^* \in B$ and $i^*, i^{**} \in N$ such that $z_{i^*}(b^*) = 1$ but $\frac{c_{i^*}(b^*)}{\text{mpb}_{i^{*}}} > \frac{c_{i^{**}}(b^*)}{\text{mpb}_{i^{**}}}$. 
    
    Consider the allocation $z'$, where all goods except for $b^*$ are allocated the same as in $z$, but $b^*$ is allocated to $i^{**}$ rather than $i^*$.
    Since $y(z) > 0$, $z$ must be profit maximizing among all private integral allocations in $\mathcal{A}$. Specifically, we must have $1 = \sum_{i \in N}q_{i}(z) \leq \sum_{i \in N}q_{i}(z')$. From $y(z) > 0$, it also follows that for all $i \in N$ such that $q_{i}(z) > 0$, we have
    \[\frac{c_{i}(z)}{q_{i}(z)} = \frac{\sum_{b : z_i(b) = 1}c_{i}(b)}{q_{i}(z)} = \text{mpb}_i.\]
    Thus we must have:
    \[\sum_{i \in N}\sum_{b: z_i(b) = 1}\frac{c_{i}(b)}{\text{mpb}_i} \leq \sum_{i \in N}\sum_{b \in z'_i}\frac{c_{i}(b)}{\text{mpb}_i}.\]
    However, because $z'$ is just $z$ with $b^*$ reallocated, we can say that
    \[\sum_{i \in N}\sum_{b : z'_i(b) = 1}\frac{c_{i}(b)}{\text{mpb}_i} = \sum_{i \in N}\sum_{b : z_i(b) = 1}\frac{c_{i}(b)}{\text{mpb}_i} - \frac{c_{i^*}(b^*)}{\text{mpb}_{i^*}} + \frac{c_{i^{**}}(b^*)}{\text{mpb}_{i^{**}}}.\]
    Plugging this into the profit maximization inequality gives $\frac{c_{i^*}(b^*)}{\text{mpb}_{i^*}} \leq \frac{c_{i^{**}}(b^*)}{\text{mpb}_{i^{**}}}$, a contradiction, which proves the fact.

    Now we will show that $(x,p)$ is a CEEI. The first step is to show that each agent is spending their entire budget, i.e., for each $i \in N, \sum_{b \in B}x_i(b)p(b) = \nicefrac{1}{n}$. From the fact that $y$ is induced by $x$, we have that $x_{i}(b) = \sum_{z \in \mathcal{A}}y(z)z_{i}(b)$ for all $i \in N, b \in B$. This gives us that for all $i \in N$,
\[ \sum_{b \in B}x_{i}(b)p(b) = \sum_{b \in B}p(b)\sum_{z \in \mathcal{A}}y(z)z_{i}(b) = \sum_{z \in \mathcal{A}}y(z)\sum_{b \in B}z_{i}(b)p(b).\]
    We can show that for each $i \in N$ and $z \in \mathcal{A}$ with $y(z) > 0$, we have $q_{i}(z) = \sum_{b \in B}z_{i}(b)p(b) = \sum_{b : z_i(b) = 1}p(b)$. This is due to the fact that for each $b$ with $z_i(b) = 1$, we have $i \in \argmin_{i \in N}\frac{c_{i}(b)}{\text{mpb}_i}$. Using this together with the definition of the private goods price vector $p$, for each $i \in N$, we have
\[\sum_{b : z_i(b) = 1}p(b) = \sum_{b : z_i(b) = 1}\frac{c_{i}(b)}{\text{mpb}_i} = \frac{c_{i}(z)}{\text{mpb}_i} = q_{i}(z).\]
    Plugging this into the previously derived equation gets us:
\[\sum_{b \in B}x_{i}(b)p(b) = \sum_{z \in \mathcal{A}}y(z)\sum_{b \in B}z_{i}(b)p(b) = \sum_{z \in \mathcal{A}}y(z)q_{i}(z) = \nicefrac{1}{n},\]
    where the last equality follows from the fact that $(y,q)$ is a zero-respecting Lindahl equilibrium, and thus each agent pays exactly $\nicefrac{1}{n}$ for $y$.

    As a final step, we must prove that for each $i \in N$ and $b \in B$ with $x_{i}(b) > 0$, $b$ minimizes $i$'s pain-per-buck in the private bads instance.
    For contradiction, assume this is false, so there is some $i \in N$ and $b, b' \in B$ such that $x_{i}(b) > 0$ and $\frac{c_{i}(b)}{p(b)} > \frac{c_{i}(b')}{p(b')}$. Because $x_{i}(b) > 0$, there must be some $z \in \mathcal{A}$ such that $y(z) > 0$ and $z_i(b) = 1$. This implies that $i \in \argmin_{i \in N}\frac{c_{i}(b)}{\text{mpb}_i}$, thus $p(b) = \frac{c_{i}(b)}{\text{mpb}_i}$ and $\text{mpb}_i = \frac{c_{i}(b)}{p(b)}$.
    Thus we must have $\text{mpb}_i > \frac{c_{i}(b')}{p(b')}$. But this would imply that $p(b') > \frac{c_{i}(b')}{\text{mpb}_i}$, a contradiction to how we define $p$. This completes the proof.
\end{proof}

These two theorems initially seem very promising. Is ``strictly positive PF'' the ultimate fair solution concept for allocation of public bads? For private-induced instances, it coincides with CEEI, which is arguably the ultimate solution. However, there are very simple non-private-induced instances, in which strictly positive PF exhibits rather undesirable behaviors. 

\myparagraph{Problem of non-existence.} To begin with, strictly positive PF allocations may not always exist.

\begin{example}[Strictly Positive PF Allocations May Not Exist, \citealp{mariotti2005nash}]\label{ex:positive-pf-nonexistence}
    \begin{wrapstuff}[r]
        \centering
        \small
        \begin{tabular}{ccc}
        \toprule
         & Bad $b_1$ & Bad $b_2$ \\
        \midrule
        Agent 1 & $1$ & $0$ \\
        Agent 2 & $2$ & $3$ \\
        \bottomrule
        \end{tabular}
    \end{wrapstuff}Consider the instance shown on the right. Note that there exists a Pareto optimal solution (namely, picking $b_1$ deterministically), which induces a strictly positive cost vector. Despite this, we can show that the unique PF lottery $x^*$ puts probability $1$ on $b_2$ and yields zero cost to agent $1$:
It is easy to check that $x^*$ is PF because $\frac{1}{2} \cdot (\frac{c_1(b_1)}{c_1(b_2)} + \frac{c_2(b_1)}{c_2(b_2)}) = \frac{1}{2} \cdot (+\infty + \frac{2}{3}) > 1$. On the other hand, for any other lottery $x$ that puts probability $\alpha \in (0,1]$ on $b_1$ and $1 - \alpha$ on $b_2$, we have
$\frac{1}{2} \big(\frac{c_1(x^*)}{c_1(x)} + \frac{c_2(x^*)}{c_2(x)}\big) = \frac{1}{2} \cdot \frac{3}{3-\alpha} < 1.
    $
\qed
\end{example}

\myparagraph{Problem of completion core failure.} Even when strictly positive PF allocations exist, they do not lie in the completion core, which can lead to very counterintuitive behavior on extremely simple instances.

\begin{example}
	\begin{wrapstuff}[r]
		\centering
		\small
		\begin{tabular}{ccc}
			\toprule
			& Bad 1 & Bad 2 \\
			\midrule
			Agent 1 & $1$ & $0$ \\
			Agent 2 & $1$ & $0$ \\
			Agent 3 & $0$ & $1$ \\
			\bottomrule
		\end{tabular}
	\end{wrapstuff}On the instance shown on the right, it can be checked that the unique strictly positive PF allocation places probability $\nicefrac{2}{3}$ on Bad 1 and $\nicefrac{1}{3}$ on Bad 2. 
    This is in contrast to the unique allocation in the completion core, which places probability $\nicefrac{1}{3}$ on Bad 1 and $\nicefrac{2}{3}$ on Bad 2. The latter allocation makes much more sense for this extremely simple binary instance, as it is what would be achieved if we gave each agent $\nicefrac{1}{3}$ probability mass and allowed them to place it on the bad they did not disapprove of.
    
    In fact, if one extrapolates this example to $k$ agents with cost $1$ for bad $1$ and $0$ for bad $2$, and still a single agent with cost $0$ for bad $1$ and $1$ for bad $2$, the unique strictly positive PF allocation will place probability $\nicefrac{k}{k+1}$ on bad $1$ and $\nicefrac{1}{k+1}$ on bad $2$. Thus, the larger the cohesive group of agents preferring bad $2$ becomes, the \emph{weaker} it becomes, in contrast to the traditional proportional representation principle, which demands that cohesive groups get stronger as they get larger. In \Cref{app:impossibilities}, we provide a discussion of a fairness notion called \emph{participation} that captures this specific undesirable behavior.
    \qed
\end{example}

\myparagraph{Problem of IFS failure.} In \Cref{ex:be-violate-ifs}, we demonstrated that allocations in the BE core can fail to achieve even the weakest form of IFS, which can be guaranteed for public bads. Since all strictly positive PF allocations lie in the BE core, they fail to achieve this simple fairness guarantee as well. 

Given these undesirable behaviors of strictly positive PF allocations, why are they so compelling in private-induced instances? What structural property of such instances makes them magically achieve the fairness properties of CEEI allocations, such as even sIFS? Can we isolate this structural property and extend the positive results to a broader class of public bads instances? The answer is yes!

\subsection{Axes-Cutting Instances}\label{sec:axes-cutting}

We now introduce a special class of instances, generalizing private-induced instances, in which strictly positive PF allocations turn out to be well-behaved.

\begin{definition}[Axes-Cutting Instance, \citealp{mariotti2005nash}]
    A public bads instance $(N,A,c)$ is an \emph{axes-cutting} instance if, for every agent $i \in N$, there exists an alternative $a_i \in A$ such that $c_{i}(a_i) > 0$ and $c_{i'}(a_i) = 0$ for all $i' \in N \setminus \set{i}$.
\end{definition}

In words, axes-cutting requires that for each agent, there exists an option where they become a martyr, bearing all the cost. The name comes from the fact that the feasible cost set $\calC$ of such an instance cuts (or rather touches) the $n$ axes at $c(a_1),\ldots,c(a_n)$. 

This restriction was originally proposed by \citet{mariotti2005nash}, who studied a related version of public selection where a cost vector must be selected from an arbitrary convex cost space $\calC$. They studied a solution concept called the ``Nash rationing solution'', which can be shown to be equivalent to strictly positive PF solutions in our setting. Facing a similar problem of non-existence in general instances, they proposed the axes-cutting restriction as a means to guarantee existence. Specifically, they show that maximizing the Nash product of costs subject to Pareto optimality, a solution concept that is guaranteed to be CEEI on private bads instances, turns out to be strictly positive PF on all axes-cutting instances. For completeness, we provide a direct proof of the following result without establishing equivalence with their definition. 

\begin{theorem}[\citealp{mariotti2005nash}]\label{thm:axes-cut-existence}
    For any axes-cutting public bads instance, a strictly positive PF allocation always exists. Specifically, there exists a strictly positive PF allocation whose cost vector maximizes the Nash product of costs over the Pareto frontier.
\end{theorem}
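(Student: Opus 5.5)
We take $z^*$ to be a maximizer of $\Prod(z)=\prod_{i\in N} z_i$ over the Pareto frontier $\calC^{\PE}$, and $x^*$ a lottery with $c(x^*)=z^*$. We then show that $x^*$ is strictly positive and PF by verifying the first-order condition: $\sum_i c_i(a)/z^*_i \ge n$ for every $a\in A$, with equality on the support. By \Cref{thm:bads-pf-equiv}, this is equivalent to $z^*$ being a critical point of the Nash product in $\calCup$ with $z^*\gg 0$.

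\textbf{Existence.} The frontier $\calC^{\PE}$ of a polytope is a finite union of faces, so it is compact. Hence the maximum of $\Prod$ over it exists.

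\textbf{Strict positivity.} We show the maximum value is positive by exhibiting one frontier point with all coordinates positive.
\begin{itemize}
\item Take the uniform lottery $u$ over the martyr alternatives $a_1,\dots,a_n$. Then $c_i(u)=c_i(a_i)/n>0$ for every $i$.
\item Let $w$ be any Pareto improvement of $c(u)$ that lies on $\calC^{\PE}$. Such a $w$ exists by minimizing the linear function $\sum_i z_i$ over $\{z\in\calC: z\le c(u)\}$.
\item If $w_i=0$ for some $i$, compare with what happens in private-induced instances; we expect $w_i>0$ is not forced in general.
\end{itemize}
So instead we argue directly. Suppose $z$ is on the frontier with $z_i=0$ for some $i$. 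Mixing $z$ with a small weight $\epsilon$ on the martyr alternatives $a_j$, for $j$ with $z_j=0$, yields a point that may no longer be Pareto optimal. Hence our cleaner plan is to work with the problem $\max \sum_i \log z_i$ over $\calC^{\PE}$, restricted to positive points. We need some positive frontier point to exist, and we claim any Pareto improvement $w$ of $c(u)$ has a Pareto-optimal improvement which stays positive. This requires care, and we flag it as a subtle point. Possibly it is simplest to note that if every frontier point had a zero coordinate, then for each $z\in\calC^{\PE}$ some martyr $a_i$ with $z_i=0$... this needs working out.

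\textbf{PF via a local perturbation, the main step.} Fix $z^*\gg0$ maximizing $\Prod$ on $\calC^{\PE}$. Suppose some $a$ has $\sum_i c_i(a)/z^*_i<n$. Then moving from $x^*$ toward $a$, via $x_\epsilon=(1-\epsilon)x^*+\epsilon a$, increases $\sum_i\log c_i$ to first order. The difficulty is that $c(x_\epsilon)$ need not be Pareto optimal. The key idea is to use axes-cutting to repair this: replace $c(x_\epsilon)$ by a Pareto-optimal point dominating it, i.e., with weakly smaller costs. But smaller costs decrease the product, which goes the wrong way.

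So we flip the perspective. Instead, we consider deviations along directions that stay on the frontier and that mix in martyr alternatives $a_j$. Adding mass $\delta$ to $a_j$ raises only $z_j$ and scales the others by $(1-\delta)$. Combining a move toward $a$ with appropriate martyr moves, we can bring every coordinate up to at least $(1-\epsilon)z^*$ plus first-order gains while keeping the point undominated. The argument that the combination is Pareto optimal would use that $z^*$ is a local maximizer within the frontier face.

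Alternatively, we prove the dual statement. Since $z^*$ maximizes $\Prod$ over $\calC^{\PE}$, the hyperplane $\{z:\sum_i z_i/z^*_i=n\}$ supports the region $\{\Prod\ge\Prod(z^*)\}$. We then show that $\calC$ lies on the side $\sum_i z_i/z^*_i\ge n$. Any point $z\in\calC$ with $\sum_i z_i/z_i^*<n$ gives a segment from $z^*$ into the half-space. We then push this segment up by martyr mixtures until it is Pareto optimal. The expected main obstacle is exactly this step: showing that the pushed-up point remains on $\calC^{\PE}$ and has product exceeding $\Prod(z^*)$. Axes-cutting is what makes the pushed-up points available, since martyr directions increase a single coordinate.
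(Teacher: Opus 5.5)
Your proposal has a genuine gap: the two steps that carry the content are left open. You never establish that $\max_{\calC^{\PE}}\Prod>0$, since the martyr-mixture attempt breaks off. More importantly, the main step is not proved: that a maximizer $z^*\gg 0$ of $\Prod$ on $\calC^{\PE}$ satisfies $\sum_i c_i(a)/z^*_i\ge n$ for all $a$.

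The local heuristic you start from also has the wrong sign. The derivative of $\sum_i\log\bigl((1-\epsilon)z^*_i+\epsilon c_i(a)\bigr)$ at $\epsilon=0$ is $\sum_i c_i(a)/z^*_i-n<0$. So a PF-violating alternative yields nearby feasible points of \emph{smaller} product, and passing to dominating frontier points lowers it further. No contradiction with maximality arises.

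More fundamentally, no argument that only uses ``$z^*\gg0$ (locally) maximizes $\Prod$ on $\calC^{\PE}$'' can succeed, because that implication is false without axes-cutting. In \Cref{ex:positive-pf-nonexistence}, all of $\calC$ is Pareto optimal and $(1,2)$ maximizes $\Prod$ on it, yet $(1,2)$ is not PF. So axes-cutting must enter globally, and ``pushing up by martyr mixtures'' does not say how. The natural repair is to take a high-product point outside $\calC_{\ge}$ and move it upward until it enters $\calC_{\ge}$. That only lands on a weakly Pareto optimal boundary point, and replacing it by a dominating point of $\calC^{\PE}$ again decreases the product.

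The missing idea is to work with the convex, upward-closed set $\calC_{\ge}$ and the convex superlevel sets $D_\lambda=\{z\ge 0:\Prod(z)\ge\lambda\}$, rather than with the non-convex frontier. The paper's argument runs as follows.
\begin{enumerate}
\item Axes-cutting plus AM--GM gives $D_\lambda\subseteq\calC_{\ge}$ for large $\lambda$ (\Cref{lem:Dlambda-in-Cge}).
\item The assumption $0\notin\calC$ gives $\lambda^*:=\inf\{\lambda:D_\lambda\subseteq\calC_{\ge}\}>0$, and at this level $D_{\lambda^*}$ touches $\partial\calC_{\ge}$ at some $z^*\gg0$.
\item A nonnegative supporting normal $w^*$ of $\calC_{\ge}$ at $z^*$ (\Cref{lem:nonneg-normal}) makes $z^*$ minimize $w^*\cdot z$ over $D_{\lambda^*}\subseteq\calC_{\ge}$. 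This forces $w^*\propto(1/z^*_i)_{i}\gg0$.
\item This single strictly positive normal gives $z^*\in\calC$, $z^*\in\calC^{\PE}$, and the critical-point condition in $\calC_{\ge}$. Hence $z^*$ is strictly positive PF by \Cref{lem:pf-characterization}.
\item Any frontier point with product above $\lambda^*$ would lie in the interior of some $D_\lambda\subseteq\calC_{\ge}$ and so be strictly dominated. Hence $z^*$ maximizes $\Prod$ on $\calC^{\PE}$.
\end{enumerate}

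Your target claim, that every maximizer of $\Prod$ on $\calC^{\PE}$ is PF, is in fact true for axes-cutting instances. But proving it amounts to showing $D_\mu\subseteq\calC_{\ge}$ for $\mu=\max_{\calC^{\PE}}\Prod$, i.e.\ $\mu=\lambda^*$, which is precisely the argument above. Your compactness claim for $\calC^{\PE}$ is correct for polytopes, but the paper's construction does not need it.
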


The following lemmas establish the proof.

\begin{lemma}[Convexity of Nash Superlevel Sets]
For $\lambda\ge 0$, define the Nash superlevel set
\begin{equation}
D_\lambda \triangleq \{z\in \R_{\ge 0}^N : f_{\Nash}(z)\ge \lambda\}.
\label{eq:def-Dlambda}
\end{equation} 
For every $\lambda>0$, the set $D_\lambda$ is closed and convex, and $D_\lambda\subseteq \R_{>0}^N$.
\end{lemma}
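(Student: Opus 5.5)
The plan is to verify the three claims in the order positivity, closedness, convexity, the last being the only substantive one.

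\emph{Positivity.} Take $z\in D_\lambda$ with $\lambda>0$. Then $\prod_{i\in N} z_i\ge\lambda>0$, so no coordinate can vanish. Since $z\in\R_{\ge 0}^N$ to begin with, this forces $z\in\R_{>0}^N$.

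\emph{Closedness.} The map $f_{\Nash}(z)=\prod_{i\in N} z_i$ is a polynomial and hence continuous on $\R^N$. Therefore $D_\lambda=\R_{\ge 0}^N\cap f_{\Nash}^{-1}([\lambda,\infty))$ is an intersection of two closed sets, and so is closed.

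\emph{Convexity.} This is the main step. I would pass to logarithms, which is legitimate because $D_\lambda\subseteq\R_{>0}^N$. On the open positive orthant, $D_\lambda$ coincides with
\[
\set*{z\in\R_{>0}^N : \textstyle\sum_{i\in N}\log z_i\ge\log\lambda}.
\]
The function $g(z)=\sum_{i\in N}\log z_i$ is concave on $\R_{>0}^N$, being a sum of concave functions of single coordinates. A superlevel set of a concave function on a convex domain is convex. Explicitly, fix $z,z'\in D_\lambda$ and $t\in[0,1]$. The point $tz+(1-t)z'$ lies in $\R_{>0}^N$, and
\[
g(tz+(1-t)z')\ge t\,g(z)+(1-t)\,g(z')\ge\log\lambda .
\]
So $tz+(1-t)z'\in D_\lambda$.

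As an alternative that avoids logarithms, one could apply the weighted AM--GM inequality coordinatewise, $tz_i+(1-t)z'_i\ge z_i^t z_i'^{\,1-t}$, and take the product over $i\in N$. This gives
\[
f_{\Nash}(tz+(1-t)z')\ge f_{\Nash}(z)^t f_{\Nash}(z')^{1-t}\ge\lambda .
\]

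The only real obstacle is the logarithm step: $\log$ is undefined at $0$. This is why positivity must be established first, and why the requirement $\lambda>0$ matters. For $\lambda=0$ the set $D_0$ is the whole orthant $\R_{\ge 0}^N$, which is still convex but is not contained in $\R_{>0}^N$, so the lemma is correctly restricted to $\lambda>0$.
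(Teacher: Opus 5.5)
Your proof is correct and follows essentially the same route as the paper: you rewrite $D_\lambda$ inside $\R_{>0}^N$ as a superlevel set of the concave function $\sum_{i} \log z_i$, take closedness from continuity of $f_{\Nash}$, and take positivity from the fact that the product vanishes whenever a coordinate does. Your weighted AM--GM variant is a valid log-free alternative to the convexity step, but the proof does not need it.
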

\begin{proof}
For $\lambda>0$, we have
\[
D_\lambda = \{z\in \R_{>0}^N : \sum_{i \in N} \log z_i \ge \log \lambda\}.
\]
Because the sum of logarithms is concave on $\R_{>0}^N$, its superlevel sets are convex. Closedness follows from the continuity of $f_{\Nash}$ on $\R_{\ge 0}^N$, and $D_\lambda\subseteq \R_{>0}^N$ because $f_{\Nash}(z)=0$ whenever any $z_i = 0$.
\end{proof}

\begin{lemma}[Large Nash Contours Lie in Upward-Closure]
\label{lem:Dlambda-in-Cge}
Let $\alpha_i = c_i(a_i)$, where $a_i$ is the axes-cutting alternative corresponding to agent $i$ that yields zero cost to all other agents. Then, under the axes-cutting condition, we have $D_\lambda \subseteq \calC_{\ge}$ for every $\lambda \ge (\sum_{i=1}^n \alpha_i/n)^n$.
\end{lemma}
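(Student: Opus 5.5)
Fix $z \in D_\lambda$ with $\lambda \ge (\sum_i \alpha_i/n)^n$. We aim to exhibit a point of $\calC$ that lies coordinatewise below $z$. The natural candidates are lotteries supported on the axes-cutting alternatives $a_1,\ldots,a_n$. A lottery $\mu \in \Delta(\set{a_1,\dots,a_n})$ yields the cost vector $(\mu_1\alpha_1,\ldots,\mu_n\alpha_n)$, since $a_j$ imposes zero cost on every agent other than $j$. So these lotteries give exactly the points of $\calC$ of the form $w$ with $w_i \ge 0$ and $\sum_i w_i/\alpha_i = 1$. It therefore suffices to show that $\sum_i z_i/\alpha_i \ge 1$.

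Indeed, if $\sum_i z_i/\alpha_i \ge 1$, we can set $\mu_i = \frac{z_i/\alpha_i}{\sum_j z_j/\alpha_j}$. This $\mu$ lies in $\Delta(\set{a_1,\dots,a_n})$, and its cost vector satisfies
\[
\mu_i\alpha_i = \frac{z_i}{\sum_j z_j/\alpha_j} \le z_i \quad \text{for every } i.
\]
Hence $z \ge c(\mu) \in \calC$, which means $z \in \calC_{\ge}$. Note also that $z \in \R_{>0}^N$ by the preceding lemma, so there are no degenerate coordinates to handle.

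The key step is the inequality $\sum_i z_i/\alpha_i \ge 1$. We obtain it from AM--GM:
\[
\frac{1}{n}\sum_i \frac{z_i}{\alpha_i} \ge \Big(\prod_i \frac{z_i}{\alpha_i}\Big)^{1/n} = \frac{f_{\Nash}(z)^{1/n}}{(\prod_i\alpha_i)^{1/n}} \ge \frac{\lambda^{1/n}}{(\prod_i \alpha_i)^{1/n}} \ge \frac{\sum_i\alpha_i/n}{(\prod_i\alpha_i)^{1/n}} \ge 1.
\]
The last step uses AM--GM once more, applied to the $\alpha_i$. Multiplying through by $n$ gives $\sum_i z_i/\alpha_i \ge n \ge 1$. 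This is in fact stronger than needed, so the stated threshold is not tight but is still valid.

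The main point needing care is the identification of the cost vectors of lotteries over $\set{a_1,\dots,a_n}$, which relies on the axes-cutting property (each $a_j$ costs zero for all $i \ne j$, and $\alpha_j > 0$). The rest is routine.
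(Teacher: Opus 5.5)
Your proof is correct, but it dominates $z$ by a different point of $\calC$ than the paper does.

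\textbf{The paper's route.} It applies AM--GM to $z$ itself to get $\sum_i z_i \ge \sum_i \alpha_i$. By pigeonhole, $z_i \ge \alpha_i$ for some single $i$, so $z \ge \alpha_i e_i = c(a_i)$. Only the deterministic axes-cutting alternatives are used.

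\textbf{Your route.} You normalize by the $\alpha_i$ and apply AM--GM to the ratios $z_i/\alpha_i$. You then dominate $z$ by a lottery mixing all the axes-cutting alternatives. In effect you use that the whole face $\conv\{\alpha_1 e_1,\ldots,\alpha_n e_n\}$ lies in $\calC$, and its upward closure is $\{z \ge 0 : \sum_i z_i/\alpha_i \ge 1\}$. Indexing your lottery by $i$ implicitly assumes the $a_i$ are distinct. This is automatic, since $c_i(a_i) > 0 = c_i(a_j)$ for $j \ne i$.

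\textbf{What each buys.} The paper's argument is marginally simpler and needs only the vertices $\alpha_i e_i$. That same vertex-domination observation is reused in Step~2 of the existence proof. Your argument exploits convexity of $\calC$ and so gives a much better threshold. You only need $\sum_i z_i/\alpha_i \ge 1$, and your AM--GM chain already yields this whenever $\lambda \ge \prod_i (\alpha_i/n)$. That is at least a factor $n^n$ below the stated threshold $(\sum_i \alpha_i/n)^n \ge \prod_i \alpha_i$. This matches your remark that you derived $\ge n$ when only $\ge 1$ was needed.

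\textbf{A minor point.} Your appeal to strict positivity of $z$ is unnecessary: both AM--GM and your choice of $\mu$ work fine with zero coordinates.
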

\begin{proof}
The axes-cutting condition implies that $\alpha_i e_i=c(a^{(i)})\in \calC\subseteq \calC_{\ge}$.
Since $\calC_{\ge}$ is upward closed, if $z\in \R_{\ge 0}^N$ satisfies $z_i\ge \alpha_i$ for any $i \in N$, then $z\in\calC_{\ge}$.
Now, fix $z\in D_\lambda$. By the AM--GM inequality,
\[
\frac{\sum_{i \in N} z_i}{n}\ \ge\ \Bigl(\prod_{i \in N} z_i\Bigr)^{1/n}\ \ge\ \lambda^{1/n}.
\]
When $\lambda \ge (\sum_{i=1}^n \alpha_i/n)^n$, this implies $\sum_{i \in N} z_i \ge \sum_{i \in N} \alpha_i$. Hence, there must exist some $i \in N$ with $z_i\ge \alpha_i$, which implies $z\in \calC_{\ge}$.
\end{proof}

The next lemma provides a self-contained proof of a supporting vectors argument used by \citet{mariotti2005nash} in their original proof of this theorem.

\begin{lemma}[Supporting Vectors for Upward-Closed Convex Sets are Nonnegative]
\label{lem:nonneg-normal}
Let $K\subseteq \R_{\ge 0}^N$ be closed, convex, and upward closed (i.e., $z\in K$ and $z'\in \R_{\ge 0}^N$ implies $z+z'\in K$).
If $z^\star\in \partial K$, then there exists a nonzero $w\in \R_{\ge 0}^N$ such that
\begin{equation}
w \cdot z \ \ge\ w\cdot z^\star, \forall z\in K.
\label{eq:support-upward}
\end{equation}
\end{lemma}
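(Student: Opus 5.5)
The plan is to apply the standard supporting hyperplane theorem and then use upward closedness to show that every coordinate of the normal vector is nonnegative. Let $z^\star\in\partial K$, and recall that $K$ is closed and convex.

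First I obtain a supporting hyperplane. Since $z^\star$ is a boundary point of the closed convex set $K\subseteq\R^N$, there is a nonzero $w\in\R^N$ with $w\cdot z\ge w\cdot z^\star$ for all $z\in K$. The boundary here is taken in $\R^N$; upward closedness inside $\R^N_{\ge 0}$ gives $K$ nonempty interior whenever $K\neq\emptyset$.

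Concretely, I would pick $z^{(k)}\notin K$ with $z^{(k)}\to z^\star$. Separating each $z^{(k)}$ from $K$ by projection gives a unit vector $w^{(k)}$ with $w^{(k)}\cdot z\ge w^{(k)}\cdot z^{(k)}$ for all $z\in K$. By compactness of the unit sphere, a subsequence of the $w^{(k)}$ converges to some $w$ with $\|w\|=1$. Passing to the limit gives \eqref{eq:support-upward}.

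If $z^\star$ sits on the boundary of the orthant (some coordinate zero), the approximating points $z^{(k)}$ may need to have negative coordinates. That is fine: separation happens in $\R^N$, and $K$ is closed in $\R^N$ because it is closed and contained in the closed orthant.

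Second, I show $w\ge 0$. Suppose for contradiction that $w_j<0$ for some $j$. By upward closedness, $z^\star+te_j\in K$ for every $t\ge 0$. Plugging this into \eqref{eq:support-upward} gives
\[
w\cdot z^\star + t\,w_j \ge w\cdot z^\star,
\]
that is, $t\,w_j\ge 0$ for all $t>0$. This contradicts $w_j<0$, so every coordinate of $w$ is nonnegative.

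Since $w$ is also nonzero, $w\in\R^N_{\ge 0}\setminus\{0\}$, as required.

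The main obstacle is making the existence of the supporting vector rigorous, i.e.\ the limiting argument at boundary points. The sign argument itself is immediate once \eqref{eq:support-upward} is in hand.
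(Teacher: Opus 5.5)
Your proposal is correct and follows essentially the same route as the paper: you separate points $z^{(k)}\notin K$ converging to $z^\star$, normalize the normals and extract a convergent subsequence, then use upward closure along a coordinate direction to rule out $w_j<0$. The only cosmetic difference is that you translate $z^\star$ itself along $e_j$, which is legitimate because $z^\star\in K$ by closedness (worth stating explicitly), whereas the paper translates an arbitrary $z\in K$.
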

\begin{proof}
Because $z^\star\in\partial K$ and $K$ is closed, there exists a sequence $(z^t)_{t\ge 1}$ with $z^t\notin K$ and $z^t\to z^\star$. By the separating hyperplane theorem applied to the disjoint closed convex sets $K$ and $\{z^t\}$, for each $t$ there exists a nonzero vector $w^t$ and a scalar $\beta^t$ such that
\[
w^t\cdot z \ \ge\ \beta^t \ >\ w^t\cdot z^t, \forall z\in K.
\]
Normalize $w^t$ so that $\|w^t\|_2=1$. By compactness of the unit sphere, pass to a convergent subsequence (still indexed by $t$) with $w^t \to w$. Taking limits along this subsequence and using $z^t\to z^\star$ yields \Cref{eq:support-upward} for some nonzero $w$.

It remains to show $w\in \R_{\ge 0}^N$. Suppose for contradiction that $w_k<0$ for some coordinate $k$.
Fix any $z\in K$. By upward closure, $z+t e_k\in K$ for all $t\ge 0$. Then \Cref{eq:support-upward} gives
\[
w\cdot (z+t e_k)\ \ge\ w\cdot z^\star \quad \text{for all } t\ge 0,
\]
i.e., $(w\cdot z) + t w_k \ge w\cdot z^\star$ for all $t\ge 0$, which is impossible when $w_k<0$ and $t\to\infty$. Hence, $w\geq 0$.
\end{proof}

\begin{lemma}[Existence of PF with Strictly Positive Costs; Nash Maximization on $\calC^{\PE}$]
\label{thm:existence-and-maxNash}
In any axes-cutting public bads instance, there exists an allocation $x^\star\in\Delta(A)$ such that:
\begin{enumerate}
\item $c(x^\star)\in \R_{>0}^N$ and $x^\star$ is PF.
\item The vector $z^\star\triangleq c(x^\star)$ maximizes the Nash product of costs over the Pareto frontier $\calC^{\PE}$:
\begin{equation}
z^\star \in \argmax_{z\in \calC^{\PE}} \ \prod_{i \in N} z_i,
\qquad\text{and}\qquad
\prod_{i \in N} z^*_i > 0.
\label{eq:maxNash-on-PE}
\end{equation}
\end{enumerate}
\end{lemma}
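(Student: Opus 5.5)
Let $\lambda^\star \triangleq \max_{z\in\calC^{\PE}} \prod_{i\in N} z_i$. My plan is to take $z^\star$ to be a maximizer of this quantity, show that $\lambda^\star$ is attained with $\lambda^\star>0$, and then show that $z^\star$ is a critical point of $f_{\Nash}$ in $\calC_{\ge}$. \Cref{thm:bads-pf-equiv} (item~\ref{item:bpe-critical} $\Rightarrow$ item~\ref{item:bpe-pf}) then gives that any $x^\star$ with $c(x^\star)=z^\star$ is strictly positive and PF.

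\emph{Existence and positivity.} $\calC^{\PE}$ is the union of finitely many faces of the polytope $\calC$ (those with a strictly positive supporting normal), so it is compact. Continuity of $f_{\Nash}$ then gives a maximizer. To see $\lambda^\star>0$, consider the point $\bar z=\sum_i \frac1n \alpha_i e_i\in\calC$, which is strictly positive. Take any Pareto-optimal $z\le \bar z$ in $\calC$; one exists by compactness, for instance a minimizer of $\sum_i z_i$ over $\{z\in\calC : z\le\bar z\}$.

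I expect positivity of this $z$ to be the main obstacle. My first attempt is to instead take the maximizer of $f_{\Nash}$ over the compact set $\calC_{\ge}\cap\{z\le M\mathbf 1\}$ restricted to minimal elements. A cleaner route is to work with \Cref{lem:Dlambda-in-Cge} directly. Define
\[
\mu \triangleq \sup\{\lambda>0 : D_\lambda\not\subseteq \calC_{\ge}\}.
\]
This is finite by \Cref{lem:Dlambda-in-Cge}. It is also positive, because $\calC$ is compact and avoids $0$, so small neighborhoods of $0$ in $\R^N_{\ge0}$ miss $\calC_{\ge}$; points like $\epsilon\mathbf 1$ then lie in $D_{\epsilon^n}\setminus\calC_{\ge}$.

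\emph{Critical point via support.} At level $\mu$, the boundary contour $\{f_{\Nash}=\mu\}$ touches $\partial\calC_{\ge}$ at some $z^\star\gg0$. This follows by a limiting argument: take points $z^t\in D_{\lambda_t}\setminus\calC_{\ge}$ with $\lambda_t\uparrow\mu$. These are bounded, since by AM--GM large points lie in $\calC_\ge$ as in the proof of \Cref{lem:Dlambda-in-Cge}, so pass to a limit. Every $z$ with $f_{\Nash}(z)>\mu$ lies in $\calC_{\ge}$, so the open convex set $\{f_{\Nash}>\mu\}$ is contained in $\calC_{\ge}$. Meanwhile $z^\star\in\partial\calC_{\ge}$ and $f_{\Nash}(z^\star)=\mu$.

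Now apply \Cref{lem:nonneg-normal} to $K=\calC_{\ge}$, which is closed, convex, and upward closed. This gives a nonzero $w\ge0$ with $w\cdot z\ge w\cdot z^\star$ on $\calC_{\ge}$, and hence on $D_\mu$. So $z^\star$ minimizes the linear function $w\cdot z$ over the convex set $D_\mu$, whose boundary is smooth at $z^\star\gg0$. Thus $w$ must be a positive multiple of $\nabla f_{\Nash}(z^\star)=(\mu/z^\star_i)_i$. Consequently
\[
\nabla f_{\Nash}(z^\star)\cdot z\ge\nabla f_{\Nash}(z^\star)\cdot z^\star \quad\text{for all } z\in\calC_{\ge}.
\]
To place $z^\star$ in $\calC$ rather than just $\calC_\ge$: write $z^\star=z'+z''$ with $z'\in\calC$ and $z''\ge0$. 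Applying the inequality at $z'$ forces $w\cdot z''\le0$, and since $w\gg0$ this gives $z''=0$, so $z^\star\in\calC$. This is item~\ref{item:bpe-critical} of \Cref{thm:bads-pf-equiv}, so $x^\star$ is strictly positive PF.

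\emph{Nash maximization on $\calC^{\PE}$.} Since $w\gg0$ supports $\calC$ at $z^\star$, the point $z^\star$ is Pareto optimal: any $z<z^\star$ in $\calC$ would have $w\cdot z<w\cdot z^\star$. Conversely, take any $z\in\calC^{\PE}$ with $f_{\Nash}(z)>\mu$. Then $z$ lies in the interior of $D_{\mu}$, so some $z-\epsilon e_1$ still has product above $\mu$ and therefore lies in $\calC_\ge$. That means some point of $\calC$ is at most $z-\epsilon e_1<z$, contradicting Pareto optimality of $z$. Hence $z^\star$ maximizes $f_{\Nash}$ on $\calC^{\PE}$, with value $\mu>0$.
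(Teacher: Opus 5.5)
Your final argument is correct and is essentially the paper's proof. The pieces match: the critical level $\mu$ (which coincides with the paper's $\lambda^\star=\inf\{\lambda : D_\lambda\subseteq\calC_{\ge}\}$), a touching point $z^\star\in\partial\calC_{\ge}$ with $f_{\Nash}(z^\star)=\mu$, the nonnegative supporting vector from \Cref{lem:nonneg-normal} identified with a positive multiple of $\nabla f_{\Nash}(z^\star)$, then $z^\star\in\calC$, PF, and maximality on $\calC^{\PE}$. The only variation is that you get the touching point as a limit of points $z^t\in D_{\lambda_t}\setminus\calC_{\ge}$, where the paper uses a contradiction argument; you should also delete the abandoned opening plan of taking an arbitrary maximizer over $\calC^{\PE}$, whose positivity worry is justified (a Pareto-optimal point below $\bar z$ can have zero coordinates) but irrelevant, since the $\mu$-route constructs $z^\star$ and proves its maximality directly.
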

\begin{proof}
\proofstep{Step 1: Define the critical Nash level $\lambda^\star>0$.}
By Lemma~\ref{lem:Dlambda-in-Cge}, there exists $\bar\lambda>0$ such that $D_{\bar\lambda}\subseteq \calC_{\ge}$.
Define
\begin{equation}
\lambda^\star \triangleq \inf\{\lambda\ge 0 : D_\lambda \subseteq \calC_{\ge}\}.
\label{eq:def-lambda-star}
\end{equation}
Then $\lambda^\star \le \bar\lambda$.
Because $D_\lambda$ is decreasing in $\lambda$ and $\calC_{\ge}$ is closed, we have $D_{\lambda^\star}\subseteq \calC_{\ge}$.

We claim $\lambda^\star>0$. By our assumption that $0\notin \calC$, since $\calC$ is compact, there exists $\varepsilon>0$ such that
\begin{equation}
[0,\varepsilon]^n \cap \calC=\emptyset.
\label{eq:box-away}
\end{equation}
By upward closure, this implies $[0,\varepsilon]^n \cap \calC_{\ge}=\emptyset$ as well.
Let $\lambda_0\triangleq \varepsilon^n>0$. Then the vector $\varepsilon\mathbf{1}\in D_{\lambda_0}$ but $\varepsilon\mathbf{1}\notin \calC_{\ge}$, so $D_{\lambda_0}\nsubseteq \calC_{\ge}$.
Hence $\lambda^\star\ge \lambda_0>0$.

\proofstep{Step 2: There exists $z^\star \in D_{\lambda^\star}\cap \partial\calC_{\ge}$ and $z^\star \gg 0$.}
If $D_{\lambda^\star}\subseteq \mathrm{int}(\calC_{\ge})$, then the compact set $[0,\textstyle\sum_i\alpha_i]^N\setminus \mathrm{int}(\calC_{\ge})$ is disjoint from $D_{\lambda^\star}$. Hence, by continuity of $f_{\Nash}$, there exists $\eta \in (0, \lambda^\star)$ such that
\[
D_{\lambda^\star-\eta}\cap [0,\sum\nolimits_{\,i} \alpha_i]^N \ \subseteq\ \calC_{\ge}.
\]
Moreover, any $z\in D_{\lambda^\star-\eta}$ lying outside of $[0,\sum_i \alpha_i]^N$ satisfies $z_i > \sum_{j}\alpha_j \geq \alpha_i$ for some $i \in N$, and therefore belongs to $\calC_{\ge}$ by the same argument as \Cref{lem:Dlambda-in-Cge}. Thus $D_{\lambda^\star-\eta}\subseteq \calC_{\ge}$, contradicting the definition of $\lambda^\star$ as an infimum. Therefore $D_{\lambda^\star}\cap \partial \calC_{\ge}\neq\emptyset$; fix any
\begin{equation}
z^\star\in D_{\lambda^\star}\cap \partial \calC_{\ge}.
\label{eq:choose-xstar}
\end{equation}
Since $\lambda^\star>0$, we have $z^\star\in \R_{>0}^N$.

\proofstep{Step 3: A supporting vector $w^\star$ exists and is proportional to $(1/z^\star)$.}
Apply \Cref{lem:nonneg-normal} to $K=\calC_{\ge}$ at $z^\star$ to obtain a nonzero $w^\star\geq 0$ such that
\begin{equation}
w^\star\cdot z \ \ge\ w^\star\cdot z^\star, \forall z\in \calC_{\ge}.
\label{eq:wstar-support-Cge}
\end{equation}
Since $D_{\lambda^\star}\subseteq \calC_{\ge}$, \Cref{eq:wstar-support-Cge} holds in particular for all $z\in D_{\lambda^\star}$, so $z^\star$ minimizes $w^\star\cdot z$ over $D_{\lambda^\star}$.

We claim $z^\star\in \partial D_{\lambda^\star}$, i.e., $f_{\Nash}(z^\star)=\lambda^\star$. Indeed, if $f_{\Nash}(z^\star)>\lambda^\star$, then $z^\star$ lies in the interior of $D_{\lambda^\star}$ (because $f_{\Nash}$ is continuous), so there exists $\delta>0$ and a vector $u\in\R^N$ with $w^\star\cdot u<0$ such that $z^\star+\tau u\in D_{\lambda^\star}$ for all $\tau\in[0,\delta]$, contradicting minimality of $z^\star$ for the linear objective $w^\star\cdot z$ over $D_{\lambda^\star}$. Thus,
\begin{equation}
f_{\Nash}(z^\star)=\lambda^\star.
\label{eq:Fxstar}
\end{equation}

Because $z^\star\in \R_{>0}^N$ and $D_{\lambda^\star}=\{z\in \R_{>0}^N:f_{\Nash}(z)\ge \lambda^\star\}$, the KKT conditions for minimizing a linear function over the convex set $D_{\lambda^\star}$ imply that there exists $\mu>0$ such that
\begin{equation}
w^\star = \mu \nabla f^{\log}_{\Nash}(z^\star)=\mu\Bigl(\frac{1}{z^\star_1},\ldots,\frac{1}{z^\star_n}\Bigr).
\label{eq:wstar-grad}
\end{equation}
In particular, $w^\star\in \R_{>0}^N$.

\proofstep{Step 4: $z^\star\in \calC^{\PE}$ and it yields a PF allocation.}
First, we show $z^\star\in \calC$. By definition of $\calC_{\ge}$, there exists $z\in\calC$ with $z\leq z^\star$.
If $z\neq z^\star$, then $z < z^\star$, and since $w^\star\in\R_{>0}^N$ we have $w^\star\cdot z < w^\star\cdot z^\star$,
contradicting \Cref{eq:wstar-support-Cge} because $z\in\calC\subseteq\calC_{\ge}$.
Hence $z=z^\star\in\calC$.

Next, $z^\star$ is Pareto efficient in $\calC$: if there were $z\in\calC$ with $z < z^\star$, then $w^\star\cdot z < w^\star\cdot z^\star$ (since $w^\star\gg 0$), contradicting \Cref{eq:wstar-support-Cge}.
Thus, $z^\star\in\calC^{\PE}$.

Finally, since $z^\star\in \calC$, there exists $x^\star\in\Delta(A)$ with $c(x^\star)=z^\star$.
Using \Cref{eq:wstar-grad,eq:wstar-support-Cge} gives
\[
\nabla f^{\log}_{\Nash}(z^\star)\cdot z \ \ge\ \nabla f^{\log}_{\Nash}(z^\star)\cdot z^\star, \forall z\in \calC_{\ge}.
\]
This is exactly the necessary critical point condition with $z^\star=c(x^\star)$. Hence, by \Cref{thm:bads-pf-equiv-nonzero}, $x^\star$ is PF and has strictly positive costs.

\proofstep{Step 5: $z^\star$ maximizes the Nash product of costs on $\calC^{PE}$.}
We already have $f_{\Nash}(z^\star)=\lambda^\star>0$ from \Cref{eq:Fxstar}. Let $z\in \calC^{PE}$.
If $f_{\Nash}(z)>\lambda^\star$, choose $\lambda$ with $\lambda^\star<\lambda<f_{\Nash}(z)$.
Then, $z\in D_\lambda$, and moreover $z\in \mathrm{int}(D_\lambda)$ because $f^{\log}_{\Nash}(z)>\log\lambda$.
Hence, there exists $z'\in D_\lambda$ with $z' < z$ (e.g., $z'=z-\epsilon\mathbf{1}$ for a sufficiently small $\epsilon>0$). Since $D_\lambda\subseteq \calC_{\ge}$ (by definition of $\lambda^\star$), we have $z'\in\calC_{\ge}$, so there exists $z''\in\calC$ with $z''\leq z' < z$.
Thus $z'' < z$, contradicting Pareto efficiency of $z$. Therefore $f_{\Nash}(z)\le \lambda^\star=f_{\Nash}(z^\star)$ for all $z\in\calC^{\PE}$, proving \Cref{eq:maxNash-on-PE}.
\end{proof}

\Cref{thm:axes-cut-existence} already highlights the power of axes-cutting instances, as they allow us to get around the non-existence of strictly positive PF in general instances, shown by examples such as \Cref{ex:positive-pf-nonexistence}. Combining this with \Cref{thm:positive-pf-be-core}, this also shows the existence of outcomes in the bounded-externality core for such instances. However, their utility extends even further. We can also show that the structure of axes-cutting instances allows us to circumvent other counterintuitive behavior exhibited by strictly positive PF and the BE core, leading to very compelling definitions of fairness in this setting and rules for achieving them. We can now show that under axes-cutting instances, the BE core implies sIFS and the completion core.

\begin{theorem}\label{thm:PF-sifs-axes-cut}
    In all axes-cutting instances, any allocation that is in the bounded-externality core satisfies strong individual fair share.
\end{theorem}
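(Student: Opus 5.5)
We argue by contraposition. Fix an axes-cutting instance and an allocation $x$ in the bounded-externality core. First, by the argument sketched after the definition of the BE core, $c(x)\gg 0$. Suppose some agent $i$ violates strong individual fair share:
\[
c_i(x) > \tfrac{n-1}{n}c_i^{\min}+\tfrac{1}{n}c_i^{\max}.
\]
The goal is to exhibit a triple $(S,y,\alpha)$ violating the BE core. The natural deviating coalition is $S=N\setminus\set{i}$, with $i$ as the lone outsider. The BE core then lets $S$ shrink its members' costs by the factor $\frac{n-\alpha}{n-1}$, where $\alpha=c_i(y)/c_i(x)$ is the externality imposed on $i$. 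Intuitively, if $i$ currently bears more than its ``martyr share'', the others can hand $i$ a fair portion of the burden and all become better off.

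The deviation I would try combines two ingredients. The first is the alternative $m_i\in\argmin_a c_i(a)$, which is good for $i$. The second is the axes-cutting alternatives $a_j$ for $j\neq i$, which are free for everyone but $j$. Because the inequalities in the BE core are all relative to $c_j(x)$, I would build $y$ as a perturbation of $x$ rather than from scratch. Concretely, set
\[
y=(1-\beta)x+\beta\, a,
\]
for an alternative $a$ and a small $\beta>0$, and expand both BE-core conditions to first order in $\beta$. Plugging in $\alpha=c_i(y)/c_i(x)$, the condition for each $j\in S$ becomes linear in $\beta$ and reduces to an inequality between $c_j(a)/c_j(x)$ and $c_i(a)/c_i(x)$. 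I will check which choice of $a$ (or which mixture of $m_i$ with the $a_j$'s) turns the violated sIFS inequality into exactly the strict inequality the BE core forbids.

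The key role of axes-cutting is the following. Each $a_j$ is free for every agent other than $j$, including $i$. So mass placed on the $a_j$'s costs $i$ nothing, while each $j\in S$ pays only on its own $a_j$, which is bounded relative to $c_j(x)$.

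The main obstacle is balancing the agents in $S$. Those $j$ whose cost at $m_i$ is large relative to $c_j(x)$ may not improve. For them, the plan is to compensate with their martyr alternatives $a_j$, rebalancing weights so that every $j\in S$ satisfies the deviators' inequality. Only $i$'s bound, which is fixed at $\frac{n-1}{n}c_i^{\min}+\frac1n c_i^{\max}$, needs to be strict.

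If this direct construction fails to close, my fallback is the singleton coalition $S=\set{i}$ with $y$ a mixture of $m_i$ and the $a_j$'s. I would choose $\alpha$ as the maximum over $j\neq i$ of $c_j(y)/c_j(x)$, and optimize the mixture weights, hoping the resulting bound $c_i(y)<(n-(n-1)\alpha)c_i(x)$ matches sIFS exactly when the weight on $m_i$ is $\frac{n-1}{n}$.
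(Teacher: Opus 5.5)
\textbf{There is a genuine gap: the deviation that works is never among your candidates.} Your ingredients are $m_i\in\argmin_a c_i(a)$ and the martyr alternatives $a_j$ for $j\neq i$. All of these are cheap for $i$ and push the burden onto the other agents. With $c_i(y)$ small, any BE-core violation then needs the \emph{other} agents' costs under $y$ to be small relative to their $c_j(x)$; for $S=N\setminus\set{i}$ this means $c_j(y)\le\frac{n}{n-1}c_j(x)$. The sIFS violation of agent $i$ says nothing about that.

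Here is a concrete failure. Take $n=2$ with $c(a_1)=(1,0)$, $c(a_2)=(0,1)$, and $x$ inducing $c(x)=(0.9,0.1)$, so agent $1$ violates sIFS. Your construction only uses $m_1=a_2$ (which is also the only $a_j$ with $j\neq 1$). Then $y=a_2$, and likewise every $(1-\beta)x+\beta a_2$, fails both coalitions:
\begin{itemize}
\item for $S=\set{2}$, the deviators' condition needs $1\le 0.2$;
\item for $S=\set{1}$, your $\alpha$ equals $10$, and the deviators' factor $n-(n-1)\alpha$ becomes negative.
\end{itemize}
Beyond this, the decisive steps are only things you ``will check'' or are ``hoping'' for, so no argument is actually given. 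Your intuition is also backwards: $i$ already bears more than its fair share, so the deviation should let $i$ take everything on itself, not hand burden back to it.

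\textbf{The missing idea is to let the violating agent become the martyr.} Take $S=\set{i}$, $y=a_i$ (agent $i$'s \emph{own} axes-cutting alternative), and $\alpha=0$.
\begin{itemize}
\item Since $c_j(a_i)=0$ for all $j\neq i$, the outsiders' condition holds with $\alpha=0$.
\item The deviators' condition reads $c_i(a_i)<n\,c_i(x)$. This follows from $c_i(a_i)\le c_i^{\max}$ together with the sIFS violation $c_i(x)>\frac{n-1}{n}c_i^{\min}+\frac1n c_i^{\max}\ge\frac1n c_i^{\max}$.
\end{itemize}
In fact $c_i^{\min}=0$ under axes-cutting with $n\ge 2$, so no weighting by $\frac{n-1}{n}$ on $m_i$ is needed. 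This is exactly the paper's proof.

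Your preferred coalition $S=N\setminus\set{i}$ would also work with this same $y=a_i$ and $\alpha=c_i(a_i)/c_i(x)<n$. Every $j\in S$ then gets cost $0<\frac{n-\alpha}{n-1}c_j(x)$, using $c(x)\gg 0$. So the real problem is the choice of $y$, not the choice of coalition.
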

\begin{proof}
    If $n=1$, every allocation satisfies sIFS because $c_1(x)\le c_1^{\max}$, so the result is immediate. So suppose that $n\ge 2$. For every agent $j$, an axes-cutting alternative for any other agent has zero cost to $j$, and therefore $c_j^{\min}=0$.
For the sake of contradiction, assume that the result is false and there is an allocation $x$ in the bounded-externality core that does not satisfy sIFS. Thus, there exists some $j \in N$ such that $c_j(x) > \frac{1}{n}\max_{a \in A}c_j(a)$. Let $a_j \in A$ be the axes-cutting bad for agent $j$ (i.e., the bad such that $c_{i}(a_j) = 0$ for all $i \neq j$). Then, clearly, we must have $c_j(x) > \frac{1}{n}c_j(a_j)$ as well.

    Now, consider the potential BE core deviating group with $S = \set{j}$, $\alpha = 0$, and $y$ being the lottery that places all its weight on bad $a_j$.
For each $i \in N \setminus S$, we have that $c_i(y) = 0 \leq 0 \cdot c_i(x)$, so the inequalities in condition~(\ref{item:be-core-outsiders}) of a BE core deviating group are satisfied.
For agent $j$, we have that
    \[\frac{n - |N \setminus S|\alpha}{|S|}c_j(x) = nc_j(x) > c_j(a_j) = c_j(y),\]
    where the second transition follows from the sIFS violation. Therefore, condition~(\ref{item:be-core-deviators}) for a BE core deviation holds, and thus $x$ cannot be in the BE core, a contradiction.
\end{proof}

\begin{theorem}\label{thm:sifs-completion-core}
    Any allocation that is Pareto optimal and satisfies strong individual fair share is in the completion core.
\end{theorem}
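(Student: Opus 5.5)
The plan is a proof by contradiction that compares, agent by agent, the completion-core inequality with the sIFS bound. The size of the deviating coalition then forces the contradiction: only the grand coalition survives the comparison, and that case is exactly a Pareto improvement. So suppose $x$ is Pareto optimal and satisfies sIFS, and suppose some $S \subseteq N$ and $y \in \Delta(A)$ violate the completion core. That is, for every $i \in S$,
\[
\frac{|S|}{n} c_i(y) + \frac{n-|S|}{n} c_i^{\max} \le c_i(x),
\]
with at least one inequality strict. Here $c_i^{\max} = \max_{a \in A} c_i(a)$. The case $S=\emptyset$ is vacuous, since no strict inequality can occur, so assume $S \neq \emptyset$.

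\emph{Case $|S| = n$.} The violation reads $c_i(y) \le c_i(x)$ for all $i \in N$, with one inequality strict. So $y$ Pareto dominates $x$, contradicting Pareto optimality.

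\emph{Case $|S| \le n-1$.} Fix any $i \in S$. Chain three bounds: the lower bound $c_i(y) \ge c_i^{\min}$, the violation inequality, and the sIFS upper bound $c_i(x) \le \frac{n-1}{n} c_i^{\min} + \frac{1}{n} c_i^{\max}$. Together they give
\[
\frac{|S|}{n} c_i^{\min} + \frac{n-|S|}{n} c_i^{\max} \le \frac{|S|}{n} c_i(y) + \frac{n-|S|}{n} c_i^{\max} \le c_i(x) \le \frac{n-1}{n} c_i^{\min} + \frac{1}{n} c_i^{\max}.
\]
Rearranging the outer inequality yields $\frac{n-1-|S|}{n}\,(c_i^{\max} - c_i^{\min}) \le 0$. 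By the standing assumption that no agent is indifferent among all bads, $c_i^{\max} > c_i^{\min}$.
\begin{itemize}
\item If $|S| < n-1$, the coefficient is positive, which is an immediate contradiction.
\item If $|S| = n-1$, the outer two quantities coincide, so every inequality in the chain is tight for every $i \in S$. In particular, the violation inequality holds with equality for all $i \in S$, contradicting the requirement that one of them be strict.
\end{itemize}

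The only delicate step is the boundary case $|S| = n-1$. There the counting argument gives only a weak inequality, and the contradiction must come from the strictness clause of the completion-core definition, by propagating tightness through the chain. Everything else is routine algebra. Pareto optimality is used only for the grand coalition $S = N$, which matches the theorem's assumption that $x$ is Pareto optimal in addition to satisfying sIFS.
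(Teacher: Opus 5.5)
Your proof is correct and takes essentially the paper's approach: you combine the sIFS bound, $c_i(y)\ge c_i^{\min}$, and the weight comparison valid for $|S|<n$, and you use Pareto optimality only for $S=N$. The paper writes this as the single chain $c_i(x)\le \frac{n-1}{n}c_i^{\min}+\frac1n c_i^{\max}\le \frac{|S|}{n}c_i^{\min}+\frac{n-|S|}{n}c_i^{\max}\le \frac{|S|}{n}c_i(y)+\frac{n-|S|}{n}c_i^{\max}$, which rules out any strict inequality for every $|S|<n$ at once. That makes your split into $|S|<n-1$ and $|S|=n-1$, and your appeal to $c_i^{\max}>c_i^{\min}$, unnecessary.
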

\begin{proof}
    If a lottery $x$ satisfies sIFS, then for each $i$ we have $c_i(x) \leq \frac{1}{n} \max_{a \in A}c_i(a) + \frac{n - 1}{n}\min_{a \in A}c_i(a)$. Now, fix some strict subset of agents $S \subsetneq N$ and some lottery $y$. For all $i \in S$,
    \[c_i(x) \leq \frac{1}{n}\max_{a \in A}c_i(a) + \frac{n-1}{n}\min_{a \in A}c_i(a) \leq \frac{n-|S|}{n}\max_{a \in A}c_i(a) + \frac{|S|}{n}\min_{a \in A}c_i(a) \leq \frac{n - |S|}{n}\max_{a \in A}c_i(a) + \frac{|S|}{n}c_i(y).\]
    Hence, when $|S| < n$, no group $S, y$ can serve as a completion core deviating group.
    Finally, if $|S| = n$, the deviating condition requires $c_i(y)\le c_i(x)$ for every $i\in N$, with at least one strict inequality, which is impossible as $x$ is Pareto optimal.
\end{proof}

Putting the above results together, along with the fact that all strictly positive PF solutions will be Pareto optimal, we get the following:
\begin{theorem}
    In axes-cutting instances, all strictly positive PF solutions will satisfy strong individual fair share, and will be in the completion core.
\end{theorem}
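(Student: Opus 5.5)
The argument is a chain of results already proved, so no new estimates are needed. Fix an axes-cutting instance and a strictly positive PF allocation $x$, meaning $c(x) \gg 0$ and $x$ is PF.

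The first step is to place $x$ in the bounded-externality core. This is exactly \Cref{thm:positive-pf-be-core}, which applies to any instance, not just axes-cutting ones.

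The second step uses the axes-cutting hypothesis. Since $x$ is in the BE core, \Cref{thm:PF-sifs-axes-cut} gives that $x$ satisfies strong individual fair share. This proves the first claim of the statement.

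For the completion core, \Cref{thm:sifs-completion-core} requires Pareto optimality in addition to sIFS. I would derive Pareto optimality directly from PF. Suppose some $y\in\Delta(A)$ had $c_i(y)\le c_i(x)$ for all $i$, with at least one inequality strict. Because $c(x)\gg 0$, every ratio $c_i(y)/c_i(x)$ is a genuine real number, with no $0/0$ convention involved. Each ratio is at most $1$ and at least one is strictly below $1$, so
\[
\frac{1}{n}\sum_{i\in N}\frac{c_i(y)}{c_i(x)}<1,
\]
which contradicts \Cref{def:pf-bads}. Alternatively, Pareto optimality follows from the BE core by taking $S=N$, but the direct argument is cleaner.

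With Pareto optimality and sIFS both in hand, \Cref{thm:sifs-completion-core} places $x$ in the completion core. The only point needing care is the Pareto optimality step, and in particular that strict positivity of $c(x)$ rules out any $0/0$ or $\alpha/0$ ratios in the PF sum. Beyond that, the proof is bookkeeping.
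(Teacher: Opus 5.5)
Your proof is correct and follows the paper's argument: strictly positive PF gives the BE core via \Cref{thm:positive-pf-be-core}, axes-cutting then gives sIFS via \Cref{thm:PF-sifs-axes-cut}, and Pareto optimality plus sIFS gives the completion core via \Cref{thm:sifs-completion-core}. The paper only asserts that strictly positive PF allocations are Pareto optimal; your short derivation of this from PF with $c(x)\gg 0$ supplies that missing step.
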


However, this analysis of axes-cutting instances still leaves open a very important question: How do we handle instances that are not axes-cutting? We have already established that strictly positive PF solutions are not the solution in these cases. This is what we will address in the next section.

\section{Fairness Beyond Axes-Cutting Instances}\label{sec:impossibilities}

Beyond axes-cutting instances, we know that fairness properties such as sIFS may cease to exist, and that even when strictly positive PF solutions do exist, they may fail to achieve basic fairness notions. To begin our search for fairness in general instances, we highlight exactly which properties of strictly positive PF solutions are causing the unfairness by introducing an axiomatic characterization of strictly positive PF solutions, first explored by \citet{mariotti2005nash} in the axes-cutting setting.

In previous sections, we introduced the notation of $\calC$ as the cost set of a public bads problem, and $\calC_{\geq}$ as the upward-closure of that cost set. When speaking about the cost sets of a specific instance $I$, we will refine this notation as $\calC(I)$ and $\calC_{\geq}(I)$, respectively. Additionally, a public bads allocation rule $\phi$ maps each instance $I = (N,A,c)$ to a possibly empty set $\phi(I) \subseteq \Delta(A)$ of lotteries over the alternative set $A$.

Finally, recall that for a fixed lottery $x$, $c(x) \in \mathbb{R}^N$ denotes the cost vector of that lottery. For any rule $\phi$ and any instance $I$, let $c(\phi(I)) = \{c(x) : x \in \phi(I)\}$.

\begin{definition}[Cost consistency]
    A rule $\phi$ is \emph{cost consistent} if for any instance $I$ and any pair of lotteries $x,x'$ over $I$ with the property that $c_i(x) = c_i(x')$ for all $i \in N$, $x \in \phi(I)$ if and only if $x' \in \phi(I)$.
\end{definition}

\begin{definition}[Weak symmetry]
    Define an \emph{identity} instance to be an instance with $n$ agents and $n$ bads, where for each agent $i \in N$, there exists a unique bad $a_i \in A$ such that $c_i(a_i) = 1$ and $c_{i'}(a_i) = 0$ for all $i' \neq i$.

    A rule $\phi$ is \emph{weakly symmetric} if for any identity instance, $\phi$ returns a single allocation, the allocation which places weight $\nicefrac{1}{n}$ on each bad.
\end{definition}

\begin{definition}[Scale-freeness]
    A rule $\phi$ is \emph{scale-free} if for any two instances $I = (N,A,c)$ and $I' = (N,A,c')$, where for some agent $i \in N$, there exists a constant $k>0$ such that $c'_i(a) = kc_i(a)$ for all $a \in A$, and $c'_{i'}(a) = c_{i'}(a)$  for all $i' \neq i$, then $\phi(I) = \phi(I')$.
\end{definition}

\begin{definition}[Lower contraction consistency]
    A rule $\phi$ satisfies \emph{lower contraction consistency} if, for any pair of instances $I = (N,A,c)$ and $I' = (N,A',c')$ such that $\calC_{\geq}(I) \subseteq \calC_{\geq}(I')$, we have

    \[c'(\phi(I')) \cap \calC(I) \subseteq c(\phi(I)).\]

    In words, if some lottery $x \in \phi(I')$ induces a cost vector $c'(x)$ that is feasible in $I$ (i.e., $c'(x) \in \calC(I)$), then $\phi(I)$ must select some lottery that induces the same cost vector.
\end{definition}

\citet{mariotti2005nash} take similar formulations of the axioms above (defined slightly differently for their general setting that is only concerned with cost vectors rather than lotteries over sets of alternatives), and show that for all axes-cutting instances, the rule which returns all strictly positive PF allocations satisfies the axioms of cost consistency, weak symmetry, scale-freeness, and lower contraction consistency. Furthermore, they show that the cost vectors returned by \emph{any} rule that satisfies these axioms must be a superset of the strictly positive PF cost vectors. Their proof is based on the original proof of \citet{Nash50b}, showing that MNW is characterized by a similar set of axioms in the positive-utility version of this problem. It is not difficult to see that both directions of their proof easily generalize to non-axes-cutting instances as well. For the sake of completeness, we include full versions of these proofs in \Cref{app:impossibilities}.

With these results in hand, we can next leverage impossibility results to identify which aspects of strictly positive PF cause it to be unfair on non-axes-cutting instances.

\begin{theorem}
    No rule satisfying weak symmetry and lower contraction consistency returns only allocations that satisfy individual fair share.
\end{theorem}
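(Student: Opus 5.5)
The plan is to exhibit a single two-agent instance on which weak symmetry and lower contraction consistency together force the rule to select a lottery that violates individual fair share. Lower contraction consistency will be used in the direction that transfers a selection from a larger instance $I'$ to a smaller instance $I$. Take $I'$ to be the $2$-agent identity instance, with bads $a_1,a_2$ and cost vectors $c'(a_1)=(1,0)$ and $c'(a_2)=(0,1)$. By weak symmetry, $\phi(I')$ is exactly the uniform lottery, which induces the cost vector $(\nicefrac12,\nicefrac12)$. Note that $\calC(I')$ is the segment between $(1,0)$ and $(0,1)$, so its upward closure is $\calC_{\geq}(I') = \set{z\in\R_{\ge0}^2 : z_1+z_2\ge 1}$.

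Next, construct an instance $I=(N,A,c)$ with two bads $b_1,b_2$ and cost vectors $c(b_1)=(\nicefrac12,\nicefrac12)$ and $c(b_2)=(0,1)$. This instance satisfies the standing assumptions. First, $0\notin\calC(I)$. Second, neither agent is indifferent: agent~1 has costs $\nicefrac12$ and $0$, and agent~2 has costs $\nicefrac12$ and $1$. Both generating points of $\calC(I)$ have coordinate sum at least $1$. Since the condition $z_1+z_2\ge1$ is preserved under convex combinations and under upward closure, we get $\calC_{\geq}(I)\subseteq\calC_{\geq}(I')$.

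Now apply lower contraction consistency. The cost vector $(\nicefrac12,\nicefrac12)$ of the lottery in $\phi(I')$ lies in $\calC(I)$, since it is $c(b_1)$. Hence $\phi(I)$ must contain some lottery $x$ with $c(x)=(\nicefrac12,\nicefrac12)$. Finally, check that any such $x$ violates IFS for agent~1. We have $c_1^{\min}=0$ and $c_1^{\max}=\nicefrac12$, so the IFS bound for agent~1 is $\frac12\cdot 0+\frac12\cdot\frac12=\nicefrac14 < \nicefrac12 = c_1(x)$. Therefore $\phi$ returns an allocation violating individual fair share, which proves the theorem.

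The only delicate step is the choice of $I$. It must simultaneously (i) keep $\calC_{\geq}(I)$ inside the half-space $z_1+z_2\ge 1$, (ii) contain the symmetric point $(\nicefrac12,\nicefrac12)$ exactly in $\calC(I)$, not merely in its upward closure, and (iii) shrink some agent's maximum cost enough that $\nicefrac12$ exceeds that agent's IFS threshold. The instance above satisfies all three. Cost consistency is not needed, because lower contraction consistency already guarantees a selected lottery with the required cost vector, and every lottery with that cost vector violates IFS. If a statement for every $n$ is desired, the same construction can be lifted by starting from the $n$-agent identity instance and taking the bads $(\nicefrac1n,\dots,\nicefrac1n)$ and $(0,1,\dots,1)$.
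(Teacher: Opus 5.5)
Your proof is correct and is essentially the paper's argument. The paper also applies weak symmetry to the $n$-agent identity instance, then uses lower contraction consistency to force selection of the bad with cost vector $(\nicefrac{1}{n},\dots,\nicefrac{1}{n})$ in an instance whose other bad is $(1,0,\dots,0)$; there agents $2,\dots,n$ have $c_i^{\min}=0$ and $c_i^{\max}=\nicefrac{1}{n}$, so they violate IFS, and your $n=2$ instance is this construction with the agents relabeled (your lift using $(0,1,\dots,1)$ works equally well).
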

\begin{proof}
    Consider some rule $\phi$ that satisfies weak symmetry and lower contraction consistency, and fix some $n \geq 2$. By weak symmetry, $\phi$ will choose the uniform lottery over all $n$ bads when run on the identity instance $I$ for $n$ agents.

    \begin{wrapstuff}[r]
    \scalebox{0.88}{\begin{tabular}{lcc}
    \toprule
              & Bad $1$ & Bad $2$ \\
    \midrule
    Agent $1$ & $1$     & $\nicefrac{1}{n}$ \\
    Agent $2$ & $0$     & $\nicefrac{1}{n}$  \\
    $\vdots$  & $\vdots$ & $\vdots$ \\
    Agent $n-1$ & $0$     & $\nicefrac{1}{n}$ \\
    Agent $n$ & $0$     & $\nicefrac{1}{n}$   \\
    \bottomrule
    \end{tabular}}
    \end{wrapstuff}
    Next, consider the instance $I'$ shown on the right with $n$ agents and $2$ bads.
    It is easy to see that $\calC_{\geq}(I') \subseteq \calC_{\geq}(I)$, as the cost vector of Bad $1$ is achievable in both $I$ and $I'$, and Bad $2$ from $I'$ has a cost vector $(\nicefrac{1}{n}, \nicefrac{1}{n}, \dots, \nicefrac{1}{n})$, which can be achieved by the uniform lottery over all $n$ options in $I$. Therefore, by lower contraction consistency, $\phi(I')$ must select some lottery $y$ which has the cost vector $(\nicefrac{1}{n}, \nicefrac{1}{n}, \dots, \nicefrac{1}{n})$. The only such lottery places all weight on Bad $2$.

    However, for a lottery $y$ to satisfy IFS, for any agent $i \in N \setminus \set{1}$ we must have $c_i(y) \leq \frac{n-1}{n}\frac{1}{n} < \frac{1}{n}$, which is not true for $y$.
\end{proof}

Looking at this impossibility, if one regards IFS as a necessary fairness requirement for a rule, either weak symmetry or lower contraction consistency must be dropped. It is reasonable to assume that weak symmetry is the more desirable property. Any rule that does not satisfy it would return very counterintuitive results on identity instances. Luckily, there is a very straightforward rule that drops lower contraction consistency while achieving the other three axioms and, on all instances (axes-cutting or otherwise), always finds an allocation in the completion core, which implies IFS.

\begin{definition}[Flipped-MNW Solution]
    For a given public bads instance with agents $N$ and public bads $A$, construct a dual public goods instance over the same agents and bads, where for each $i \in N$, $i$'s valuation function over $A$ is given by $v_i(a) = \max_{a' \in A}c_i(a') - c_i(a)$ for all $a \in A$. The Flipped-MNW solution returns the set of all lotteries that maximize Nash welfare over the agents in the dual public goods instance.
\end{definition}

Note that a similar rule, called the ``Dual Nash Bargaining Solution'', was also briefly considered by \citet{mariotti2005nash} because it satisfies all axioms other than lower contraction consistency (though again, their rule was formulated specifically for axes-cutting instances, and for a model where only cost vectors were considered).

\begin{theorem}\label{thm:flipped-MNW-completion-core}
    All lotteries returned by Flipped-MNW are in the completion core.
\end{theorem}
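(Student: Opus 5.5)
The plan is to reduce the completion core for public bads to the ordinary core for the dual public goods instance, and then invoke \Cref{thm:public-goods}. Write $M_i \triangleq \max_{a\in A} c_i(a)$ and $v_i(a) = M_i - c_i(a)\ge 0$. By linearity, $v_i(y) = M_i - c_i(y)$ for every lottery $y$. By the standing assumption that no agent is indifferent among all bads, each $v_i$ is positive on some alternative. So the dual instance is a valid public goods instance.

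By \Cref{thm:public-goods}, every MNW lottery $x$ of the dual instance lies in the core for public goods (\Cref{def:core-public-goods}). The remaining work is to show that a completion-core violation for $x$ in the bads instance yields a core violation in the dual goods instance.

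Suppose $S\subseteq N$ and $y\in\Delta(A)$ satisfy
\[
\frac{|S|}{n}\, c_i(y) + \frac{n-|S|}{n}\, M_i \le c_i(x)
\]
for all $i\in S$, with at least one inequality strict. Substituting $c_i(\cdot) = M_i - v_i(\cdot)$ gives
\[
\frac{|S|}{n}\,(M_i - v_i(y)) + \frac{n-|S|}{n}\, M_i \le M_i - v_i(x).
\]
The $M_i$ terms collect to exactly $M_i$ on the left, so this simplifies to $\frac{|S|}{n} v_i(y) \ge v_i(x)$ for all $i\in S$. Strictness is preserved under this substitution. This is precisely a core violation in the dual instance, which contradicts \Cref{thm:public-goods}. (If $S=\emptyset$ there are no inequalities to satisfy, so there is no violation.)

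The only delicate point is the bookkeeping that makes the constants $M_i$ cancel exactly. This is where the completion term $\frac{n-|S|}{n}M_i$ is essential: it is exactly what converts the affine cost inequality into the scaled utility inequality. I would also check that the core statement in \Cref{thm:public-goods} covers arbitrary $S$, including $S=N$, where the violation is a Pareto improvement. Beyond that, no new argument is needed.
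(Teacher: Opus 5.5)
Your proposal is correct and is essentially the paper's proof. The paper likewise checks that the dual instance is non-degenerate, invokes core membership from \Cref{thm:public-goods}, and substitutes $c_i = M_i - v_i$ so that the $M_i$ terms cancel, turning a completion-core violation into a public-goods core violation.
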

\begin{proof}
    For a contradiction, assume that $x$ is a Flipped-MNW lottery for some public bads division instance, and it is not in the completion core, meaning that there exists some nonempty $S \subseteq N$ and an alternative lottery $y$ such that for every $i \in S$,
    \[\frac{|S|}{n}c_i(y) + \frac{n-|S|}{n}\max_{a \in A}c_i(a) \leq c_i(x)\]
    with the inequality being strict for some $i' \in S$.

    First, note that for every agent $i \in N$, there exists an alternative $a' \in A$ such that $c_i(a') < \max_{a \in A}c_i(a)$. This implies that $\max_{a \in A}c_i(a) - c_i(a') > 0$. Thus, in the corresponding dual goods instance that Flipped-MNW constructs, every agent $i \in N$ must have some good $a'$ such that $v_i(a') > 0$. This meets the non-degeneracy condition we established for public goods problems, and thus, by \Cref{thm:public-goods}, the lottery $x$ lies in the core for this dual public goods instance.
    This means that, for $S$ and $y$ in particular, it cannot be the case that for each $i \in S$, $\frac{|S|}{n}v_i(y) \geq v_i(x)$, with the inequality being strict for $i'$.
    
    However, we can rearrange the completion core violation inequality to derive a contradiction. Using the definition of the dual valuation function, we get
    \[\frac{|S|}{n}\max_{a \in A}c_i(a) - \frac{|S|}{n}v_i(y) + \frac{n-|S|}{n}\max_{a \in A}c_i(a) \leq \max_{a \in A}c_i(a) - v_i(x).\]
    The $\max$ terms on the two sides cancel, yielding, for all $i \in S$,
    \[\frac{|S|}{n}v_i(y) \geq v_i(x),\]
    with the inequality being strict for $i' \in S$, a contradiction.
\end{proof}

While Flipped-MNW has advantages over strictly positive PF in general instances, it is worth noting that on private-induced public bads instances, Flipped-MNW can give unintuitive results, for example, by failing to find an allocation that satisfies sIFS even though one is always possible on such instances. An example of this is included in \Cref{app:impossibilities}.

With this, we were able to establish two rules for the division of public bads. Strictly positive PF achieves fairness guarantees under axes-cutting and private-induced instances, but beyond those settings, may fail to satisfy the completion core or IFS, or even fail to exist at all. Flipped-MNW is not nearly as strong on axes-cutting instances or private instances, but it maintains the fairness guarantee of the completion core for all instances.

\section{Computation}\label{sec:computation}

How can we compute PF allocations in the public bads setting?
It is easy to find \emph{some} PF allocation because, as we have seen in \Cref{prop:deterministic-pf}, there always exists a deterministic PF allocation (i.e., one with weight 1 placed on a single bad). Thus, we can iterate through all $m$ deterministic allocations and check which are PF. Checking whether a given allocation satisfies PF is easy because, due to linear costs, this reduces to checking that $\frac{1}{n}\sum_{i\in N}\frac{c_i(a)}{c_i(x)}\ge 1$ for all $a\in A$.

However, we may also be interested in finding non-deterministic PF allocations, for example to compute strictly positive allocations (which correspond to zero-respecting Lindahl equilibria, \Cref{sec:strictly-positive}). For this task, the following characterization is helpful.

\begin{proposition}\label{prop:pf-supports}
	For any $x \in \Delta(A)$, write $S = \supp(x)$ and write $N(S) = \{ i \in N : c_i(a) > 0 \text{ for some } a \in S \}$. Then the following are equivalent.
	\begin{enumerate}
		\item $x$ is proportionally fair (PF), \label{item:pfsupp-pf}
		\item we have $x_S \in \argmax_{y \in \Delta(S)} \prod_{i \in N(S)} c_i(y)$ and for all $a \in A \setminus S$,  $\frac{1}{n} \sum_{i \in N} \frac{c_i(a)}{c_i(x)} \ge 1$. \label{item:pfsupp-nash}
	\end{enumerate}
\end{proposition}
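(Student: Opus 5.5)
The plan is to reduce both conditions to the per-alternative form of PF from \Cref{def:pf-bads}. That form says $x$ is PF if and only if $\frac{1}{n}\sum_{i\in N}\frac{c_i(a)}{c_i(x)}\ge 1$ for all $a\in A$, with equality whenever $x(a)>0$. I would then show that the equality conditions on $S=\supp(x)$ are exactly the first-order optimality conditions for maximizing $\prod_{i\in N(S)} c_i(y)$ over $\Delta(S)$. The inequality conditions for $a\in A\setminus S$ appear verbatim in both items, so only the alternatives in $S$ need work.

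\emph{Step 1: bookkeeping of zero costs.} For $i\in N(S)$ there is some $a\in S$ with $c_i(a)>0$ and $x(a)>0$, so $c_i(x)>0$. For $i\notin N(S)$, every $a\in S$ has $c_i(a)=0$, so $c_i(x)=0$. Under the convention $\nicefrac00=1$, such an agent therefore contributes exactly $1$ to the sum for every $a\in S$. Hence, for $a\in S$, the PF equality $\frac1n\sum_{i\in N}\frac{c_i(a)}{c_i(x)}=1$ is equivalent to
\[
g(a)\triangleq\sum_{i\in N(S)}\frac{c_i(a)}{c_i(x)} = |N(S)|.
\]
So (\ref{item:pfsupp-pf}) is equivalent to: $g(a)=|N(S)|$ for all $a\in S$, together with the stated inequality for all $a\notin S$.

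\emph{Step 2: optimality of $x_S$.} Let $F(y)=\sum_{i\in N(S)}\log c_i(y)$ on $\Delta(S)$. The uniform lottery on $S$ gives every $i\in N(S)$ positive cost, so the maximum of the product is positive. $F$ is concave, being a sum of logs of linear functions, so KKT conditions are both necessary and sufficient. The gradient of $F$ at $x_S$ in coordinate $a$ is exactly $g(a)$, which is well defined because $c_i(x)>0$ for all $i\in N(S)$. Since $x_S$ has full support on $\Delta(S)$, it maximizes $F$ if and only if $g$ is constant on $S$, say equal to $\lambda$. The necessity direction follows by shifting mass between two support points. Finally, $\sum_{a\in S}x(a)g(a)=\sum_{i\in N(S)}\frac{c_i(x)}{c_i(x)}=|N(S)|$, which forces $\lambda=|N(S)|$. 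Combined with Step 1, this gives the equivalence in both directions.

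The main obstacle I expect is the careful treatment of the conventions $\nicefrac00=1$ and $\nicefrac{\alpha}{0}=\infty$ in Step 1. In particular, I must ensure that agents outside $N(S)$ are correctly ignored in the product. I must also check that the product maximization in (\ref{item:pfsupp-nash}) never involves a zero factor, so that the logarithmic KKT argument is legitimate; the positivity of the uniform lottery on $S$ handles this.
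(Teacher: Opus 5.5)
Your proposal is correct and follows essentially the same route as the paper. Both arguments use the fact that each agent outside $N(S)$ contributes exactly $1$ under the $\nicefrac{0}{0}=1$ convention, and both use the averaging identity $\sum_{a\in S}x(a)\sum_{i\in N(S)}c_i(a)/c_i(x)=|N(S)|$ to pin the constant on $S$ to $|N(S)|$. The one difference is that the paper gets the link to Nash-product maximality by invoking \Cref{thm:public-goods} on the restricted instance (alternatives $S$, agents $N(S)$, costs read as valuations), whereas you prove that link directly from the first-order conditions at the relative-interior point $x_S$; a side benefit is that your $g(a)=|N(S)|$ formulation makes the normalization by $|N(S)|$ rather than $n$ fully explicit.
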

\begin{proof}
	\proofstep{(\ref{item:pfsupp-pf}) $\Rightarrow$ (\ref{item:pfsupp-nash}).}
	Suppose $x$ is PF. \Cref{def:pf-bads} immediately gives us the second part of (\ref{item:pfsupp-nash}), i.e., that for all $a \in A \setminus S$,  $\frac{1}{n} \sum_{i \in N} \frac{c_i(a)}{c_i(x)} \ge 1$. It also tell us that for every $a \in S$, we must have $\frac{1}{n} \sum_{i \in N} \frac{c_i(a)}{c_i(x)} = 1$. Therefore, for every allocation $y$ with $\supp(y) \subseteq S$ we have $\frac{1}{n} \sum_{i \in N} \frac{c_i(y)}{c_i(x)} = 1$ and thus in particular $\frac{1}{n} \sum_{i \in N} \frac{c_i(y)}{c_i(x)} \le 1$ and also $\frac{1}{n} \sum_{i \in N(S)} \frac{c_i(y)}{c_i(x)} \le 1$ (because for every $i \in N \setminus N(S)$ we have $c_i(y) = 0$). Therefore, with the alternative set restricted to $S$ and the agent set restricted to $N(S)$, the allocation $x_S$ satisfies the public-goods version of PF (\Cref{def:pf-public-goods}) with costs interpreted as valuations. Hence, by \Cref{thm:public-goods}, $x_S$ maximizes Nash welfare among agents $N(S)$ over $\Delta(S)$ with costs interpreted as valuations, which gives us the first part of (\ref{item:pfsupp-nash}).
	
	\proofstep{(\ref{item:pfsupp-nash}) $\Rightarrow$ (\ref{item:pfsupp-pf}).}
	Suppose that $x_S \in \argmax_{y \in \Delta(S)} \prod_{i \in N(S)} c_i(y)$. Then by \Cref{thm:public-goods}, $x_S$ satisfies the public-goods version of PF and so $\sum_{i \in N(S)} \frac{c_i(a)}{c_i(x)} \le |N(S)|$ for all $a \in S$. In fact, this holds with equality, since if it was $< |N(S)|$ for some $a \in S$ (for which by definition of $S$ we have $x(a) > 0$), then 
	\[
	\textstyle
	|N(S)| = \sum_{i \in N(S)} \frac{c_i(x)}{c_i(x)} = \sum_{a \in S} x(a)  \sum_{i \in N(S)} \frac{c_i(a)}{c_i(x)} < |N(S)|,
	\] 
	a contradiction. Thus, for all $a \in S$, we have $\sum_{i \in N(S)} \frac{c_i(a)}{c_i(x)} = |N(S)|$. In addition, for each $i \in N \setminus N(S)$, we have $c_i(a) = c_i(x) = 0$ and thus $\frac{c_i(a)}{c_i(x)} = 1$ by our ratio convention. Hence $\frac{1}{n} \sum_{i \in N} \frac{c_i(a)}{c_i(x)} = 1$ for all $a \in S$. Together with the assumption of (\ref{item:pfsupp-nash}) about $a \in A \setminus S$, we in fact have $\frac{1}{n} \sum_{i \in N} \frac{c_i(a)}{c_i(x)} \ge 1$ for all $a \in A$, establishing that $x$ satisfies PF.
\end{proof}

The characterization implies that for each support, if there exists a PF allocation with that support, it can be obtained by maximizing the Nash product of costs.
Since there are only $2^m - 1$ supports and since by strict concavity the maximizer of a Nash product is unique in costs, we can deduce the following.

\begin{corollary}
	Let $C_{\text{PF}} = \{ c(x) : x \in \Delta(A) \text{ s.t. $x$ is PF} \}$ be the set of cost vectors of all PF allocations in a given instance. Then $|C_{\text{PF}}| \le 2^m - 1$.
\end{corollary}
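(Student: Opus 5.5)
The plan is to define a map from $C_{\text{PF}}$ to the nonempty subsets of $A$ and show that it is injective. Every PF allocation $x$ has a nonempty support $S=\supp(x)\subseteq A$, and there are exactly $2^m-1$ nonempty subsets of $A$. Send each $z\in C_{\text{PF}}$ to the support of some PF allocation $x$ with $c(x)=z$. It then suffices to show that two PF allocations with the same support induce the same cost vector.

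So fix a nonempty $S\subseteq A$, and let $x,x'$ be PF with $\supp(x)=\supp(x')=S$. Both allocations determine the same set $N(S)$, since this set depends only on $S$. By \Cref{prop:pf-supports}, both $x_S$ and $x'_S$ lie in $\argmax_{y\in\Delta(S)}\prod_{i\in N(S)}c_i(y)$. For agents $i\notin N(S)$ we have $c_i(a)=0$ for every $a\in S$, so $c_i(x)=c_i(x')=0$. It therefore remains to show that $c_{N(S)}(x)=c_{N(S)}(x')$.

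For this step I view the costs $c_i$ on $S$ as valuations in the public goods instance $(N(S),S,c)$. Every $i\in N(S)$ has $c_i(a)>0$ for some $a\in S$, so this instance satisfies the nondegeneracy assumption of \Cref{sec:public-goods}. \Cref{thm:public-goods} then says that all MNW lotteries of this instance induce the same utility vector, and hence $c_{N(S)}(x)=c_{N(S)}(x')$.

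If a self-contained argument is preferred, I would instead argue directly by strict concavity. The maximum Nash value is positive, and the utility set is convex. If $u\neq u'$ were two maximizing vectors, then their midpoint $(u+u')/2$ is feasible. Strict concavity of $\sum_i\log u_i$ on $\R_{>0}^{N(S)}$ would give the midpoint a strictly larger Nash product, which is a contradiction.

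Combining the two steps, the map from $C_{\text{PF}}$ to nonempty subsets of $A$ is injective, so $|C_{\text{PF}}|\le 2^m-1$. The only delicate point is the bookkeeping around zero-cost agents, namely making sure that $N(S)$ is the same for every PF allocation with support $S$ and that agents outside $N(S)$ have identical (zero) costs. Both facts follow immediately from the definition of $N(S)$, so I do not expect a real obstacle.
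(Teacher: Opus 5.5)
Your proposal is correct and follows the paper's argument. By \Cref{prop:pf-supports}, every PF allocation with support $S$ maximizes the Nash product of the costs of agents $N(S)$ over $\Delta(S)$, and strict concavity makes that maximizer unique in costs, so each of the $2^m-1$ nonempty supports contributes at most one PF cost vector; your bookkeeping for the zero-cost agents outside $N(S)$ and the injectivity framing simply make explicit what the paper leaves implicit.
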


\Cref{prop:pf-supports} suggests a simple exponential-time algorithm for enumerating all essentially different PF allocations, at least approximately: iterate through all supports $S$, approximately compute the Nash optimizer $y$ on $S$ using convex programming, and check whether $\supp(y) = S$ (i.e., it places strictly positive weight on each member of $S$) and whether $y$ satisfies the PF condition with respect to $A \setminus S$ (as in \Cref{prop:pf-supports}(2)).

While this algorithm is exhaustive and practical for analyzing small instances, it does not scale to cases with many alternatives. To efficiently compute an approximate strictly positive PF allocation, we follow a technique developed by \citet{chaudhury2024competitive} for computing CEEI in private bads instances. Their approach is based on deriving an optimization program with a similar structure to that of the dual of the Nash product maximization program and then arguing that its KKT points are equivalent to CEEI allocations. A similar dual program can be written down for the case of public bads, by analogy to the public goods dual program derived by \citet[Theorem 5]{KP25}.

\begin{equation}
	\label{eq:dual}
	\begin{aligned} 
		\underset{{\beta \ge 0 } }{\text{maximize}}\ & -\sum_{i\in N} \tfrac{1}{n} \log\beta_i \\ 
		\text{subject to} \ & \sum_{i\in N} c_i(a) \beta_i \ge 1 \quad \text{for all $a \in A$}
	\end{aligned}
\end{equation}
The standing assumption $0\notin\calC$ ensures that this program is feasible. Indeed, it implies $\sum_{i\in N}c_i(a)>0$ for every $a\in A$, so setting all $\beta_i$ to any sufficiently large common value satisfies every constraint.

We show below that the KKT points of this program correspond exactly to PF allocations $x$ with $c(x) \gg 0$, or equivalently to zero-respecting Lindahl equilibrium allocations (see \Cref{thm:bads-pf-equiv}).
\begin{proposition}\label{prop:kkt-points}
	The KKT points $(\beta, x)$ of the optimization program \eqref{eq:dual} are in one-to-one correspondence with allocations $x \in \Delta(A)$ that are PF and satisfy $c(x) \gg 0$.
\end{proposition}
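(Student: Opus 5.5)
We write the KKT conditions of \eqref{eq:dual} with multipliers $x(a)\ge 0$ for the constraints $\sum_i c_i(a)\beta_i\ge 1$. The objective $-\frac1n\sum_i\log\beta_i$ forces $\beta\gg 0$ at any point where it is finite, so the constraint $\beta\ge 0$ is never active. The Lagrangian is $-\frac1n\sum_i\log\beta_i+\sum_a x(a)\bigl(\sum_i c_i(a)\beta_i-1\bigr)$, and the KKT system for a maximization reads:
\begin{enumerate}
\item stationarity: $\frac{1}{n\beta_i}=\sum_a x(a)c_i(a)=c_i(x)$ for every $i\in N$;
\item primal feasibility: $\sum_i c_i(a)\beta_i\ge 1$ for all $a\in A$;
\item dual feasibility: $x\ge 0$;
\item complementary slackness: $x(a)>0\Rightarrow \sum_i c_i(a)\beta_i=1$.
\end{enumerate}
Here we treat $x$ only as a nonnegative vector for now; part of the plan is to show it is automatically a lottery.

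\emph{From a KKT point to a PF allocation.} Stationarity together with $\beta\gg0$ gives $c_i(x)=\frac{1}{n\beta_i}>0$, so $c(x)\gg 0$ and $\beta_i=\frac{1}{nc_i(x)}$. To see that $x\in\Delta(A)$, multiply stationarity by $\beta_i$ and sum over $i$. This gives
\[
1=\sum_i\beta_i c_i(x)=\sum_a x(a)\sum_i c_i(a)\beta_i=\sum_a x(a),
\]
where the last step uses complementary slackness. Substituting $\beta_i=\frac1{nc_i(x)}$ into primal feasibility and slackness yields exactly $\frac1n\sum_i\frac{c_i(a)}{c_i(x)}\ge1$ for all $a$, with equality when $x(a)>0$. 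This is the PF characterization from \Cref{def:pf-bads}, with no ratio conventions needed since $c(x)\gg0$.

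\emph{From a PF allocation to a KKT point.} Given a PF allocation $x$ with $c(x)\gg0$, define $\beta_i\triangleq\frac1{nc_i(x)}$. Each KKT condition is then the same computation read in reverse: stationarity holds by definition of $\beta$, and primal feasibility and slackness are the PF inequalities and equalities. So $(\beta,x)$ is a KKT point.

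\emph{One-to-one correspondence.} The map $(\beta,x)\mapsto x$ is a bijection between KKT points and such allocations, because $\beta$ is uniquely determined by $x$ via $\beta_i=1/(nc_i(x))$. For completeness I would also note that KKT conditions are the appropriate notion here despite the nonconvexity of the program (it maximizes a convex function): the constraints are linear, so no constraint qualification is needed for KKT points to be well defined. The main subtlety is deriving $\sum_a x(a)=1$ from the KKT system rather than imposing it, and checking that $\beta$ stays strictly positive; the rest is routine.
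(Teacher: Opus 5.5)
Your proof is correct. The converse direction (set $\beta_i = 1/(n c_i(x))$ and check each KKT condition) is exactly the paper's. In the forward direction you take a more direct route. The paper defines personalized prices $p_i(a) = c_i(a)\beta_i$ and verifies that $(x,p)$ is a zero-respecting Lindahl equilibrium: zero-respecting, profit maximization, spending $p_i\cdot x = 1/n$, $\sum_a x_a = 1$, and cost minimization. It then invokes \Cref{thm:bads-pf-equiv} to get strict positivity and PF. You instead substitute $\beta_i = 1/(nc_i(x))$ into primal feasibility and complementary slackness and read off the PF inequalities of \Cref{def:pf-bads} directly.

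The step showing $\sum_a x(a)=1$ is the same computation in both. The paper phrases it as summing the spending equalities over agents and using profit maximization with equality on the support.

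Your version is shorter and does not depend on the PF--Lindahl equivalence. The paper's version makes the market interpretation explicit, and its price construction $p_i(a)=c_i(a)\beta_i$ is reused in \Cref{lem:convergence-gives-approx-lindahl} to turn Frank--Wolfe iterates into approximate zero-respecting Lindahl equilibria. You also make injectivity of $(\beta,x)\mapsto x$ explicit, which the paper leaves implicit.

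One small quibble: KKT points are well defined by the conditions themselves, regardless of any constraint qualification. Linearity of the constraints is what guarantees that local optima are KKT points, and this statement doesn't need that.
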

\begin{proof}
	Write $x = (x_a)_{a \in A}$ for the dual variables corresponding to the constraints of program \eqref{eq:dual}. Suppose $(\beta, x)$ is a KKT point of the program. That means that $\beta$ is feasible and the following conditions hold.
	\begin{enumerate}
		\item \emph{Feasibility and domain constraint:} $\beta$ is a feasible solution to program \eqref{eq:dual} and $\beta_i > 0$ for all $i$ by the implicit domain constraint of the log objective. \label{item:kkt-domain-constraint}
		\item \emph{Dual feasibility:} $x_a \ge 0$ for all $a \in A$. \label{item:kkt-dual-feasibility}
		\item \emph{Stationarity:} $\frac{1}{n\beta_i} = \sum_{a \in A} c_{i}(a) x_a$ for each $i \in N$. \label{item:kkt-stationarity}
		\item \emph{Complementary slackness:} $x_a( \sum_{i\in N} c_i(a) \beta_i - 1) = 0$. \label{item:kkt-compl-slackness}
	\end{enumerate}
	Define personalized prices $p_i(a) = c_i(a) \beta_i$. We claim that $(x, p)$ is a zero-respecting Lindahl equilibrium (see \Cref{def:lindahl-bads}) and hence that $x$ is a strictly positive PF allocation (\Cref{thm:bads-pf-equiv}).
	
	\emph{Zero-respecting:} Follows immediately from the definition of $p_i(a)$.
	
	\emph{Profit maximization:} From feasibility, $\sum_{i \in N} p_i(a) \ge 1$, and by complementary slackness (\ref{item:kkt-compl-slackness}), if $x_a > 0$ then $\sum_{i \in N} p_i(a) = 1$.
	
	\emph{Spending constraint:} We have
	\[
	\sum_{a \in A} p_i(a) x_a = \beta_i \sum_{a \in A} c_i(a) x_a \overset{(\ref{item:kkt-stationarity})}{=} \frac1n.
	\]
	
	\emph{Sum of $x$:} We have $x \in \Delta(A)$ because, using profit maximization and summing the spending-constraint equality above,
	\[
	\sum_{a \in A} x_a = \sum_{a \in A} \left( \sum_{i \in N} p_i(a) \right) x_a = \sum_{i \in N} \sum_{a \in A} p_i(a) x_a = n \cdot \frac1n = 1.
	\]
	
	\emph{Cost minimization:} Let $y \in \mathbb{R}^A_{\ge 0}$ such that $p_i \cdot y \ge 1/n$. Then
	\[
	c_i(y) = \sum_{a \in A} c_i(a) y_a = \frac{1}{\beta_i} \sum_{a \in A} p_i(a) y_a \ge \frac{1}{n\beta_i} \overset{(\ref{item:kkt-stationarity})}{=} \sum_{a \in A} c_{i}(a) x_a = c_i(x).
	\]
	The clause ``$c_i(y) = c_i(x) = 0$'' of \Cref{def:lindahl-bads} never occurs in a zero-respecting Lindahl equilibrium (because $c_i(y) = 0$ implies that for every $a \in A$ with $y_a > 0$ we have $p_i(a) = 0$, so $p_i \cdot y = 0 \not\ge \frac1n$). \\
	
	Conversely, suppose that $x$ is a strictly positive PF allocation. Define $\beta_i = 1/(n c_i(x))$. We claim that $(\beta, x)$ is a KKT point of the program. Indeed, for (\ref{item:kkt-domain-constraint}) feasibility follows from PF since $\sum_{i\in N} c_i(a) \beta_i = \frac{1}{n} \sum_{i \in N} \frac{c_i(a)}{c_i(x)} \ge 1$, and the implicit domain constraint holds by strict positivity; (\ref{item:kkt-dual-feasibility}) holds because $x \in \Delta(A)$; (\ref{item:kkt-stationarity}) holds by definition of $\beta$ and of $c_i(x)$; and complementary slackness (\ref{item:kkt-compl-slackness}) holds by the PF condition which holds with equality on its support.
\end{proof}
As we discussed in \Cref{sec:bads-lindahl}, zero-respecting Lindahl allocations do not always exist, and thus the program may not have KKT points. Therefore, we again consider the axes-cutting assumption under which existence of such allocations is guaranteed (\Cref{thm:axes-cut-existence}). Under this assumption, one can check that the objective value of \eqref{eq:dual} is bounded, and thus the program has no ``poles''. This means that we can apply the Greedy Frank-Wolfe algorithm proposed by \citet{chaudhury2024competitive} on the program \eqref{eq:dual} to compute an approximate zero-respecting Lindahl equilibrium. That algorithm is an iterative first-order method that in each update step only needs to solve a linear program; thus can be run in polynomial time. We give the details below.

\begin{theorem}\label{thm:frank-wolfe}
	For an axes-cutting instance, for each $i \in N$ let $\alpha_i = c_i(a_i)$, where $a_i \in A$ is the alternative that only imposes costs on $i$, and let $\delta = \min \{ c_i(a) : i \in N, a \in A, c_i(a) > 0 \}$ be the smallest positive cost in the instance. Then, for every $0 < \epsilon \le 1$, the Greedy Frank-Wolfe algorithm finds an $\epsilon$-approximate zero-respecting Lindahl equilibrium in $O(\frac{\sum_i\log(\alpha_i/\delta)}{\epsilon^2})$ iterations.
\end{theorem}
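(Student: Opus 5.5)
We will apply the convergence guarantee of \citet{chaudhury2024competitive} for Greedy Frank--Wolfe to program \eqref{eq:dual}. That guarantee says: if a program shaped like \eqref{eq:dual} has no poles and the algorithm starts at a feasible point, then it reaches an $\epsilon$-approximate KKT point within $O(\Phi/\epsilon^2)$ iterations. Here $\Phi$ is an upper bound on the total objective increase available from the starting point. Two further ingredients are needed. First, by \Cref{prop:kkt-points} and its proof, an approximate KKT point $(\beta,x)$ yields approximate equilibrium prices via $p_i(a)=c_i(a)\beta_i$. Second, the iteration bound follows once we bound the objective gap by $O(\sum_i \log(\alpha_i/\delta))$. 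Accordingly, the work splits into choosing an initialization, bounding the objective from above, and translating approximate KKT conditions into approximate Lindahl conditions.

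\emph{Initialization.} We start at $\beta^0_i = 1/\delta$ for every $i$. This point is feasible: by $0\notin\calC$, every $a\in A$ has some $i$ with $c_i(a)\ge\delta$, so $\sum_i c_i(a)\beta^0_i\ge 1$. Its objective value is $\frac1n\sum_i \log\delta$.

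\emph{Upper bound.} Consider any feasible $\beta$. Applying the constraint to the axes-cutting alternative $a_i$ gives $c_i(a_i)\beta_i\ge 1$, so $\beta_i\ge 1/\alpha_i$. Hence $-\frac1n\sum_i\log\beta_i\le \frac1n\sum_i\log\alpha_i$. The objective is therefore bounded above (no poles), and the total possible improvement from $\beta^0$ is at most $\frac1n\sum_i\log(\alpha_i/\delta)$. Note $\alpha_i\ge\delta$, so each term is nonnegative. This is the quantity $\Phi$ that enters the iteration count; a factor $1/n$ only helps.

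\emph{Transfer to equilibrium.} We need to interpret the $\epsilon$-approximate stationarity, complementary slackness and feasibility delivered by the Frank--Wolfe analysis, following the proof of \Cref{prop:kkt-points}. These should give the following approximate versions of the Lindahl conditions:
\begin{itemize}
\item $\beta_i\sum_a c_i(a)x_a\in[(1-\epsilon)/n,(1+\epsilon)/n]$, an approximate spending constraint;
\item $\sum_a x_a$ within $1\pm\epsilon$ of $1$, after which we normalize $x$;
\item profit maximization, with $\sum_i p_i(a)\ge1$ exactly by feasibility and approximate equality on the support;
\item approximate cost minimization, with the same chain of inequalities as in \Cref{prop:kkt-points}, up to a $(1\pm\epsilon)$ factor.
\end{itemize}
Zero-respecting holds automatically from $p_i(a)=c_i(a)\beta_i$.

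The main obstacle is matching the notion of approximate KKT point produced by the Greedy Frank--Wolfe analysis to the paper's notion of an $\epsilon$-approximate zero-respecting Lindahl equilibrium. In particular, the per-step progress lower bound $\Omega(\epsilon^2)$ in that analysis relies on the curvature of $-\log\beta_i$ along feasible directions. I would first check that their argument (which for chores uses the same $\log$ objective with linear covering constraints) applies verbatim, with $\beta$ kept bounded as above. If it does not, I would redo the one-step progress calculation, using the line search along the LP direction to show that either an $\epsilon^2$ increase is available or the current point is $\epsilon$-approximately KKT.
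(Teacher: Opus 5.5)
Your initialization $\beta^0_i=1/\delta$ and your bound $\beta_i\ge 1/\alpha_i$ are exactly what the paper uses. The step you call ``the main obstacle,'' however, is the real content of the proof, and your proposal leaves it open. As stated and used in the paper, the convergence theorem of \citet{chaudhury2024competitive} does not give you an approximate KKT pair $(\beta,x)$. It only guarantees consecutive iterates with $\sum_i(\beta^t_i/\beta^{t-1}_i-1)^2\le\eta^2$. The Greedy Frank--Wolfe iterates are vectors $\beta$ only, and you never say where the allocation $x$ comes from.

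The missing idea is to read $x$ off the linear program solved at step $t$. That program is $\min_{\beta\ge0}\sum_i\frac{1}{n}\beta_i/\beta^{t-1}_i$ subject to $\sum_i c_i(a)\beta_i\ge1$ for all $a$. Its LP dual is $\max_{x\ge0}\sum_a x_a$ subject to $\sum_a c_i(a)x_a\le 1/(n\beta^{t-1}_i)$ for all $i$. Take $x^t$ dual-optimal and set prices $p_i(a)=c_i(a)\beta^t_i$, using the new iterate. Then:
\begin{itemize}
\item Exact LP complementary slackness gives profit maximization exactly.
\item Axes-cutting gives $\beta^t_i\ge1/\alpha_i>0$, so every dual constraint is tight: $c_i(x^t)=1/(n\beta^{t-1}_i)$.
\item Normalize by $S=\sum_a x^t_a=\frac{1}{n}\sum_i\beta^t_i/\beta^{t-1}_i\in[1-\eta,1+\eta]$. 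Spending becomes $\frac{1}{n}\cdot\frac{\beta^t_i/\beta^{t-1}_i}{S}$, which is within $\frac{2\eta}{n(1-\eta)}\le4\eta$ of $\frac{1}{n}$ when $\eta\le\frac12$.
\item Cost minimization holds exactly, because $p_i\cdot y=\beta^t_i\,c_i(y)$ for every $y$.
\end{itemize}
The only error comes from the mismatch between $\beta^{t-1}$ and $\beta^t$, and running with $\eta=\epsilon/4$ finishes. This is the analogue of \Cref{prop:kkt-points} you had in mind, but with multipliers of the linearized subproblem rather than of program \eqref{eq:dual}.

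This gap is not cosmetic. The targets you list, approximate equality on the support and cost minimization up to a $(1\pm\epsilon)$ factor, would not satisfy the paper's definition of an $\epsilon$-approximate zero-respecting Lindahl equilibrium. That definition relaxes only the spending constraint; profit maximization and cost minimization must hold exactly, and the LP-dual construction is what makes them exact.

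Your fallback of redoing a one-step progress bound with a line search addresses the wrong issue. The iteration bound is already supplied by the black box, and Greedy Frank--Wolfe jumps straight to the LP optimum rather than doing a line search.

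A smaller point: the black-box bound scales as $\frac{3(g^*-g(\beta^0))}{\min_i a_i\,\eta^2}$ with $\min_i a_i=\frac1n$. So the factor $\frac1n$ in your $\Phi$ does not ``only help.'' It is exactly cancelled, which is why the count is $O(\sum_i\log(\alpha_i/\delta)/\epsilon^2)$ and nothing smaller.
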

\begin{algorithm}[t]
	\caption{Greedy Frank--Wolfe algorithm for solving program \eqref{eq:log-type-convex-max-prob}}
	\label{alg:GFW}
	\DontPrintSemicolon
	\KwIn{An initial feasible point $x^0 \in \mathcal{X}$; stopping tolerance $\eta > 0$.}
	\For{$t = 1,2,\ldots$}{
		$x^{t} \gets \arg\max_{x\in \mathcal X}\ \smash{-\sum_{i \in [n]} a_i \frac{x_i}{x^{t-1}_i}} $\;
		\If{$\sum_i \big( \frac{x^{t}_i}{x^{t-1}_i} - 1 \big)^2 \leq \eta^2$}{
			\Return{$x^t$}\;
		}
	}
\end{algorithm}

We prove this result using the methods of \citet{chaudhury2024competitive} who study the following optimization problem, of which program \eqref{eq:dual} is a special case:
\begin{equation}
	\begin{aligned}
		\underset{x \in \R^n_{>0}}{\text{maximize}}\ & g(x) := -\sum_{i \in [n]} a_i \log{x_i} \\ 
		\text{subject to} \ & x \in \mathcal{X},
	\end{aligned}
	\label{eq:log-type-convex-max-prob}
\end{equation}
where $a_i > 0$ for each $i \in [n]$ and $\mathcal{X} \subseteq \mathbb R^n_{>0}$ is a nonempty closed convex set. 

\citet{chaudhury2024competitive} prove the following convergence bound, generalizing results of \citet{journee2010generalized}.

\begin{theorem}[\citealp{chaudhury2024competitive}, Theorem 6]\label{thm:gfw-convergence-rate}
	Suppose we solve program~\eqref{eq:log-type-convex-max-prob} using the Greedy Frank-Wolfe algorithm (\Cref{alg:GFW}) starting from a feasible point $x^0$. Assume we are given a finite upper bound $g^* < +\infty$ on the value of the objective function of \eqref{eq:log-type-convex-max-prob}. Then, for any $\eta > 0$, we can find a pair of iterates $(x^t, x^{t + 1})$ such that $\sum_i \big( \frac{x^{t+1}_i}{x^t_i} - 1 \big)^2 \leq \eta^2$ after at most 
	\[
		t \le \left\lceil\frac{3(g^* - g(x^0))}{\min_i a_i} \frac{1}{\eta^2} + \frac{g^* - g(x^0)}{c \min_i a_i}\right\rceil
	\]
	iterations, where $c = \sqrt{3} - 1 - \frac{1}{2}\log{3} > 0$. 
	\label{crl:stepsize-convergence-for-log-type-convex-max}
\end{theorem}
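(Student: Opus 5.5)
The plan is the standard ascent argument for Frank--Wolfe on a convex objective that is being \emph{maximized}. Each iteration increases $g$ by an amount we can bound below in terms of the ratios $r_i \triangleq x^t_i/x^{t-1}_i$. The total increase telescopes and is capped by $g^* - g(x^0)$, which bounds the number of iterations at which the stopping test fails.

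\emph{Step 1 (per-step gain).} Because $x^{t-1}\in\mathcal{X}$ is itself feasible for the linear subproblem, optimality of $x^t$ gives $-\sum_i a_i r_i \ge -\sum_i a_i$, i.e.\ $\sum_i a_i(1-r_i)\ge 0$. The exact gain is
\[
g(x^t)-g(x^{t-1}) = -\sum_i a_i\log r_i .
\]
Adding the nonnegative quantity $\sum_i a_i (r_i - 1)$ cannot be done directly; instead I add and subtract it and use the sign of $\sum_i a_i(1-r_i)$:
\[
-\sum_i a_i \log r_i \;\ge\; \sum_i a_i\,\varphi(r_i), \qquad \varphi(r)\triangleq r-1-\log r\ge 0 .
\]
Hence the gain is at least $(\min_i a_i)\sum_i \varphi(r_i)$. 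In particular $g$ is nondecreasing along the iterates.

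\emph{Step 2 (lower-bounding $\varphi$).} This is where the constant $c$ comes from. Note that $\varphi(\sqrt3) = \sqrt3-1-\tfrac12\log 3 = c$. I would prove two facts:
\begin{enumerate}
\item $\varphi(r)\ge (r-1)^2/3$ for all $r\in(0,\sqrt3]$;
\item $\varphi(r)\ge c$ for all $r\ge\sqrt3$, which holds because $\varphi$ is increasing on $[1,\infty)$.
\end{enumerate}
For the first fact, let $\psi(r)=\varphi(r)-(r-1)^2/3$. Then $\psi(1)=0$ and $\psi'(r)=(r-1)\bigl(\tfrac1r-\tfrac23\bigr)$, so $\psi$ decreases on $(0,1]$ and increases on $[1,\tfrac32]$. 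On $[\tfrac32,\sqrt3]$ it decreases again, but it stays nonnegative: $\psi(\sqrt3)=c-(4-2\sqrt3)/3>0$ numerically. Checking this endpoint and the monotonicity pattern is the one fiddly calculus step. I expect it to be the main (though minor) obstacle, since it is exactly what pins down the choice of threshold $\sqrt3$.

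\emph{Step 3 (counting).} Call iteration $t$ \emph{large} if some $r_i\ge\sqrt3$, and \emph{small} otherwise.
\begin{enumerate}
\item Every large step gains at least $c\min_i a_i$.
\item Every small step at which the stopping test fails, i.e.\ $\sum_i(r_i-1)^2>\eta^2$, gains at least $\tfrac13\,\eta^2\min_i a_i$ by the first fact of Step 2.
\end{enumerate}
The gains are nonnegative and sum to at most $g^*-g(x^0)$. Therefore there are at most $(g^*-g(x^0))/(c\min_i a_i)$ large steps and at most $3(g^*-g(x^0))/(\eta^2\min_i a_i)$ failing small steps. After that many iterations, plus rounding, some consecutive pair $(x^t,x^{t+1})$ must satisfy $\sum_i(x^{t+1}_i/x^t_i-1)^2\le\eta^2$, which is the claimed bound.
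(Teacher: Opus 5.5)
Your proof is correct. The paper does not prove this statement itself; it imports it from \citet{chaudhury2024competitive}. Your argument is a valid self-contained derivation that reproduces both terms of the stated bound exactly. It combines the ascent inequality $g(x^t)-g(x^{t-1})\ge\sum_i a_i\varphi(r_i)$, which follows from LP optimality against the feasible point $x^{t-1}$, with a split of steps by whether some ratio reaches $\sqrt3$, using $c=\varphi(\sqrt3)$.

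Two small fixes. In Step 1, the quantity $\sum_i a_i(r_i-1)$ is nonpositive, not nonnegative, although your displayed inequality is right. The endpoint check $\psi(\sqrt3)>0$ has a thin margin of about $0.004$: it is equivalent to $\log 3<(10\sqrt3-14)/3\approx 1.1068$, so state it with explicit decimals rather than ``numerically''.
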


Thus, to obtain our result, we need to show that close consecutive iterates encode approximate Lindahl equilibria. We first define the formal sense in which we are approximating the concept of Lindahl equilibrium.

\begin{definition}[Approximate Zero-Respecting Lindahl Equilibrium]
A pair $(x,p)$ of an allocation $x \in \Delta(A)$ of public bads and personalized prices $p = (p_i : A \to \R_{\ge 0})_{i \in N}$ forms an $\epsilon$-approximate zero-respecting \emph{Lindahl equilibrium} if:
\begin{enumerate}
\item \emph{Zero-respecting:} For every $i \in N$ and $a \in A$, $p_i(a) = 0$ whenever $c_i(a) = 0$.
\item \emph{Approximate spending constraint:} $\nicefrac{1}{n} - \epsilon \leq p_i \cdot x \leq \nicefrac{1}{n} + \epsilon$ for all $i \in N$.
\item \emph{Cost minimization:} For every $i \in N$ and $y \in \R_{\geq 0}^A$ such that $p_i \cdot y \geq p_i \cdot x$, we have $c_i(y) \geq c_i(x)$.
\item \emph{Profit maximization:} $\sum_{i \in N}p_i(a) \geq 1$ for every $a \in A$, with equality if $x(a) > 0$.
\end{enumerate}
\end{definition}
Note that only the spending constraint is approximated. The cost-minimization condition is correspondingly adapted to compare $x$ with alternative allocations $y$ whose spending is at least $p_i \cdot x$.

\begin{lemma}\label{lem:convergence-gives-approx-lindahl}
	Suppose the instance satisfies axes-cut. Consider consecutive iterates $\beta^{t-1}$ and $\beta^t$ computed by the Greedy Frank-Wolfe algorithm applied to \eqref{eq:dual}. If, for some $0 < \eta \le 1/2$, they satisfy $|\beta_i^t/\beta_i^{t-1} - 1| \le \eta$ for every $i \in N$, then the dual variables of the LP for computing $\beta^t$ encode a $4\eta$-approximate zero-respecting Lindahl equilibrium.
\end{lemma}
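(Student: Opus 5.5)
Instantiate the Greedy Frank--Wolfe step for program \eqref{eq:dual}. Here $g(\beta) = -\sum_i \frac1n \log \beta_i$, so $a_i = \nicefrac1n$. The update computes
\[
\beta^t \in \argmax_{\beta \in \mathcal{X}} -\sum_{i\in N} \frac{1}{n}\frac{\beta_i}{\beta^{t-1}_i},
\]
where $\mathcal{X} = \{\beta \ge 0 : \sum_i c_i(a)\beta_i \ge 1 \ \forall a\}$. This is an LP: minimize $\sum_i w_i\beta_i$ with $w_i = \frac{1}{n\beta^{t-1}_i}$, subject to the covering constraints. Axes-cutting keeps $\beta^t$ strictly positive and bounded. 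Indeed, the constraint for $a_i$ forces $\beta_i \ge 1/\alpha_i$, and LP optimality keeps $\beta_i$ bounded.

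Let $x = (x_a)_{a\in A} \ge 0$ be optimal LP dual variables. LP duality gives two facts. First, stationarity for each $i$ with $\beta^t_i>0$ is $w_i = \sum_a c_i(a)x_a$; this is an equality since $\beta^t_i>0$. Second, complementary slackness gives $x_a(\sum_i c_i(a)\beta^t_i - 1)=0$. Set prices $p_i(a) = c_i(a)\beta^t_i$, exactly as in the proof of \Cref{prop:kkt-points}. Zero-respecting is then immediate. Profit maximization follows from feasibility of $\beta^t$ together with complementary slackness.

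The spending of agent $i$ is
\[
p_i\cdot x = \beta^t_i \sum_a c_i(a)x_a = \frac{\beta^t_i}{n\beta^{t-1}_i} \in \Bigl[\tfrac{1-\eta}{n},\tfrac{1+\eta}{n}\Bigr].
\]
This gives additive error $\eta/n \le \eta$.

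Cost minimization in the adapted form is automatic. Agent $i$'s pain-per-buck is constant, $c_i(a)/p_i(a) = 1/\beta^t_i$ on all $a$ with $c_i(a)>0$. Hence $p_i\cdot y \ge p_i\cdot x$ implies $c_i(y) = p_i\cdot y/\beta^t_i \ge c_i(x)$.

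The main obstacle is that $x$ must lie in $\Delta(A)$, whereas LP duality only gives $x\ge 0$. Summing as in \Cref{prop:kkt-points} gives
\[
\sum_a x_a = \sum_a \sum_i p_i(a)x_a = \sum_i p_i\cdot x \in [1-\eta,1+\eta].
\]
The fix is to normalize: $\hat x = x/\sum_a x_a \in \Delta(A)$. Support, zero-respecting, profit maximization and cost minimization (by constant pain-per-buck) are all unaffected. The spending becomes
\[
p_i\cdot \hat x = \frac{\beta^t_i/\beta^{t-1}_i}{n\sum_a x_a},
\]
which lies between $\frac{1-\eta}{n(1+\eta)}$ and $\frac{1+\eta}{n(1-\eta)}$. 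For $\eta\le\nicefrac12$ we have $\frac{1+\eta}{1-\eta} - 1 = \frac{2\eta}{1-\eta} \le 4\eta$, and the lower side is similar. So the additive deviation from $\nicefrac1n$ is at most $4\eta/n \le 4\eta$, which is the required $4\eta$-approximate zero-respecting Lindahl equilibrium.

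The remaining care is in checking that the LP attains its optimum with dual solutions, which holds since the LP is feasible and bounded below. It also requires the earlier observation that $\beta^t_i>0$, so that stationarity holds with equality and not merely as an inequality.
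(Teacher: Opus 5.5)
Your proposal is correct and follows essentially the same route as the paper: prices $p_i(a)=c_i(a)\beta_i^t$, tightness of each agent's dual constraint because axes-cutting forces $\beta_i^t\ge 1/\alpha_i>0$, normalization of the LP dual solution by $\sum_a x_a\in[1-\eta,1+\eta]$, and the resulting bound $2\eta/(n(1-\eta))\le 4\eta$ on the spending error. The only cosmetic difference is that you get $\sum_a x_a=\sum_i p_i\cdot x$ by summing budgets via complementary slackness, whereas the paper reads the same value off strong duality as the primal objective.
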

\begin{proof}
	The Greedy Frank-Wolfe algorithm solves the following LP in each step:
	\[
	\underset{\beta \ge 0}{\text{min}} \:
	\sum_{i \in N} \frac1n \frac{\beta_i}{\beta^{t-1}_i} 
	\quad \text{s.t.} \quad
	\sum_{i\in N} c_i(a) \beta_i \ge 1 \quad \text{for all $a \in A$}.
	\]
	Its LP dual is
	\[
	\underset{x \ge 0}{\text{max}} \:
	\sum_{a \in A} x_a
	\quad \text{s.t.} \quad
	\sum_{a\in A} c_i(a) x_a \le \frac{1}{n \beta_i^{t-1}} \quad \text{for all $i \in N$}.
	\]
	Let $x^t$ denote the optimum dual variables corresponding to the iterate $\beta^t$. Set $S = \frac{1}{n}\sum_i \frac{\beta_i^t}{\beta_i^{t-1}}$ to be the objective value of the primal (and by strong duality also of the dual). Define personalized prices $p_i^t(a) = c_i(a)\beta_i^t$ for all $i \in N$ and $a \in A$ and define the allocation $\bar x^t = x^t / S$. Note that $\bar x^t \in \Delta(A)$ because $S = \sum_{a\in A} x_a$ by strong duality.
	
	We claim that $(\bar x^t, p^t)$ forms an approximate zero-respecting Lindahl equilibrium.
	
	\emph{Zero-respecting:} Immediate from the definition of the personalized prices.
	
	\emph{Profit maximization:} Note that $\sum_{i \in N} p_i^t(a) \ge 1$ follows from feasibility in the primal, and from complementary slackness we get that if $x^t_a > 0$ then $\sum_{i \in N} p_i^t(a) = 1$.
	
	\emph{Approximate spending constraint:} Let $i \in N$. Let $a_i \in A$ be an alternative with $c_i(a_i) > 0$ and $c_j(a_i) = 0$ for all $j \neq i$, whose existence is guaranteed by the axes-cut condition. Note that by primal feasibility $c_i(a_i) \beta^t_i \ge 1$ and hence $\beta^t_i > 0$.
	Thus, by complementary slackness, the $i$th constraint of the dual holds with equality, so we have $c_i(x^t) =  \frac{1}{n\beta_i^{t-1}}$. Thus
	\[
	\sum_{a \in A} p_i^t(a) \bar x^t_a = \beta_i^t c_i(\bar x^t) = \beta_i^t \cdot \frac{c_i(x^t)}{S} = \beta_i^t \cdot \frac{1}{n\beta_i^{t-1} S} = \frac{1}{n}\cdot\frac{\beta_i^t/\beta_i^{t-1}}{S}.
	\]
	Since $|\beta_i^t/\beta_i^{t-1} - 1| \le \eta$ for every $i \in N$, we have $1-\eta \le S \le 1+\eta$. Therefore
	\[
	\left|p_i^t \cdot \bar x^t - \frac1n\right|
	= \frac{|\beta_i^t/\beta_i^{t-1}-S|}{nS}
	\le \frac{2\eta}{n(1-\eta)}
	\le \frac{4\eta}{n}
	\le 4\eta,
	\]
	where the penultimate transition uses $\eta \le 1/2$. Hence the additive approximate spending constraint holds with error at most $4\eta$.
	
	\emph{Cost minimization:} Let $i \in N$ and $y \in \R_{\geq 0}^A$ satisfy $p_i^t \cdot y \ge p_i^t \cdot \bar x^t$. By the definition of the personalized prices,
	\[
	p_i^t \cdot y = \beta_i^t c_i(y)
	\quad\text{and}\quad
	p_i^t \cdot \bar x^t = \beta_i^t c_i(\bar x^t).
	\]
	Since $\beta_i^t > 0$, it follows that $c_i(y) \ge c_i(\bar x^t)$.
\end{proof}

\begin{proof}[Proof of \Cref{thm:frank-wolfe}]
	For each $i \in N$ let $\alpha_i = c_i(a_i)$, where $a_i \in A$ is the alternative that only imposes costs on $i$ and let $\delta = \min \{ c_i(a) : i \in N, a \in A, c_i(a) > 0 \}$ be the smallest positive cost in the instance. Note that $\delta > 0$ by the standing assumption that $0 \not\in \calC$. Then for any feasible solution to program \eqref{eq:dual}, by looking at the constraint corresponding to $a_i$, we see that $\alpha_i \beta_i \ge 1$, so $\beta_i \ge 1/\alpha_i$. Therefore the objective function value of any feasible solution of the program is upper bounded as follows:
	\[
	-\sum_{i\in N} \tfrac{1}{n} \log\beta_i \le -\sum_{i\in N} \tfrac{1}{n} \log(1/\alpha_i) = \tfrac{1}{n} \sum_{i\in N} \log\alpha_i.
	\]
	Thus, we can invoke \Cref{thm:gfw-convergence-rate} (convergence of Greedy Frank-Wolfe) with the algorithm starting from the initial point $\beta^0$ with $\beta^0_i = 1/\delta$ for all $i \in N$ (which is a feasible solution of program \eqref{eq:dual}).
	We have $\min_i a_i = \frac1n$ and $g^* - g(\beta^0)  \le \tfrac{1}{n} \sum_{i\in N} \log \frac{\alpha_i}{\delta}$.
	Given a target equilibrium approximation $0 < \epsilon \le 1$, run the algorithm with stopping tolerance $\eta = \epsilon/4$. In particular, $\eta \le 1/4$, so the conclusion of \Cref{thm:gfw-convergence-rate} holds after $O(\frac{\sum_i\log(\alpha_i/\delta)}{\eta^2}) = O(\frac{\sum_i\log(\alpha_i/\delta)}{\epsilon^2})$ iterations. At termination, the stopping criterion implies $|\beta_i^t/\beta_i^{t-1}-1| \le \eta$ for every $i \in N$. Therefore, by \Cref{lem:convergence-gives-approx-lindahl}, the dual variables encode a $4\eta = \epsilon$-approximate zero-respecting Lindahl equilibrium.
\end{proof}

\section{Discussion}\label{sec:discussion}

While we were able to provide a thorough treatment of PF and Lindahl equilibrium as applied to the public bads setting, our work leaves much to do. Arguably, the most pressing open question raised by this research is whether there exists a natural rule that combines the advantages of strictly positive PF solutions and the Flipped-MNW rule. Namely, is there a rule that works well in axes-cutting and private-induced instances, guaranteeing strong properties such as the BE core there, while also achieving some notion of fairness on general instances, such as the completion core or even just IFS?

In a similar vein, designing better fairness definitions for the public bads setting is another important future direction. Bounded-Externality and the Completion core both capture intuitive fairness properties in the settings they work well in, but neither of them can be seen as the ``ultimate'' definition of fairness across all public bads instances. Finding a clear interpretation of ``The Core'' for this setting that is not too weak in some instances, and does not only achieve fairness properties in special cases, would help shed light on the unique structure of the public bads setting, and would likely lead directly to some natural rule that could satisfy it.

\section*{AI Acknowledgment}
\Cref{lem:pf-characterization} and \Cref{ex:private-bads-pf-characterization-fails} were derived using GPT-5.2-Thinking. The model was tasked with proving the corresponding characterization for private bads due to \citet{BMSY17}. It found \Cref{ex:private-bads-pf-characterization-fails}, proving that the characterization of \citet{BMSY17} does not extend to public bads, and the necessary change (upward closure instead of Pareto optimality) required to make the characterization work for public bads, along with a complete proof of \Cref{lem:pf-characterization}.

\section*{Funding Acknowledgments}
Cookson, Ebadian, and Shah were supported by an NSERC Discovery Grant and an NSERC-CSE Research Communities Grant. Researchers funded through the NSERC-CSE Research Communities Grants do not represent the Communications Security Establishment Canada or the Government of Canada. Any research, opinions or positions they produce as part of this initiative do not represent the official views of the Government of Canada. Peters was funded in part by the Agence Nationale de la Recherche as part of the France 2030 program under grant ANR-23-IACL-0008 (PR[AI]RIE-PSAI).

\bibliographystyle{ACM-Reference-Format}
\bibliography{abb, nisarg, ultimate}

\newpage
\appendix
\section*{\centering{\LARGE Appendix}}

\section{Missing Details from \Cref{sec:public-goods}}\label{app:public-goods}

\publicGoods*
\begin{proof}
    We prove the equivalence via the cycle Lindahl $\Rightarrow$ PF $\Rightarrow$ MNW $\Rightarrow$ Lindahl, and then establish the additional properties.

    \proofstep{Lindahl $\Rightarrow$ PF.}
    Let $(x,p)$ be a Lindahl equilibrium. We first argue that $v_i(x) > 0$ for every agent $i \in N$. By assumption, agent $i$ has $v_i(a) > 0$ for some alternative $a \in A$.
    If $p_i(a) = 0$ for that $a$, then agent $i$ could purchase an unbounded amount of $a$ within budget, contradicting the existence of a finite optimum; hence $p_i(a) > 0$.
    The vector $y \in  \R_{\geq 0}^A$ such that $y_a = \frac{1}{n \cdot p_i(a)}$ and $y_{a'} = 0$ for all $a' \neq a$ then satisfies agent $i$'s spending constraint (with $p_i \cdot y = \nicefrac{1}{n}$)
    and yields utility $v_i(y) = \frac{v_i(a)}{n \cdot p_i(a)} > 0$, so utility maximization gives $v_i(x) \ge v_i(y) > 0$.

    Now, by utility maximization, the KKT conditions for agent $i$'s problem $\max_{y \ge 0 : p_i \cdot y \le 1/n} v_i(y)$ yield a multiplier $\lambda_i \ge 0$ such that
    \[
        v_i(a) \le \lambda_i \, p_i(a) \quad \text{for all } a \in A,
    \]
    with equality whenever $x(a) > 0$. Multiplying by $x(a)$ and summing over $A$ gives
    \[
        v_i(x) = \lambda_i \cdot (p_i \cdot x) = \frac{\lambda_i}{n},
    \]
    where the last step uses budget exhaustion ($p_i \cdot x = \nicefrac{1}{n}$, which follows from utility maximization together with $v_i(x) > 0$). Hence $\lambda_i = n \cdot v_i(x)$. For any $y \in \Delta(A)$, summing $v_i(y) \le \lambda_i(p_i \cdot y)$ over all agents yields
    \[
        \sum_{i \in N} \frac{v_i(y)}{v_i(x)}
        \le n \sum_{i \in N} p_i \cdot y
        \le n \sum_{a \in A} y(a)
        = n,
    \]
    where we used profit maximization ($\sum_{i \in N} p_i(a) \le 1$ for all $a$). This is precisely PF.

    \proofstep{PF $\Rightarrow$ MNW.}
    Let $x$ be a PF lottery. We first show $v_i(x) > 0$ for all $i \in N$. Indeed, for each agent $i$, there exists some $a \in A$ with $v_i(a) > 0$. If $v_i(x) = 0$, then $\nicefrac{v_i(a)}{v_i(x)} = +\infty$ by our convention, so $\frac{1}{n}\sum_{j \in N}\frac{v_j(a)}{v_j(x)} = +\infty > 1$, violating PF. Thus $v_i(x) > 0$ for all $i$.

    Consider the concave function $F(u) \triangleq \sum_{i \in N} \log u_i$ on $\R_{> 0}^N$. The PF condition states that for all $y \in \Delta(A)$,
    \[
        \nabla F(v(x)) \cdot v(y) = \sum_{i \in N} \frac{v_i(y)}{v_i(x)} \le n = \sum_{i \in N} \frac{v_i(x)}{v_i(x)} = \nabla F(v(x)) \cdot v(x).
    \]
    Since $\calU = \set{v(y) : y \in \Delta(A)}$, this means $\nabla F(v(x)) \cdot (u - v(x)) \le 0$ for all $u \in \calU$. Because $F$ is concave and $\calU$ is convex, this first-order condition implies that $v(x)$ is a global maximizer of $F$ over $\calU$. Hence $x$ maximizes $\sum_{i \in N}\log v_i(\cdot)$, and equivalently $\prod_{i \in N} v_i(\cdot)$, over $\Delta(A)$.

    \proofstep{MNW $\Rightarrow$ Lindahl.}
    Let $x$ be an MNW lottery. We first show $v_i(x) > 0$ for all $i \in N$. Since each agent values at least one alternative positively, any full-support lottery has strictly positive Nash welfare; hence any MNW lottery must also have $\NW(x) > 0$, which forces $v_i(x) > 0$ for all $i$.

    Now, $x$ maximizes the concave function $F(y) \triangleq \sum_{i \in N} \log v_i(y)$ over the simplex $\Delta(A)$. The KKT conditions yield a scalar $\mu \in \R$ such that
    \begin{align*}
        \sum_{i \in N} \frac{v_i(a)}{v_i(x)} &= \mu \quad \text{for all } a \in \supp(x), \\
        \sum_{i \in N} \frac{v_i(a)}{v_i(x)} &\le \mu \quad \text{for all } a \notin \supp(x).
    \end{align*}
    Multiplying the equalities by $x(a)$ and summing over $\supp(x)$ gives $\sum_{i \in N} 1 = \mu$, so $\mu = n$. Define personalized prices
    \[
        p_i(a) \triangleq \frac{1}{n} \cdot \frac{v_i(a)}{v_i(x)} \quad \text{for all } i \in N,\; a \in A.
    \]
    We verify that $(x,p)$ is a Lindahl equilibrium.

    \emph{Profit maximization:} $\sum_{i \in N} p_i(a) = \frac{1}{n}\sum_{i \in N}\frac{v_i(a)}{v_i(x)} = 1$ for $a \in \supp(x)$ and $\le 1$ for $a \notin \supp(x)$.

    \emph{Spending constraint:} $p_i \cdot x = \frac{1}{n \cdot v_i(x)}\sum_{a \in A} v_i(a) \, x(a) = \frac{v_i(x)}{n \cdot v_i(x)} = \frac{1}{n}$ for all $i$.

    \emph{Utility maximization:} The bang-per-buck ratio $\frac{v_i(a)}{p_i(a)} = n \cdot v_i(x)$ is constant across all $a \in A$ with $v_i(a) > 0$ (and alternatives with $v_i(a) = 0$ contribute zero utility regardless). Hence, for any $y \ge 0$ with $p_i \cdot y \le \nicefrac{1}{n}$,
    \[
        v_i(y) = \sum_{a : v_i(a) > 0} \frac{v_i(a)}{p_i(a)} \cdot p_i(a) \cdot y(a) \le n \cdot v_i(x) \cdot (p_i \cdot y) \le n \cdot v_i(x) \cdot \frac{1}{n} = v_i(x),
    \]
    so $x$ is optimal for agent $i$.

    \proofstep{Core membership.}
    Let $x$ be PF and suppose for contradiction that some $S \subseteq N$ and $y \in \Delta(A)$ satisfy $\frac{|S|}{n} \cdot v_i(y) \ge v_i(x)$ for all $i \in S$, with at least one strict inequality. Then
    \[
        \frac{1}{n} \sum_{i \in N} \frac{v_i(y)}{v_i(x)}
        \ge \frac{1}{n}\sum_{i \in S} \frac{v_i(y)}{v_i(x)}
        > \frac{1}{n} \cdot |S| \cdot \frac{n}{|S|} = 1,
    \]
    contradicting PF. Hence all PF (equivalently, Lindahl and MNW) lotteries lie in the core.

    \proofstep{Utility uniqueness and strict positivity.}
    We showed above that every Lindahl/PF/MNW lottery $x$ satisfies $v_i(x) > 0$ for all $i \in N$. Since $F(u) = \sum_{i \in N}\log u_i$ is strictly concave on $\R_{>0}^N$ and $\calU$ is convex, the maximizer of $F$ over $\calU$ is unique. Hence all such lotteries induce the same utility vector.
\end{proof}

\section{Missing Details from \Cref{sec:public-bads}}\label{app:public-bads}

\begin{proposition}\label{prop:private-bads-core}
    Every CEEI allocation of a private bads instance with linear costs lies in the core.
\end{proposition}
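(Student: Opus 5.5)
Let $(x,p)$ be a CEEI and suppose, for contradiction, that some coalition $S\subseteq N$ and an allocation $y$ of all bads among members of $S$ satisfy $\frac{|S|}{n}c_i(y_i)\le c_i(x_i)$ for all $i\in S$, with at least one inequality strict. The plan is a pricing argument: value $y$ at the equilibrium prices $p$ and compare with the budgets.

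\textbf{Step 1: the equilibrium cost–price relation.} For each agent $i$, let $\text{mpb}_i=\min_{b\in B} c_i(b)/p(b)$. Since all $c_i(b)>0$ and $p\gg 0$, we have $\text{mpb}_i>0$. By minimum pain-per-buck, $c_i(x_i)=\text{mpb}_i\, (p\cdot x_i)=\text{mpb}_i/n$. For any bundle $z$, the definition of $\text{mpb}_i$ gives $c_i(z)\ge \text{mpb}_i\,(p\cdot z)$.

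\textbf{Step 2: combine with the deviation.} For each $i\in S$, the deviation inequality and Step 1 give
\[
\tfrac{|S|}{n}\,\text{mpb}_i\,(p\cdot y_i)\;\le\;\tfrac{|S|}{n}\,c_i(y_i)\;\le\; c_i(x_i)\;=\;\text{mpb}_i/n .
\]
Dividing by $\text{mpb}_i>0$ yields $p\cdot y_i\le 1/|S|$, and this is strict for the agent whose deviation inequality is strict. Summing over $i\in S$ gives $\sum_{i\in S}p\cdot y_i<1$.

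\textbf{Step 3: contradiction.} Every bad is fully allocated within $S$ under $y$, so $\sum_{i\in S}p\cdot y_i=\sum_{b\in B}p(b)$. Summing budget exhaustion over all agents, $\sum_{b}p(b)=\sum_{i\in N}p\cdot x_i=1$, because each bad is fully allocated under $x$. This contradicts the strict inequality $\sum_{i\in S}p\cdot y_i<1$ from Step 2.

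\textbf{Main obstacle.} The main care point is handling the strict inequality, i.e.\ getting the core rather than only the weak core. Strictness propagates through Step 2 because $\text{mpb}_i>0$, which relies on the standing assumption $c_i(b)>0$. If $S=\emptyset$ no allocation of the bads exists, so we may assume $S\neq\emptyset$.
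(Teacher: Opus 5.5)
Your proof is correct, and it takes a different route from the paper's. The paper never works with the prices. It imports from \citet{BMSY17} that every CEEI allocation is proportionally fair with $c_i(x)>0$ for all $i$. It then notes that a blocking pair $(S,y)$ gives $c_i(y)=0$ for $i\notin S$, so $\frac{1}{n}\sum_{i\in N} c_i(y)/c_i(x)=\frac{1}{n}\sum_{i\in S} c_i(y)/c_i(x)<1$, which contradicts PF. You argue directly from the equilibrium conditions instead. The identity $c_i(x_i)=\text{mpb}_i/n$ and the bound $c_i(z)\ge \text{mpb}_i\,(p\cdot z)$ turn each blocking inequality into $p\cdot y_i\le 1/|S|$, and this is incompatible with the total price of the bads being $1$. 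Your argument is self-contained and effectively reproves both ingredients the paper cites. Strict positivity is $c_i(x_i)=\text{mpb}_i/n>0$. PF follows from the same inequalities, which give $\frac{1}{n}\sum_{i} c_i(z_i)/c_i(x_i)\ge \sum_i p\cdot z_i=1$ for every allocation $z$. So you do not rely on a fact the paper itself describes as ``hidden within'' the proofs of \citet{BMSY17}. The paper's route is shorter given that citation. It also isolates the general mechanism ``PF with strictly positive costs implies the core,'' the same mechanism it uses for \Cref{thm:positive-pf-be-core} on public bads.
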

\begin{proof}
    Let $x$ be a CEEI allocation. 
    
    By \citet{BMSY17}, $x$ is PF and satisfies $c_i(x)>0$ for all $i\in N$. 
    
    Suppose for contradiction that some nonempty $S\subseteq N$ and allocation $y$ of the bads to the agents in $S$ satisfy
    \[
        \frac{|S|}{n}\,c_i(y)\le c_i(x)
    \]
    for every $i\in S$, with at least one strict inequality. 
    
    Since $y$ allocates no bads to agents outside $S$ and $c_i(x) > 0$ for all $i \in N$, we obtain
    \[
        \frac{1}{n}\sum_{i\in N}\frac{c_i(y)}{c_i(x)}
        =\frac{1}{n}\sum_{i\in S}\frac{c_i(y)}{c_i(x)}
        <\frac{1}{n}\cdot |S|\cdot\frac{n}{|S|}
        =1,
    \]
    where the strict inequality follows from the assumed core-blocking deviation. This contradicts the proportional fairness of $x$.
\end{proof}

\section{Missing Details from \Cref{sec:bads-lindahl}}\label{app:bads-lindahl}

\begin{theorem}\label{thm:be-core-private-core}
    For any private bads instance $(N,B,c)$, if $x$ is an allocation for that instance and $y$ is an allocation over the private-induced public bads instance $\calR(N,B,c)$ induced by $x$, then $y$ being in the bounded-externality core implies that $x$ is in the core for private bads.
\end{theorem}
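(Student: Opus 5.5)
We argue by contraposition: assuming $x$ is not in the core for private bads, we construct a violation of the BE core for $y$ with $\alpha = 0$. By \Cref{def:core-private-bads}, there is a nonempty $S \subseteq N$ and an allocation $x'$ of all bads to agents in $S$ (so $x'_j(b) = 0$ for all $j \in N \setminus S$) such that $\frac{|S|}{n} c_i(x'_i) \le c_i(x_i)$ for all $i \in S$, with at least one strict inequality. Let $y'$ be the public allocation over $\calR(N,B,c)$ induced by $x'$. Concretely, $y'$ is any lottery over integral allocations whose marginals equal $x'$. One can be obtained by independently assigning each bad $b$ to agent $i$ with probability $x'_i(b)$; this is a finite lottery over the $n^m$ integral allocations. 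Every integral allocation in its support gives no bad to agents outside $S$.

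We now check that $(S, y', \alpha = 0)$ violates the BE core for $y$. Costs are linear, so for every agent $i$ we have $c_i(y') = \sum_b c_i(b)\, x'_i(b) = c_i(x'_i)$, and likewise $c_i(y) = c_i(x_i)$ because $y$ is induced by $x$.

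\emph{Condition (\ref{item:be-core-outsiders}).} For $i \in N \setminus S$ we have $x'_i = 0$, so $c_i(y') = 0 \le 0 \cdot c_i(y)$.

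\emph{Condition (\ref{item:be-core-deviators}).} With $\alpha = 0$, the required factor is $\frac{|N| - |N\setminus S| \cdot 0}{|S|} = \frac{n}{|S|}$. The private core violation, multiplied through by $\frac{n}{|S|}$, gives $c_i(y') = c_i(x'_i) \le \frac{n}{|S|} c_i(x_i) = \frac{n}{|S|} c_i(y)$ for all $i \in S$. The strict inequality is preserved for the agent where it was strict.

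Hence $y$ is not in the BE core, which proves the contrapositive. The only step requiring care is the equality of costs under induced allocations: the identity $x_i(b) = \sum_z y(z) z_i(b)$ combined with linearity gives $c_i(y) = \sum_z y(z) \sum_b c_i(b) z_i(b) = c_i(x_i)$. Existence of a lottery inducing $x'$ follows from the product construction above. Neither step is a real obstacle; the main point is simply to choose $\alpha = 0$.
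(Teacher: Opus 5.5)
Your proof is correct and follows the same route as the paper: you take the private-core blocking coalition $S$ and allocation $x'$, pass to an induced lottery $y'$, and use $\alpha = 0$ to obtain a BE-core violation for $y$. Your explicit product construction of $y'$ and the linearity check $c_i(y) = c_i(x_i)$ fill in small details that the paper leaves implicit.
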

\begin{proof}
    For a contradiction, assume this is false, and $y$ is in the BE core while $x$ is not in the core for private bads. Thus, there exists some nonempty $S \subseteq N$, and some alternate private bads allocation $x'$ which only allocates the bads to agents in $S$, such that for each $i \in S$,
    \[\frac{|S|}{n}c_i(x') \leq c_i(x),\]
    and this inequality is strict for at least one agent in $S$.

    Let $y'$ be a public allocation for $\calR(N,B,c)$ that is induced by $x'$. Consider the potential BE core deviation of $S$ with $y'$, and $\alpha = 0$.
    Since in $x'$, all the private bads were allocated among the agents in $S$, we must have that $c_i(x') = c_i(y') = 0$ for all $i \in N \setminus S$, and thus $0 = c_i(y') \leq 0 \cdot c_i(y)$. So the BE core violation condition on $N \setminus S$ is met.
    Then, for each $i \in S$, we have $\frac{n - |N \setminus S|\alpha}{|S|}c_i(x) = \frac{n}{|S|}c_i(x)$. Therefore, for each of these agents, the inequality $c_i(y') \leq \frac{n}{|S|}c_i(x)$ (with strictness for some agent) is implied by the private bads core violation. This gives a contradiction.
\end{proof}

\begin{lemma}\label{lem:be-core-positive}
    If $x$ is an allocation in the bounded-externality core, then $c_i(x) > 0$ for all $i \in N$.
\end{lemma}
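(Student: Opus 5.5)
We argue by contraposition, generalizing the two-agent example given after the definition of the BE core. Suppose $x$ is in the BE core but $Z \triangleq \set{i \in N : c_i(x) = 0}$ is nonempty. The standing assumption $0 \notin \calC$ gives $Z \neq N$. We will exhibit a violating triple $(S, y, \alpha)$ with $S = N \setminus Z$, $y = x$, and $\alpha = 0$.

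First, check condition~(\ref{item:be-core-outsiders}) for the outsiders $N \setminus S = Z$. Every $i \in Z$ has $c_i(y) = c_i(x) = 0 \le 0 \cdot c_i(x)$, so the condition holds with equality.

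Next, check condition~(\ref{item:be-core-deviators}) for the members of $S$. With $\alpha = 0$, the scaling factor is $\frac{|N| - |Z|\cdot 0}{|S|} = \frac{n}{|S|}$, and $\frac{n}{|S|} > 1$ because $|S| < n$. Every $i \in S$ has $c_i(x) > 0$, so
\[
c_i(y) = c_i(x) < \frac{n}{|S|}\, c_i(x).
\]
All of these inequalities are therefore strict, and $S \neq \emptyset$ guarantees at least one. Hence $(S, x, 0)$ witnesses a BE core violation, a contradiction.

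The only delicate point is ensuring $S$ is nonempty and a proper subset of $N$. Nonemptiness comes from $0 \notin \calC$, and properness comes from $Z \neq \emptyset$, which is what makes the scaling factor strictly exceed $1$. No earlier results are needed beyond the definition of the BE core.
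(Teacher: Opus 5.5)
Your proof is correct and is essentially identical to the paper's. You take the zero-cost agents $Z$ as the outsiders, let $S = N \setminus Z$ (nonempty because $0 \notin \calC$) deviate with $y = x$ and $\alpha = 0$, and observe that the factor $n/|S| > 1$ makes every deviator's inequality strict.
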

\begin{proof}
    Suppose for contradiction that $x$ is in the BE core but $c_j(x) = 0$ for some $j \in N$. Let $Z = \{i \in N : c_i(x) = 0\}$ be the (nonempty) set of agents with zero cost under $x$. By the standing assumption $0\notin\calC$, we have $c(x)\neq 0$, so $S := N \setminus Z \neq \emptyset$.

    We show that $(S, x, 0)$ is a BE core deviation of $x$, with deviating coalition $S$, alternative lottery $y = x$, and scale factor $\alpha = 0$:
    \begin{enumerate}
        \item For each $i \in N \setminus S = Z$: $c_i(y) = c_i(x) = 0 = \alpha \cdot c_i(x)$, so condition~(\ref{item:be-core-outsiders}) holds.
        \item For each $i \in S$: by definition of $S = N \setminus Z$ we have $c_i(x) > 0$, and since $|S| = n - |Z| < n$ (as $Z \neq \emptyset$):
        \[
            c_i(y) = c_i(x) < \frac{n}{|S|} \cdot c_i(x) = \frac{n - |N \setminus S| \cdot 0}{|S|} \cdot c_i(x),
        \]
        so condition~(\ref{item:be-core-deviators}) holds with strict inequality for every $i \in S$.
    \end{enumerate}
    This contradicts $x$ being in the BE core.
\end{proof}

\section{Missing Details from \Cref{sec:impossibilities}}\label{app:impossibilities}

\begin{theorem}\label{thm:PF-meets-axioms}
    The rule that returns all strictly positive PF solutions satisfies the axioms of cost consistency, weak symmetry, scale-freeness, and lower contraction consistency.
\end{theorem}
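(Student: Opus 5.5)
We check the four axioms one at a time. Throughout, let $\phi^*$ denote the rule returning all strictly positive PF allocations. The tool we use most is the characterization in \Cref{thm:bads-pf-equiv}. A lottery $x$ is in $\phi^*(I)$ if and only if $c(x)\gg 0$ and
\[
\sum_{i\in N}\frac{c_i(z)}{c_i(x)}\ge n \quad\text{for all } z\in\calC_{\ge}(I).
\]
The inequality for $z\in\calC(I)$ is PF itself. It then extends to all of $\calC_{\ge}(I)$ because the weights $1/c_i(x)$ are positive.

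\emph{Cost consistency} and \emph{scale-freeness} follow directly from the form of this condition.
\begin{itemize}
\item For cost consistency: the condition depends on $x$ only through $c(x)$. So two lotteries with the same cost vector are either both selected or both rejected.
\item For scale-freeness: multiplying $c_i$ by $k>0$ multiplies both $c_i(a)$ and $c_i(x)$ by $k$. Every ratio $c_i(a)/c_i(x)$ is therefore unchanged, and so is strict positivity. Hence $\phi^*(I)=\phi^*(I')$.
\end{itemize}

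\emph{Weak symmetry} is a direct computation on the identity instance. We have $c_i(x)=x(a_i)$, so strict positivity forces full support. Since every alternative is then in the support, PF requires equality for every $a_j$. That is, $\frac1n\cdot\frac{1}{x(a_j)}=1$, which gives $x(a_j)=1/n$. Conversely, the uniform lottery satisfies $\frac1n\cdot\frac{1}{1/n}=1$ for every alternative, so it is strictly positive PF. Thus $\phi^*$ returns exactly the uniform lottery.

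\emph{Lower contraction consistency} is the step that needs care. Suppose $\calC_{\ge}(I)\subseteq\calC_{\ge}(I')$, and let $x'\in\phi^*(I')$ with $c'(x')\in\calC(I)$. Choose $x\in\Delta(A)$ with $c(x)=c'(x')$; then $c(x)\gg0$. By the characterization above applied to $I'$,
\[
\sum_i \frac{z_i}{c_i(x)}\ge n \quad\text{for all } z\in\calC_{\ge}(I').
\]
This holds in particular for every $z\in\calC(I)\subseteq\calC_{\ge}(I)\subseteq\calC_{\ge}(I')$. Taking $z=c(a)$ for each $a\in A$ shows that $x$ is PF in $I$, and it is strictly positive, so $x\in\phi^*(I)$. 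Therefore $c'(x')=c(x)\in c(\phi^*(I))$.

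The main subtlety is exactly this last step. The PF inequality is stated only over $\calC(I')$, not over $\calC_{\ge}(I')$, so it must first be extended to the upward closure before it can be applied to points of $\calC(I)$. This extension uses the positivity of the weights $1/c_i(x')$, which in turn relies on strict positivity. We therefore spell out that the sum is monotone under coordinatewise increases when $c(x')\gg0$.
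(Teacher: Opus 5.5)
Your proof is correct and follows essentially the same route as the paper. Cost consistency and scale-freeness come from the ratio form of PF. Weak symmetry is a direct computation on the identity instance. Lower contraction consistency uses the positivity of the weights $1/c_i(x')$ to extend the PF inequality from $\calC(I')$ to $\calC_{\ge}(I') \supseteq \calC(I)$. The paper phrases this last step by contradiction, choosing $y'$ in $I'$ with $c'(y') \le c(y)$, which is the same monotonicity argument. For uniqueness in weak symmetry, the paper notes that some $y(a) > 1/n$ forces a strict PF violation, where you instead argue via full support plus equality on the support.
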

\begin{proof}
    \textbf{Cost consistency:} Clearly, since the definition of strictly positive PF depends only on the cost vector of lotteries, if two lotteries $x$ and $x'$ induce the same cost vector, then $x$ will be strictly positive PF if and only if $x'$ is.

    \textbf{Weak symmetry:} Fix the identity instance for some number of agents $n$. The allocation $x$ that uniformly mixes over all options clearly gives each agent positive cost. To see that it is a PF solution, note that $c_i(x) = \nicefrac{1}{n}$ for all $i \in N$. Thus, for every alternative lottery $y$, we have
\[\sum_{i \in N}\frac{c_i(y)}{c_i(x)} = n\sum_{i \in N}c_i(y) = n,\]
    where the final equality follows from the fact that each alternative in this instance gives a cost of $1$ to a single agent and $0$ to everyone else, so $\sum_{a \in A}y(a)\sum_{i \in N}c_i(a) = \sum_{a \in A}y(a) = 1$ for any lottery $y$.

    Next, for contradiction, assume that there is some lottery $y \neq x$ that is strictly positive PF. Since $y$ does not uniformly mix between all options, there must be some bad $a \in A$ with $y(a) > \nicefrac{1}{n}$. Assume that $i^*$ is the agent such that $c_{i^*}(a) = 1$. Hence
\[\sum_{i \in N}\frac{c_i(a)}{c_i(y)} = \frac{c_{i^*}(a)}{c_{i^*}(y)} < \frac{1}{\nicefrac{1}{n}} = n,\]
    meaning that $y$ is not PF, a contradiction. Note that the first equality crucially uses the fact that $c_i(a) = 0$ for all $i \neq i^*$, and $c_i(y) > 0$ for all $i \in N$.

    \textbf{Scale-freeness:} Consider any two instances with cost functions $c,c'$ that are identical except for the cost function of some agent $i^*$, whose costs are all scaled by a factor of $k>0$ ($c'_{i^*}(a) = kc_{i^*}(a)$ for all $a \in A$).

    Clearly, for any two allocations $x, y$, we have
\[\sum_{i \in N}\frac{c'_{i}(y)}{c'_{i}(x)} = \sum_{i \in N \setminus \set{i^*}}\frac{c_i(y)}{c_i(x)} + \frac{kc_{i^*}(y)}{kc_{i^*}(x)} = \sum_{i \in N}\frac{c_i(y)}{c_i(x)}.\]
    Positive scaling also preserves whether every agent has strictly positive cost. Thus, $x$ is strictly positive PF under $c$ if and only if it is strictly positive PF under $c'$.

    \textbf{Lower contraction consistency:} Assume we have two instances $I_1 = (N,A,c)$ and $I_2 = (N,A',c')$ such that $\calC_{\geq}(I_1) \subseteq \calC_{\geq}(I_2)$. For contradiction, assume there is a strictly positive PF allocation $x'$ for the instance $I_2$, and a lottery $x$ for the instance $I_1$ such that $c_i(x) = c'_i(x')$ for all $i \in N$, but $x$ is not a strictly positive PF lottery for the instance $I_1$. Then there must be some lottery $y$ for the instance $I_1$ such that
\[\sum_{i \in N}\frac{c_i(y)}{c_i(x)} < n.\]
    However, since $\calC_{\geq}(I_1) \subseteq \calC_{\geq}(I_2)$, there must also be a lottery $y'$ over the instance $I_2$ such that $c'_i(y') \leq c_i(y)$ for all $i \in N$. But this would give us
\[\sum_{i \in N}\frac{c'_i(y')}{c'_i(x')} < n,\]
    contradicting the proportional fairness of $x'$ for instance $I_2$.
\end{proof}

\begin{theorem}
    Any public bads allocation rule that satisfies the axioms of cost consistency, weak symmetry, scale-freeness, and lower contraction consistency must return a superset of the strictly positive proportionally fair lotteries for every instance.
\end{theorem}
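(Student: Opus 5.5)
Following Nash's original argument as adapted by \citet{mariotti2005nash}, fix an instance $I=(N,A,c)$ and a strictly positive PF lottery $x$, and write $z^\star = c(x)\gg 0$. The goal is to show $x\in\phi(I)$. By cost consistency, it suffices to show that $\phi(I)$ contains \emph{some} lottery with cost vector $z^\star$. The plan is to transport the problem to an identity instance via scale-freeness, and then come back to $I$ via lower contraction consistency.

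\textbf{Step 1 (rescaling).} Let $I'$ be the instance obtained from $I$ by scaling each agent $i$'s costs by $k_i = 1/(n z^\star_i)$, one agent at a time. Scale-freeness gives $\phi(I)=\phi(I')$. In $I'$, the lottery $x$ has cost vector $\tfrac1n\mathbf{1}$, and PF of $x$ (\Cref{def:pf-bads}, with all denominators positive) says
\[
\sum_{i\in N} c'_i(a)\ \ge\ 1 \quad\text{for all } a\in A .
\]
Consequently every point of $\calC(I')$, and hence of $\calC_{\ge}(I')$, lies in the half-space $H=\set{z\in\R^N_{\ge 0} : \sum_i z_i\ge 1}$.

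\textbf{Step 2 (comparison with the identity instance).} Let $J$ be the identity instance on $N$. Its cost set is the standard simplex, so $\calC_{\ge}(J)=H$. Step 1 therefore gives $\calC_{\ge}(I')\subseteq \calC_{\ge}(J)$. By weak symmetry, $\phi(J)$ consists of the uniform lottery, whose cost vector is $\tfrac1n\mathbf{1}$. This vector lies in $\calC(I')$, since it is the cost of $x$ in $I'$. Applying lower contraction consistency with $I'$ in the role of $I$ and $J$ in the role of $I'$ yields
\[
\tfrac1n\mathbf{1}\in c'(\phi(I')).
\]

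\textbf{Step 3 (returning to $I$).} Since $\phi(I')=\phi(I)$ as sets of lotteries over the same $A$, some $x'\in\phi(I)$ has $c'(x')=\tfrac1n\mathbf{1}$. Undoing the scaling gives $c(x')=z^\star=c(x)$, and cost consistency then gives $x\in\phi(I)$.

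The main point to check is Step 2: that $\calC_{\ge}(I')$ is contained in the upward closure of the simplex. One inclusion is all that is needed, and it follows from linearity because the PF inequality on each alternative extends to convex combinations and upward translations. A secondary subtlety is that scale-freeness is stated for a single agent, so the rescaling in Step 1 has to be applied $n$ times in sequence, giving equality of $\phi$ at each step. It is also worth confirming that the identity instance meets the standing assumptions, which it does for $n\ge 1$ (for $n=1$, treat the case directly or note it is vacuous). Finally, lower contraction consistency needs the same agent set $N$ in both instances, which holds here even though the alternative sets differ.
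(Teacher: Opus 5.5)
Your proof is correct and follows the paper's argument essentially step for step: you rescale via scale-freeness so that $x$ has cost $\tfrac1n\mathbf{1}$, use PF to get $\calC_{\ge}(I')\subseteq\calC_{\ge}$ of the identity instance, apply weak symmetry and lower contraction consistency to place $\tfrac1n\mathbf{1}$ in $c'(\phi(I'))$, and finish with cost consistency. The only cosmetic difference is that you apply cost consistency in $I$ after undoing the scaling, whereas the paper applies it in $I'$ and then uses $\phi(I')=\phi(I)$.
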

\begin{proof}
    Suppose for contradiction that some rule $\phi$ satisfies all four axioms, yet for some instance $I = (N,B,c)$ there is a strictly positive PF lottery $x$ with $x \notin \phi(I)$.

    \textbf{Step 1: Normalize.}
    Since $x$ is strictly positive PF, $c_i(x) > 0$ for all $i \in N$. Define a scaled instance $I' = (N,B,c')$ by $c'_i(b) = c_i(b)/(n \cdot c_i(x))$ for each $i \in N$ and bad $b$. Applying scale-freeness once per agent gives $\phi(I) = \phi(I')$, so $x \notin \phi(I')$. The scaling is chosen so that
    \[
        c'_i(x) = \frac{c_i(x)}{n \cdot c_i(x)} = \frac{1}{n} \quad \text{for all } i \in N.
    \]
    Since strictly positive PF is scale-free (\Cref{thm:PF-meets-axioms}), $x$ remains strictly positive PF in $I'$. Hence for every lottery $y$ in $I'$,
    \[
        n \leq \sum_{i \in N}\frac{c'_i(y)}{c'_i(x)} = n\sum_{i \in N}c'_i(y),
    \]
    so every cost vector achievable in $I'$ satisfies $\sum_{i \in N} c'_i(y) \geq 1$.

    \textbf{Step 2: Compare with the identity instance.}
    Let $\calI$ be the identity instance for $n$ agents, and let $d$ denote its cost function. Its $n$ alternatives have cost vectors $e_1,\ldots,e_n$, giving cost set $\calC(\calI) = \conv\{e_1,\ldots,e_n\}$, where $e_i$ is the $i$-th standard basis vector. Thus
    \[
        \calC_{\geq}(\calI) = \bigl\{q \in \R^n_{\geq 0} : \textstyle\sum_{i \in N} q_i \geq 1\bigr\}.
    \]
    Step~1 shows every cost vector of $I'$ lies in $\calC_{\geq}(\calI)$, so $\calC_{\geq}(I') \subseteq \calC_{\geq}(\calI)$.

    \textbf{Step 3: Apply lower contraction consistency.}
    By weak symmetry, $\phi(\calI)$ contains only the uniform lottery $u$ (which places weight $\nicefrac{1}{n}$ on each bad), giving $d_i(u) = \nicefrac{1}{n}$ for all $i$. Since $x$ achieves $c'_i(x) = \nicefrac{1}{n}$ in $I'$, the cost vector $(\nicefrac{1}{n},\ldots,\nicefrac{1}{n})$ lies in $\calC(I')$.

    We apply LCC with $I'$ as the smaller instance and $\calI$ as the larger: the hypothesis $\calC_{\geq}(I') \subseteq \calC_{\geq}(\calI)$ holds, and $\phi(\calI)$ selects $u$ with $d(u) \in \calC(I')$. LCC therefore gives
    \[
        d(\phi(\calI)) \cap \calC(I') \;\subseteq\; c'(\phi(I')).
    \]
    We have $d(u) = (\nicefrac{1}{n},\ldots,\nicefrac{1}{n}) \in d(\phi(\calI)) \cap \calC(I')$, so $(\nicefrac{1}{n},\ldots,\nicefrac{1}{n}) \in c'(\phi(I'))$: some lottery $x' \in \phi(I')$ satisfies $c'_i(x') = \nicefrac{1}{n}$ for all $i$.

    \textbf{Step 4: Conclude.}
    Since $c'_i(x') = c'_i(x) = \nicefrac{1}{n}$ for all $i$, cost consistency gives $x \in \phi(I')$. But $\phi(I') = \phi(I)$, so $x \in \phi(I)$, contradicting our assumption.
\end{proof}

The following result shows that the completion core does not imply the core for
private bads (\Cref{def:core-private-bads}) on private-induced instances, and
in particular that Flipped-MNW can violate the latter.

\begin{proposition}\label{prop:completion-core-not-private-core}
    There exists a private bads instance $(N,B,c)$ for which an allocation
    returned by Flipped-MNW on the induced public bads instance $\calR(N,B,c)$
    is not in the core for private bads and does not achieve strong individual fair share.
\end{proposition}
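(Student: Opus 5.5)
The plan is to exhibit an explicit small private bads instance and compute its Flipped-MNW outcome by hand. The second claim will give the first for free. In the private-induced instance $\calR(N,B,c)$ we have $c_i^{\min}=0$ and $c_i^{\max}=c_i(B)$, so a violation of sIFS by agent $i$ means $c_i(x)>\frac1n c_i(B)$. Then the singleton coalition $S=\{i\}$, taking every bad (the allocation $y$ with $y_i(b)=1$ for all $b$), satisfies $\frac{1}{n}c_i(y)<c_i(x)$, a strict core violation in the sense of \Cref{def:core-private-bads}. So the whole task reduces to finding an instance in which some agent pays more than a $\frac1n$ fraction of her total cost.

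\textbf{Objective.} On $\calR(N,B,c)$ the dual valuation is $v_i(z)=c_i(B)-c_i(z_i)$, the cost that the allocation $z$ places on \emph{other} agents, as measured by $i$. Its lotteries correspond to fractional allocations $x$, so Flipped-MNW maximizes $\prod_i \bigl(c_i(B)-c_i(x_i)\bigr)$ over fractional allocations. By \Cref{thm:public-goods}, the maximizers are exactly the lotteries that are PF for these valuations. They all share one utility vector, hence one cost vector. So it suffices to guess a candidate optimum and certify it with the linear PF/KKT inequalities $\frac1n\sum_i v_i(a)/v_i(x)\le 1$, with equality on the support. This certification is routine.

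\textbf{Search.} Two-agent examples with two bads keep landing exactly on the sIFS boundary, for instance costs $(1,1.5)$ versus $(1,100)$. So I would move to $n=3$, where the objective weights agents' ``savings'' asymmetrically. The intended mechanism is an agent $1$ who is cheap to burden, and whose own savings the other agents cannot increase much. I would take agents $2,3$ with very large cost on a bad $b_1$ and small cost on $b_2$, and agent $1$ with moderate, equal costs on both bads. Then shifting bads onto agent $1$ raises the product via agents $2$ and $3$ much more than it lowers agent $1$'s factor. The expected optimum gives agent $1$ a cost strictly above $c_1(B)/3$. I would parametrize the candidate optimum (agent $1$ holds a fraction $t$ of $b_1$ and everything else is placed efficiently) and solve the one-variable first-order condition. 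I would then check $t\,c_1(b_1)+\dots>c_1(B)/3$ and verify global optimality via the PF inequalities.

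\textbf{Main obstacle.} The main obstacle is choosing parameters so that the violation is strict rather than landing on the boundary, as happened in the two-agent attempts. If three agents with two bads still land on the boundary, I would add a third bad, or give the two ``heavy'' agents identical valuations. Identical heavy agents double their weight in the Nash product, which should push agent $1$ strictly past $1/3$.
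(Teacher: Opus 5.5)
Your proposal has a genuine gap: it never exhibits an instance. The statement is purely existential, so the proof consists of a specific witness together with a verified computation of its Flipped-MNW outcome. Your text stops at a search heuristic. The decisive sentence (``the expected optimum gives agent $1$ a cost strictly above $c_1(B)/3$'') is asserted without computation, and it even comes with a fallback plan in case it fails. As written, nothing has been proved.

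The surrounding steps are sound and follow the paper's route. On $\calR(N,B,c)$ the dual valuations are $v_i = c_i(B)-c_i(x_i)$, and optimality can be certified by the linear PF inequalities. Your reduction from an sIFS violation to a private-core violation, via the singleton coalition taking all bads, is correct. It is slightly leaner than the paper, which exhibits the coalition $\{2,3\}$ separately. Your two-agent failures are also not accidental. Flipped-MNW lies in the completion core, hence satisfies IFS. For $n=2$ in a private-induced instance, IFS coincides with sIFS, and Flipped-MNW is Pareto optimal. So $n\ge 3$ is forced.

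The gap is easy to close, because the shape you guessed is exactly the shape of the paper's instance. There the costs are $(1,5),(1,6),(6,6)$, and the equal-cost agent ends with $27/5>12/3$. This shape works without delicate tuning. For instance, take $c_1=(1,1)$ and $c_2=c_3=(2,1)$ on $(b_1,b_2)$. Let agent $1$ take $5/6$ of $b_1$ and none of $b_2$, and let agents $2,3$ take the rest with cost $2/3$ each, so $v(x)=(7/6,7/3,7/3)$. For an integral allocation $z$, the PF test $\frac13\sum_i v_i(z)/v_i(x)\le 1$ simplifies to $2c_1(z_1)+c_2(z_2)+c_3(z_3)\ge 3$. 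This holds for every $z$, with equality exactly when $b_2$ goes to agent $2$ or $3$, i.e., on the support. Hence $x$ is Flipped-MNW, and all Flipped-MNW lotteries share its cost vector. Then $c_1(x)=5/6>2/3=c_1(B)/3$, and your singleton argument finishes the proof.

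More generally, take normalized costs $c_1=(\tfrac12,\tfrac12)$ and $c_2=c_3=(1-\epsilon,\epsilon)$. Agent $1$ then receives the fraction $t=\frac{3-4\epsilon}{3-3\epsilon}$ of $b_1$, which exceeds $2/3$ for every $\epsilon<1/2$. So within this family, the boundary case you feared arises only when all three agents are identical.
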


\begin{proof}
    Consider $n = 3$ agents and $m = 2$ bads with costs:
    \begin{center}
    \begin{tabular}{ccc}
    \toprule
     & Bad $b_1$ & Bad $b_2$ \\
    \midrule
    Agent 1 & 1 & 5 \\
    Agent 2 & 1 & 6 \\
    Agent 3 & 6 & 6 \\
    \bottomrule
    \end{tabular}
    \end{center}

    Let $A$ denote the alternative set of the induced instance $\calR(N,B,c)$, which consists of the $3^2 = 9$ complete assignments. To apply
    Flipped-MNW, we compute $c_i^{\max} = \max_{a \in A} c_i(a)$ for each
    agent, where the maximum ranges over all complete assignments. Since
    assigning both bads to agent $i$ maximizes their cost, we get
    $c_1^{\max} = 6$, $c_2^{\max} = 7$, and $c_3^{\max} = 12$. The
    Flipped-MNW rule then maximizes Nash welfare in the dual goods instance
    with valuations $v_i(a) = c_i^{\max} - c_i(a)$.

    We claim that the MNW lottery is supported on the two alternatives
    $a^* \colon b_1 \to 2,\, b_2 \to 1$ (with $v(a^*) = (1, 6, 12)$) and
    $a^{**} \colon b_1 \to 2,\, b_2 \to 3$ (with $v(a^{**}) = (6, 6, 6)$).
    Writing $t$ for the probability on $a^*$, note that $v_2 = 6$ is constant,
    so the Nash welfare is proportional to
    $(6 - 5t)(6 + 6t)$,
    which is maximized at $t = \nicefrac{1}{10}$. One can verify that no
    alternative outside $\set{a^*, a^{**}}$ violates the PF condition for the
    dual goods instance, confirming this is the MNW lottery.\footnote{The maximum PF ratio among the seven remaining alternatives is
        $\nicefrac{1}{3}\sum_{i} v_i(a)/v_i(x) \approx 0.995$, attained by
        the assignments $b_1 \to 1, b_2 \to 1$ and $b_1 \to 1, b_2 \to 3$.}

    Translating back to costs, the Flipped-MNW allocation $x$ induces
    \[
        (c_1(x),\, c_2(x),\, c_3(x))
        = \tfrac{1}{10}(5, 1, 0) + \tfrac{9}{10}(0, 1, 6)
        = \bigl(\tfrac{1}{2},\; 1,\; \tfrac{27}{5}\bigr).
    \]
    This allocation is in the completion core by the general guarantee that
    every Flipped-MNW allocation is.

    However, the coalition $S = \set{2, 3}$ blocks $x$ under the core for
    private bads (\Cref{def:core-private-bads}). Consider the deviation $y$
    assigning $b_1$ entirely to agent~2 and $b_2$ entirely to agent~3, so that
    $c_2(y) = 1$ and $c_3(y) = 6$, with $y_{j}(b) = 0$ for $j \notin S$.
    Then
    \[
        \frac{|S|}{n} \cdot c_2(y) = \frac{2}{3} < 1 = c_2(x),
        \qquad
        \frac{|S|}{n} \cdot c_3(y) = 4 < \frac{27}{5} = c_3(x),
    \]
    with both inequalities strict, so $S$ blocks.

    Also, note that sIFS would require that agent $3$ receive a cost no more than $\nicefrac{12}{3} = 4$, but they receive a cost of $\nicefrac{27}{5}$.
\end{proof}

\begin{definition}[Participation]
    A rule $\phi$ satisfies participation if, for any two instances $I = (N,A,c)$ and $I' = (N \cup \{i\},A,c')$ such that $c'_j = c_j$ for every $j \in N$, we have that for every $x \in \phi(I')$, there exists a $y \in \phi(I)$ such that $c'_i(x) \leq c'_i(y)$. In other words, for every selection of $\phi(I')$, $i$ should weakly improve over some selection of $\phi(I)$ by joining the instance.
\end{definition}

We can show that the axioms that characterize strictly positive PF are also incompatible with this intuitive definition.

\begin{theorem}\label{thm:impossibility-participation}
    No rule which satisfies weak symmetry, scale-freeness, and lower contraction consistency can also satisfy participation.
\end{theorem}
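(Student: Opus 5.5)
The plan is to build a single pair of instances $I$ and $I'$ in which weak symmetry pins down $\phi(I)$ completely. Lower contraction consistency, combined with scale-freeness, then forces $\phi(I')$ to contain a lottery that is strictly worse for the newly added agent. Participation asks that every $x \in \phi(I')$ be weakly better for the new agent than some $y\in\phi(I)$. So it suffices to exhibit one $x\in\phi(I')$ that is worse for the new agent than every element of $\phi(I)$. This is why $I$ should be an identity instance: weak symmetry tells us $\phi(I)$ exactly, namely the uniform lottery $u$.

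\textbf{Construction.} Take $I$ to be the identity instance for $n\ge 2$ agents (concretely $n=2$ suffices), with alternatives $a_1,\dots,a_n$ and $c_j(a_k)=\mathbb{1}[j=k]$. Let $I'$ add a new agent $n+1$ with $c'_{n+1}(a_1)=1$ and $c'_{n+1}(a_k)=0$ for $k\ne 1$. Under $u$, the new agent's cost is $\nicefrac1n$.

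\textbf{Forcing $\phi(I')$.} Apply scale-freeness agent by agent to pass from $I'$ to a rescaled instance $I''$ on the same alternatives. With $p=\nicefrac{2}{n+1}$, the scaling makes agent $1$'s cost on $a_1$ equal to $s_1=\nicefrac{1}{(n+1)p}$, each old agent $j\ge2$'s cost on $a_j$ equal to $s_j=\nicefrac{(n-1)}{(n+1)(1-p)}=1$, and the new agent's cost on $a_1$ equal to $t=\nicefrac{1}{(n+1)p}$. Let $x$ be the lottery with $x(a_1)=p$ and $x(a_k)=\nicefrac{(1-p)}{(n-1)}$ for $k\ge2$.
\begin{itemize}
\item The cost vector of $x$ in $I''$ is the uniform vector $(\nicefrac1{n+1},\dots,\nicefrac1{n+1})$.
\item The coordinate sums of the alternatives are $s_1+t=\nicefrac{2}{(n+1)p}=1$ for $a_1$ and $s_j=1$ for $a_j$, $j\ge2$.
\end{itemize}
Every alternative of $I''$ therefore has coordinate sum at least $1$, so $\calC_{\ge}(I'')\subseteq\calC_{\ge}(\calI_{n+1})$, exactly as in Step 2 of the preceding characterization proof. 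Here $\calI_{n+1}$ is the identity instance on $n+1$ agents. By weak symmetry, $\phi(\calI_{n+1})=\{u_{n+1}\}$, whose cost vector is uniform and lies in $\calC(I'')$. Lower contraction consistency then yields some $x''\in\phi(I'')$ with uniform cost vector. Scale-freeness gives $\phi(I')=\phi(I'')$, so $x''\in\phi(I')$. In $I'$, agent $1$'s cost determines $x''(a_1)=\nicefrac{1}{(n+1)s_1}=p$.

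\textbf{Conclusion.} The new agent's cost under $x''$ is $p=\nicefrac{2}{n+1}>\nicefrac1n$, which holds for $n\ge2$. This is strictly larger than the new agent's cost $\nicefrac1n$ under the only member $u$ of $\phi(I)$, so participation fails.

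\textbf{Main obstacle.} The delicate step is choosing the scalings so that two things hold at once: the upward closure of $I''$ sits inside that of the identity instance, and the uniform cost vector is attained. The two containment constraints pin $p$ down exactly to $\nicefrac{2}{n+1}$ (from $a_1$ we need $p\le \nicefrac{2}{n+1}$, from the others $p\ge\nicefrac{2}{n+1}$). I would verify this bookkeeping carefully, including the requirement that the scaling factors are positive and that the instances meet the standing assumptions. I would also check that scale-freeness, which is stated for a fixed agent set and alternative set, is applied separately within the $(n+1)$-agent world.
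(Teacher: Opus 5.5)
Your proof is correct and takes essentially the same route as the paper. Both use weak symmetry on the $(n+1)$-agent identity instance, then scale-freeness plus lower contraction consistency to force a lottery in $\phi(I')$ with weight $\nicefrac{2}{n+1}$ on $a_1$, and then weak symmetry on the $n$-agent identity instance to produce the participation violation. The differences are cosmetic: the paper fixes $n=2$, scales the identity instance up and checks $\calC(I') \subseteq \calC(I)$, whereas you scale $I'$ down and use $\calC_{\ge}(\calI_{n+1}) = \{q \ge 0 : \sum_i q_i \ge 1\}$, which also covers every $n \ge 2$.
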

\begin{proof}
    Consider some rule $\phi$ that satisfies the three axioms. First, consider the identity instance $\calI_3$ with $3$ agents. Since $\phi$ is weakly symmetric, it must return the lottery that uniformly mixes between all $3$ bads on this instance.

    \begin{table}[H]
    \centering
    \begin{tabular}{cccc}
    \toprule
              & Bad $1$ & Bad $2$ & Bad $3$ \\
    \midrule
    Agent $1$ & $1$     & $0$     & $0$ \\
    Agent $2$ & $0$     & $1$     & $0$ \\
    Agent $3$ & $0$     & $0$     & $1$ \\
    \bottomrule
    \end{tabular}
    \end{table}

    Next, consider the following instance $I$, constructed by taking the identity instance and scaling the costs of all agents except agent $3$ by a factor of $2$.

    \begin{table}[H]
    \centering
    \begin{tabular}{cccc}
    \toprule
              & Bad $1$ & Bad $2$ & Bad $3$ \\
    \midrule
    Agent $1$ & $2$     & $0$     & $0$ \\
    Agent $2$ & $0$     & $2$     & $0$ \\
    Agent $3$ & $0$     & $0$     & $1$ \\
    \bottomrule
    \end{tabular}
    \end{table}

    Due to scale-freeness, $\phi$ will still only return the lottery $x$ that uniformly mixes between all three bads. Under these scaled costs, this lottery will give agents $1$ and $2$ a cost of $\nicefrac{2}{3}$ and agent $3$ a cost of $\nicefrac{1}{3}$.

    Then, consider the instance $I'$ with $3$ agents and $2$ bads.

    \begin{table}[H]
    \centering
    \begin{tabular}{ccc}
    \toprule
              & Bad $1$ & Bad $2$ \\
    \midrule
    Agent $1$ & $1$     & $0$     \\
    Agent $2$ & $1$     & $0$    \\
    Agent $3$ & $0$     & $1$      \\
    \bottomrule
    \end{tabular}
    \end{table}

    We can show that $\calC_{\geq}(I') \subseteq \calC_{\geq}(I)$. To see this, consider an arbitrary lottery over instance $I'$ that places weight $\alpha$ on Bad $1$ and weight $1 - \alpha$ on Bad $2$. This lottery will give agents $1$ and $2$ a cost of $\alpha$, and agent $3$ a cost of $1-\alpha$.

    Now consider the lottery over $I$ that places weight $\nicefrac{\alpha}{2}$ on each of Bads $1$ and $2$, and weight $1 - \alpha$ on Bad $3$. This will also give all agents the same costs as the first lottery over $I'$. This shows that $\calC(I') \subseteq \calC(I)$, and hence $\calC_{\geq}(I') \subseteq \calC_{\geq}(I)$ as well.

    With this, consider the lottery $x'$ on $I'$ that places $\nicefrac{2}{3}$ weight on Bad $1$ and $\nicefrac{1}{3}$ weight on Bad $2$. This will give the first $2$ agents a cost of $\nicefrac{2}{3}$ and agent $3$ a cost of $\nicefrac{1}{3}$, the exact same cost vector as the lottery $x$, which we know that $\phi$ returns for instance $I$. Since $\phi$ is lower contraction consistent and $\phi(I)$ returns a lottery whose cost vector is feasible in $I'$, $\phi(I')$ must return some lottery with this cost vector. The only lottery on $I'$ with this cost vector is $x'$, so $x' \in \phi(I')$.

    Finally, consider the identity instance $\calI_2$ on $2$ agents.

    \begin{table}[H]
    \centering
    \begin{tabular}{ccc}
    \toprule
              & Bad $1$ & Bad $2$ \\
    \midrule
    Agent $1$ & $1$     & $0$ \\
    Agent $3$ & $0$     & $1$ \\
    \bottomrule
    \end{tabular}
    \end{table}

    Note that this instance can be thought of as $I'$ without the participation of Agent $2$. By weak symmetry, the unique lottery returned by $\phi(\calI_2)$ must place weight $\nicefrac{1}{2}$ on each bad. However, this violates participation, as in instance $I'$, $c_2(x') = \nicefrac{2}{3}$, which is strictly worse than the cost of $\nicefrac{1}{2}$ for agent $2$ induced by the unique lottery returned by $\phi(\calI_2)$.
\end{proof}

However, Flipped-MNW does satisfy participation.

\begin{theorem}
    Flipped-MNW satisfies participation.
\end{theorem}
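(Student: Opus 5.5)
The plan is to reduce the statement to participation of maximum Nash welfare for public goods in the dual instances. Then we combine the proportional fairness inequalities of the two MNW lotteries.

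\emph{Setup.} Let $I=(N,A,c)$ and $I'=(N\cup\set{i},A,c')$ with $c'_j=c_j$ for $j\in N$. Both instances share the alternative set $A$. Hence $\max_{a\in A}c_j(a)$ is the same in $I$ and $I'$ for every $j\in N$, and the dual valuations $v_j(a)=\max_{a'}c_j(a')-c_j(a)$ of the old agents coincide in the two dual goods instances. The dual goods instance of $I'$ is therefore the dual goods instance of $I$ with one extra agent $i$ whose valuation is $v_i(a)=\max_{a'}c'_i(a')-c'_i(a)$.

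\emph{Reformulating the goal.} For any lottery $z$, $c'_i(z)=\max_a c'_i(a)-v_i(z)$. So it suffices to show the following. For every MNW lottery $x$ of the $I'$-dual there is an MNW lottery $y$ of the $I$-dual with $v_i(x)\ge v_i(y)$. We take $y$ to be an arbitrary MNW lottery of the $I$-dual; such a $y$ exists by compactness of $\Delta(A)$.

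\emph{Positivity.} The standing non-indifference assumption gives every agent some alternative of positive dual value. This is exactly the argument used in the proof of \Cref{thm:flipped-MNW-completion-core}. By \Cref{thm:public-goods}, $x$ and $y$ are then PF in their respective dual instances, and $v_j(x)>0$, $v_j(y)>0$ for all relevant $j$.

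\emph{Key inequalities.} For $j\in N$ write $r_j=v_j(y)/v_j(x)>0$.
\begin{itemize}
\item PF of $y$ in the $I$-dual, tested against the lottery $x$, gives $\sum_{j\in N}1/r_j\le n$.
\item PF of $x$ in the $I'$-dual, tested against the lottery $y$, gives $\sum_{j\in N} r_j+v_i(y)/v_i(x)\le n+1$.
\end{itemize}

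\emph{Conclusion.} By the Cauchy--Schwarz (AM--HM) inequality,
\[\sum_{j\in N}r_j\ \ge\ \frac{n^2}{\sum_{j\in N}1/r_j}\ \ge\ n.\]
Substituting into the second inequality yields $v_i(y)/v_i(x)\le 1$, i.e.\ $v_i(x)\ge v_i(y)$. Translating back gives $c'_i(x)\le c'_i(y)$, which is the participation requirement.

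\emph{Main obstacle.} The only delicate point is the bookkeeping: one must check that the dual valuations of the old agents really are unchanged, which holds because the alternative set is fixed. One must also check that the non-degeneracy hypothesis of \Cref{thm:public-goods} holds in both dual instances, so that every utility is strictly positive and the ratios above are well defined. Once that is settled, the AM--HM step finishes the argument.
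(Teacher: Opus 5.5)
Your proof is correct, but it takes a different route from the paper's. The paper never passes to proportional fairness. In your notation (newcomer $i$, $x\in\phi(I')$, $y\in\phi(I)$), it simply chains the two global optimality conditions,
\[
v_i(x)\prod_{j\in N}v_j(y)\;\ge\; v_i(x)\prod_{j\in N}v_j(x)\;\ge\; v_i(y)\prod_{j\in N}v_j(y),
\]
and then cancels $\prod_{j\in N}v_j(y)>0$. You instead use \Cref{thm:public-goods} to replace each Nash maximizer by its first-order (PF) condition, test each one against the other lottery, and recombine the two ratio inequalities via AM--HM. In your argument, AM--HM does the job that the direct product comparison does in the paper.

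The paper's argument is shorter and more elementary. It uses only the definition of MNW and positivity of the old agents' utilities at the $I$-optimum, with no appeal to convexity or to the MNW--PF equivalence. Yours needs the MNW$\Rightarrow$PF direction and strict positivity at both optima, including $v_i(x)>0$ for the newcomer, which you correctly justify via non-indifference. In exchange, it shows that participation follows from the PF inequalities alone. It also pinpoints where the analogous argument for strictly positive PF on bads breaks down, consistent with \Cref{thm:impossibility-participation}: there the PF inequalities are reversed, so AM--HM can no longer deliver the needed bound $\sum_{j\in N}r_j\le n$.
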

\begin{proof}
    This proof follows quite easily from the fact that MNW is known to satisfy a similar participation-style axiom in the case of public goods \citep{aziz2019fair, brandl2022funding, airiau2023portioning}. For completeness, we will prove it in full here.

    Consider two public bads instances $I=(N,A,c)$ and $I'=(N\cup\{i^*\},A,c')$, where $I'$ is constructed by adding agent $i^*$ to $I$.
    For each $i \in N \cup \{i^*\}$, let $v_i$ be the valuation function of $i$ in the dual goods problem created by the Flipped-MNW rule.
    We know that running Flipped-MNW on $I$ will return all lotteries $x \in \argmax_{x' \in \Delta(A)}\prod_{i \in N}v_i(x')$, and running Flipped-MNW on $I'$ will return all lotteries $y \in \argmax_{x' \in \Delta(A)}v_{i^*}(x')\cdot\prod_{i \in N}v_i(x')$.
    Let $\phi$ represent the Flipped-MNW rule. For contradiction, assume that $\phi$ violates participation on these instances, meaning that there exists some $y \in \phi(I')$ such that $c'_{i^*}(y) > c'_{i^*}(x)$ for all lotteries $x \in \phi(I)$.

    Fix some $x \in \phi(I)$. Then we must have
    
    \[v_{i^*}(y)\cdot\prod_{i \in N}v_i(x) \geq v_{i^*}(y)\cdot\prod_{i \in N}v_i(y) \geq v_{i^*}(x)\cdot\prod_{i \in N}v_i(x),\]
    where the first inequality follows from the fact that $x$ maximizes the Nash welfare among agents in $N$, and the second follows from the fact that $y$ maximizes the Nash welfare among agents in $N \cup \{i^*\}$.

    We know from our non-degeneracy assumptions on the original public bads instance that $v_i(x) > 0$ for all $i \in N$, so we can conclude that $v_{i^*}(y) \geq v_{i^*}(x)$. Substituting the definition of $v_{i^*}$ gives

    \[\max_{a \in A}c'_{i^*}(a) - c'_{i^*}(y) \geq \max_{a \in A}c'_{i^*}(a) - c'_{i^*}(x) \Longrightarrow c'_{i^*}(x) \geq c'_{i^*}(y),\]
    the desired contradiction.
\end{proof}

\end{document}